\newtheorem{theorem}{Theorem}[section]
\newtheorem{corollary}[theorem]{Corollary}
\newtheorem{lemma}[theorem]{Lemma}
\newtheorem{definition}[theorem]{Definition}
\newtheorem{remark}[theorem]{Remark}
\numberwithin{equation}{section}
\def\vph{\varphi} \def\vep{\varepsilon}
\def\cda{{F\kern-1pt{*}G}\kern1pt}
\begin{document}
\title{Consta-dihedral Codes over Finite Fields}
\author{
Yun Fan,~ Yue Leng\par
{\small School of Mathematics and Statistics}\par\vskip-1mm
{\small Central China Normal University, Wuhan 430079, China}
}
\date{}
\maketitle

\insert\footins{\footnotesize{\it Email address}:
yfan@mail.ccnu.edu.cn (Yun Fan);
}

\begin{abstract}
It is proved in a reference %~\cite{FL20}
(Fan, Lin, IEEE TIT, vol.67, pp.5016-5025)
that the self-dual (LCD respectively) dihedral codes over
a finite field~$F$ with ${|F|=q}$
are asymptotically good if $q$ is even (odd respectively).
In this paper, we investigate the algebraic property and the asymptotic property
of conta-dihedral codes over $F$, and show that:
if $q$ is even or $4\,|\,(q-1)$,
then the self-dual consta-dihedral codes are asymptotically good;
otherwise,
%If the characteristic of the field is odd and $-1$ is not a square in the field,
the LCD consta-dihedral codes are asymptotically good.
And, with the help of a technique developed in this paper,
some errors in the reference mentioned above %~\cite{FL20}
are corrected.

\medskip
{\bf Key words}: Finite fields; dihedral codes; consta-dihedral codes;
self-dual codes; LCD codes.

\end{abstract}

\section{Introduction}

Let $F$ be a finite field with cardinality $|F|=q$,
where $q$ is a power of a prime.
Any $a=(a_{1}, \cdots, a_{n})\in{F}^n$, $a_i\in{F}$, is called a word.
The Hamming weight ${\rm w}(a)$ is defined to be the number
of such indexes $i$ that $a_i\ne 0$.
The Hamming distance between two words $a, a'\in F^n$
is defined as ${\rm d}(a,a')={\rm w}(a-a')$.
Any $\emptyset\ne C\subseteq F^n$ is called a code of length $n$ over $F$;
the words in the code are called code words.
The {\em minimal Hamming distance} ${\rm d}(C)$
%=\min_{c\ne c'\in C}d(c,c')$
is the minimum distance between distinct codewords of~$C$.
If $C$ is a linear subspace of~$F^n$, then $C$ is called a linear code,
 the {\em minimal Hamming weight} ${\rm w}(C)$
is defined to be the minimal weight of the nonzero code words of $C$,
and it is known that ${\rm w}(C)={\rm d}(C)$.
The fraction $\Delta(C)=\frac{{\rm d}(C)}{n}=\frac{{\rm w}(C)}{n}$
is called the {\em relative minimum distance} of~$C$,
and ${\rm R}(C)=\frac{\dim_{F}C}{n}$ is called the {\em rate} of $C$.
A code sequence $C_{1}, C_{2}, \cdots$ is said to be {\em asymptotically good}
if the length $n_{i}$ of $C_{i}$ goes to infinity and both
${\rm R}(C_{i})$ and $\Delta(C_{i})$ are positively bounded from below.
A class of codes is said to be {\em asymptotically good} if
there is an asymptotically good sequence of codes within the class.
The inner product of words $a=(a_1,\cdots,a_{n})$ and
$b=(b_1,\cdots,b_{n})$ is defined to be
$\langle a,b\rangle=\sum_{i=1}^{n}a_ib_i$.
Then the self-orthogonal codes, self-dual codes, LCD codes etc.
are defined as usual, e.g., cf. \cite{HP}.

Let $G$ be a finite group of order $n$. The group algebra
$FG$ %\!=\!\big\{\sum_{x\in{G}}a_{x}x\,\big|\,a_{x}\!\in\!{F}\big\}$
is the $F$-vector space with basis $G$ and equipped with the multiplication
induced by the multiplication of the group $G$.
Any element $\sum_{x\in{G}}a_{x}x\in FG$ is identified with a word
%(with coordinates indexed by $G$)
$(a_x)_{x\in G}\in F^n$.
Then any left ideal of $FG$ is called an {\em $FG$-code}.
If~$G$ is a cyclic (abelian, dihedral, resp.) group,
the $FG$-codes are called {\em cyclic} ({\em abelian, dihedral}, resp.)~{\em codes}.
Any $FG$-submodule of $FG\times FG$ is called a
{\em quasi-$FG$ code of index~$2$}.
If $G$ is cyclic (abelian, resp.),
the quasi-$FG$ codes of index $2$ are also called
{\em quasi-cyclic} ({\em quasi-abelian}, resp.) {\em codes of index~$2$}.

If $G\!=\!\langle x\,|\,x^n\!=\!1\rangle$ is a cyclic group,
then $FG$ is an $F$-algebra generated by~$x$ with a relation $x^n\!=\!1$.
For $0\!\ne\!\lambda\in F$, the $F$-algebra generated by~$x$
with the relation $x^n\!=\!\lambda$ is called a {\em constacyclic group algebra},
and its ideals are called {\em constacyclic codes}.
Next, let $G=\langle u,v\,|\,u^n\!=\!1,v^2\!=\!1, vuv^{-1}\!=\!u^{-1}\rangle$ be
a dihedral group. Then $FG$ is the $F$-algebra (non-commutative)
generated by $u,v$ with three relations $u^n\!=\!1$, $v^2\!=\!1$ and $vu\!=\!u^{-1}v$.
If we replace the relation ``$v^2\!=\!1$'' by the relation ``$v^2\!=\!-1$''
and keep the other two relations invariant,
then the obtained $F$-algebra %,  denoted by $\cda$,
is called a {\em consta-dihedral group algebra}, and its left ideals %of $\cda$
are called {\em consta-dihedral codes}
%cf.~\cite{SR}. Please
(cf. Section~\ref{Consta-dihedral group algebras} for details).

It is a long-standing open question (cf. \cite {MW06}):
are the cyclic codes over a finite field asymptotically good?
However, it is well-known long ago that, if the characteristic ${\rm char}(F)=2$,
quasi-cyclic codes of index $2$ over $F$
are asymptotically good, see \cite{CPW,C,K}.
Finite dihedral groups are near to finite cyclic groups,
because a dihedral group of order $2n$ has a normal cyclic subgroup of order $n$.
Bazzi and Mitter \cite{BM} proved that the binary dihedral codes are asymptotic good.
Afterwords, Mart\'inez-P\'erez and Willems \cite{MW} proved the
asymptotic goodness of binary self-dual quasi-cyclic codes of index $2$.
Borello and Willems \cite{BW} proved the
asymptotic goodness of such $FG$-codes that $|F|=p$ is an odd prime
and $G$ is a semidirect product of the cyclic group of order $p$
by a finite cyclic group.

For any finite field $F$, i.e., for any prime power $q$,
in the dissertation \cite{L PhD}
it has been shown that the quasi-cyclic codes of index $2$ over $F$
are asymptotically good; and, if $q$ is even or $4\,|\,(q-1)$
(i.e., $q\,{\not\equiv}\,3~({\rm mod}\;4)$),
the self-dual quasi-cyclic codes of index~$2$ over $F$ are asymptotically good.
Note that self-dual quasi-cyclic codes over $F$ of index~$2$ exist if and only if
$q\,{\not\equiv}\,3~({\rm mod}\;4)$,
cf. \cite{LS03} or \cite[Corollary~IV.5]{LF22}.
Based on Artin's primitive root conjecture, with the same assumption on $q$,
Alahmadi, \"Ozdemir and Sol\'e \cite{AOS} also proved
the asymptotic goodness of the self-dual quasi-cyclic codes of index~$2$.
Lin and Fan \cite {LF22} exhibited further that,
if $q\,{\not\equiv}\,3~({\rm mod}\;4)$,
the self-dual quasi-abelian codes (including the quasi-cyclic case) of index $2$
are asymptotically good.
Recently, Fan and Liu \cite{FL22} discussed
the {\em quasi-constacyclic codes of index $2$} and
showed that such codes are asymptotically good.

On the other hand, Fan and Lin \cite{FL20}
extend the asymptotic goodness of dihedral codes to any q-ary case;
more precisely, they proved that the self-dual dihedral codes (if $q$ is even)
and the LCD dihedral codes (if $q$ is odd)
are asymptotically good. As consequences, the asymptotic goodness
of the self-dual (if $q$ is even) and the LCD (if $q$ is odd)
quasi-cyclic codes of index $2$ are also obtained.
By the way, we observed %that
some errors in the proofs of two theorems of the reference \cite{FL20},
%an incorrect statement was cited;
as a result of the errors,
\cite[Theorem~IV.5(1)]{FL20} is false;
see Section~\ref{Remarks on} below for details.
Fortunately the issue does not affect the correctness
of the results stated above.

About the relationship between the quasi-cyclic codes of index $2$ and the dihedral codes,
any dihedral (consta-dihedral) code is a quasi-cyclic code of index $2$ in a natural way.
Alahmadi, \"Ozdemir and Sol\'e \cite{AOS} showed that,
if $q$ is even, the self-dual double circulant codes
(a family of self-dual quasi-cyclic codes of index~$2$) are dihedral codes.
Fan and Zhang \cite{FZ22} extended it and showed
a necessary and sufficient condition for a self-dual quasi-cyclic code of index~$2$
being a dihedral code (if $q$ is even)
or a consta-dihedral code (if $q$ is odd).

The research outlines lead us to focus on the consta-dihedral codes.
In this paper we investigate the algebraic property and the asymptotic property
of the consta-dihedral codes, and address the issue in the reference~\cite{FL20}.
To study the algebraic property of consta-dihedral codes,
we develop a technique to evaluate
the orthogonality of (consta-)dihedral codes by matrix computation;
and construct a class of consta-dihedral codes which possess
good algebraic properties (self-orthogonal, LCD etc.).
With the help of this technique we
address the issue in~\cite{FL20} mentioned above
and recover the correct version of the false theorem
of~\cite{FL20} (Theorem~\ref{thm correct} below).
To study the asymptotic property of consta-dihedral codes,
in the class of consta-dihedral codes we constructed,
we count the number of such codes which have bad asymptotic property;
the number is much less than the total quantity of the class,
so that we obtain the following results.

\begin{theorem}[{Theorem~\ref{thm self-dual good} below}]
Assume that $q$ is even or $4\,|\,(q-1)$.
%(i.e., $q\,{\not\equiv}\,3~({\rm mod}\;4)$),
Then the self-dual consta-dihedral codes over $F$ are asymptotically good.
In particular, self-dual quasi-cyclic codes of index $2$ over $F$
are asymptotically good.
\end{theorem}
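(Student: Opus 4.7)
The plan is to prove the theorem by a counting (probabilistic) argument inside a large, structured family of self-dual consta-dihedral codes built from the semisimple decomposition of the consta-dihedral algebra. First I would fix the length parameter: take $n$ to be a prime, varying over an infinite set, with $\gcd(n,q)=1$ and with a prescribed multiplicative order of $q$ modulo $n$ (e.g.\ $\mathrm{ord}_n(q)=n-1$, or $(n-1)/2$). Under such a choice, $F[u]/(u^n-1)$ splits as $F$ times a small number of large field extensions of $F$, and consequently the consta-dihedral algebra $A$ decomposes as a product $A_1\times\cdots\times A_k$ of ``small'' algebras, each of which is either a field or a $2\times 2$ matrix algebra over a field extension of $F$.

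Second, I would invoke the matrix orthogonality technique and the constructed class of consta-dihedral codes from the earlier sections of the paper to parametrize a large family of self-dual left ideals of $A$, component by component. The hypothesis $q$ even or $4\,|\,(q-1)$ is used exactly here: it is the arithmetic condition ensuring that $-1$ is a square (or equal to $1$) in the relevant extensions, which is what makes the required isotropic one-dimensional subspaces in the $2\times 2$ matrix components exist in sufficient abundance. This yields a family of self-dual consta-dihedral codes whose size is exponential in $n$. Next I would upper-bound the number of ``bad'' codes in this family, namely those containing a nonzero word of Hamming weight at most $\delta\cdot 2n$ for a small fixed $\delta>0$. The key ingredient is an ideal-dimension lemma: any nonzero $c$ in $A$ generates a left ideal $Ac$ whose $F$-dimension is at least $n$ (up to the contribution of the small exceptional components), so the number of self-dual codes in the family containing a given low-weight $c$ is at most $q^{-cn}$ times the total. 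Combining this with the $q$-ary volume bound $\sum_{w\le\delta\cdot 2n}\binom{2n}{w}(q-1)^w\le q^{H_q(\delta)\cdot 2n}$ and a union bound, the fraction of bad codes is strictly less than one for $\delta>0$ chosen small enough (so that $H_q(\delta)$ is below the relevant threshold). Hence good codes exist in the family for every such~$n$, producing an asymptotically good sequence.

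The concluding ``in particular'' statement is immediate from the natural identification of a consta-dihedral code with a quasi-cyclic code of index $2$ via $\sum_i a_i u^i+\sum_i b_i u^iv \leftrightarrow \bigl((a_0,\ldots,a_{n-1}),(b_0,\ldots,b_{n-1})\bigr)$, which preserves both the standard inner product and the weight. The main obstacle I anticipate is establishing the ideal-dimension lemma in the consta-dihedral setting: because the relation $v^2=-1$ breaks the symmetry present in the dihedral case ($v^2=1$), the argument of \cite{FL20} does not transfer directly, and the correct handling requires precisely the matrix-orthogonality technique developed in this paper, which is also what allows the authors to diagnose and repair the errors in \cite{FL20}. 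A secondary subtlety is to verify that the exceptional small components (the fixed points of the involution $u\leftrightarrow u^{-1}$) only contribute a negligible number of codewords and therefore do not spoil the union bound.
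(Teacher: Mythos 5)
Your overall strategy (semisimple decomposition of $\cda$, componentwise construction of self-dual ideals using the hypothesis on $q$, then a union bound over low-weight words) is the same as the paper's, but there is a genuine gap in how you choose the lengths $n$, and it propagates into the counting step. Requiring infinitely many primes $n$ with $\mathrm{ord}_n(q)=n-1$ (or $(n-1)/2$) is precisely Artin's primitive root conjecture (or an index-$2$ variant), which is open for a fixed $q$; the paper explicitly notes that the argument of Alahmadi--\"Ozdemir--Sol\'e is conditional for this reason. The unconditional substitute used here (Remark~\ref{rk n_1,...}, quoting \cite[Lemma~II.6]{FL20}) is that the primes $p$ with $\mathrm{ord}_{\mathbb{Z}_p^\times}(q)\ge(\log_q p)^2$ have density $1$, which only guarantees $\lambda(n_i)\gg\log_q n_i$, i.e.\ every matrix component $A_t\cong{\rm M}_2(F_t)$ has $k_t\ge\lambda(n)/2$ of size roughly $(\log_q n)^2$ --- far from linear in $n$.

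With that choice of $n$ your ``ideal-dimension lemma'' is simply false: a nonzero $d$ supported on a single component $A_t$ generates a left ideal of dimension at most $4k_t=O((\log_q n)^2)$, and the number of codes $\widehat C\beta$ containing it is $|K^*|/q^{k_t}$, nowhere near $|K^*|q^{-cn}$. Consequently the crude union bound ``(global volume bound $q^{2nh_q(\delta)}$) $\times$ (uniform $q^{-cn}$)'' does not close. The paper repairs this by stratifying: for each subset $\omega$ of components it bounds the number of low-weight words supported on $A_\omega=\bigoplus_{t\in\omega}A_t$ by the balanced-code estimate $|A_\omega^{\le\delta}|\le q^{4\ell h_q(\delta)}$ with $\ell=\sum_{t\in\omega}k_t$ (Lemma~\ref{q01}), multiplies by the count $|K^*|/q^{\ell}$ of bad $\beta$ per word (Lemmas~\ref{lem K_d}, \ref{lem K_hat d}), and sums over $\omega$ and $\ell$; the sum tends to $0$ exactly because $h_q(\delta)<1/4$ and $\log_q n/\lambda(n)\to 0$ (Lemmas~\ref{lem K^<delta}, \ref{lem hat K^<delta}). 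So either you make your argument conditional on Artin-type hypotheses, or you must replace both the choice of $n$ and the global volume bound by this restricted, stratified counting. Your componentwise self-dual construction and the ``in particular'' reduction to quasi-cyclic codes of index $2$ are fine and match Theorem~\ref{thm self-dual case} and Eq.~\eqref{eq FH oplus ...}.
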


\begin{theorem}[{Theorem~\ref{thm LCD good} below}]
Assume that $q$ is odd and ${4\!\nmid\!(q-1)}$.
%(i.e., $q\,{\equiv}\,3~({\rm mod}\;4)$),
Then the LCD consta-dihedral codes over $F$ are asymptotically good.
In particular, LCD quasi-cyclic codes of index $2$ over $F$
are asymptotically good.
\end{theorem}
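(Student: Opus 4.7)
The plan is to follow the probabilistic counting template of \cite{FL20,LF22} adapted to the consta-dihedral setting, powered by the matrix-computation criterion for (consta-)dihedral orthogonality developed earlier in this paper. First I would select an infinite family of odd integers $n$ coprime to $q$ for which the consta-dihedral group algebra of order $2n$ admits an explicit Wedderburn-type decomposition into simple summands, and translate the LCD condition $C\cap C^{\perp}=0$ into a componentwise condition on each summand via the matrix criterion. The hypothesis $q\equiv 3\pmod 4$ enters precisely because $-1$ is then not a square in $F$, so the sign twist $v^{2}=-1$ forbids self-dual codes (cf.\ the discussion preceding the theorem) while still allowing a large supply of LCD codes.

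Next I would carve out a parametric subclass $\mathcal{C}_{n}$ of consta-dihedral codes whose rate is bounded below by a positive constant (essentially $1/2$) and whose LCD property is automatic from the componentwise criterion; a direct parameter count should give an exponential lower bound $|\mathcal{C}_{n}|\geq q^{cn}$ for some explicit $c>0$. The core estimate is then a Gilbert--Varshamov style upper bound on the number of codes in $\mathcal{C}_{n}$ that contain some nonzero word of weight less than $\delta\cdot 2n$: for each low-weight word $a\in F^{2n}$, invariance under the consta-dihedral action forces the whole group-orbit of $a$ into any code containing $a$, imposing many linear constraints on the parameters describing $\mathcal{C}_{n}$. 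Summing over the low-weight words (whose total is bounded by $q^{H_{q}(\delta)\cdot 2n}$, with $H_{q}$ the $q$-ary entropy function) yields an upper bound that is strictly smaller than $|\mathcal{C}_{n}|$ once $\delta>0$ is chosen small enough, so for each admissible $n$ some member of $\mathcal{C}_{n}$ is LCD with $\Delta(C)\geq\delta$, proving asymptotic goodness. The quasi-cyclic corollary is then immediate: restricting the group action to the cyclic subgroup $\langle u\rangle$ realizes every consta-dihedral code as a quasi-cyclic code of index~$2$, and the LCD property is preserved under this restriction.

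The main obstacle will be making the orbit-counting step sharp. A low-weight word $a$ may have an unusually large stabilizer under the consta-dihedral group, and the simple summands on which $-1$ acts non-trivially require separate treatment from those on which it acts trivially; in the former case the constraints produced by enforcing $a\in C$ need not have the same codimension as in the dihedral case. The matrix-computation technique of this paper is designed precisely to reduce such questions to combinatorial bookkeeping over the simple components of the algebra, and I expect the argument to parallel the dihedral one in \cite{FL20}, with the sign twist $v^{2}=-1$ pairing the components differently and the entropy inequality checked case-by-case according to the Frobenius orbits of $n$-th roots of unity in the algebraic closure of $F$.
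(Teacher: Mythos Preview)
Your counting template (parametrize a large family $\mathcal{C}_n$, bound the bad codes via a $q$-entropy estimate, compare) matches the paper's Lemmas~\ref{lem K_d}--\ref{lem K^<delta} and Theorem~\ref{thm asymptotically good}, and the quasi-cyclic corollary is handled exactly as you say. But there is a genuine gap in the step where you ``carve out a parametric subclass $\mathcal{C}_n$ \dots whose LCD property is automatic from the componentwise criterion.'' You treat this as routine and locate the real difficulty in the orbit-counting. In fact the opposite is true: the counting goes through uniformly (Lemma~\ref{q01} shows consta-dihedral codes are balanced, so no stabilizer analysis is needed), while the existence of \emph{any} large LCD subclass is a delicate arithmetic constraint on $n$ that your proposal does not address.

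Concretely, in the decomposition $\cda=A_0\oplus A_1\oplus\cdots\oplus A_m$ of Theorem~\ref{Main01}, a simple component $A_t$ can arise in two ways. If $1_{A_t}=e+\bar e$ with $\bar e\ne e$ (Theorem~\ref{Main01}(1)), then Lemma~\ref{lem C_t 1=e+bar e} shows that \emph{every} simple left ideal of $A_t$ is self-orthogonal, so no choice of $C_t$ is LCD in $A_t$. If $1_{A_t}=e$ with $\bar e=e$ (Theorem~\ref{Main01}(2)), then by Lemma~\ref{lem C_t 1=e} the simple ideal $C_t\beta_t$ is LCD in $A_t$ iff $4\nmid(q^{k_t}-1)$; under your hypothesis $q\equiv 3\pmod 4$ this forces $k_t$ to be odd. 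Thus for a generic odd $n$ there may be \emph{no} LCD code of the form $C_1\oplus\cdots\oplus C_m$ at all. The paper resolves this by restricting to $n$ with $-1\in\langle q\rangle_{\mathbb{Z}_n^\times}$ (so all $\bar e_t=e_t$, killing case~(1)) and $2\,\Vert\,{\rm ord}_{\mathbb{Z}_n^\times}(q)$ (so every $k_t$ is odd); see Theorem~\ref{thm LCD case}. That infinitely many such $n$ exist with $\lambda(n)$ growing fast enough is Lemma~\ref{rk more on n_1,...}, and its proof is not formal: it combines Odoni's density theorem (the primes $p$ with ${\rm ord}_{\mathbb{Z}_p^\times}(q)$ odd have density $1/3$) with Dirichlet's theorem on primes $p\equiv 3\pmod 4$ to extract a positive-density set of primes meeting both conditions. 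This arithmetic input is the missing idea in your plan; without it the subclass $\mathcal{C}_n$ you hope to parametrize may be empty.
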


In Section~\ref {preliminaries}, some preliminaries are sketched.

In Section~\ref{Consta-dihedral group algebras},
we describe the consta-dihedral group algebras in three ways,
exhibit properties of them; and characterize the structures
of the minimal ideals of consta-dihedral group algebras.

In Section~\ref{Consta-dihedral codes},
with the structures of the consta-dihedral group algebras
characterized in the last section,
we construct a class of consta-dihedral codes,
and characterize their algebraic property precisely
(self-orthogonal, or LCD, etc.).

Section~\ref{Asymptotic property of consta-dihedral codes}
 is devoted to the study on the asymptotic property of consta-dihedral codes.
The two theorems listed above are proved in this section.

In Section~\ref{Remarks on}, we analyze the cause of the issue
in \cite{FL20} mentioned above, with the technique developed in this paper
we address the issue and recover the correct version of
\cite[Theorem IV.5(1)]{FL20}.

Finally, conclusion is made in Section~\ref{Conclusion}.

\section{Preliminaries}\label{preliminaries}

In this paper $F$ is always a finite field with $|F|=q$ which
is a power of a prime, where $|S|$ denotes the cardinality of
any set $S$. And $n>1$ is an integer.

Let $G$ be a finite group, the group algebra
$FG=\{\sum_{x\in{G}}a_{x}x\,|\,a_{x}\in{F}\}$ is the
$F$-vector space with basis $G$ and equipped with
the multiplication induced by the multiplication of the group.
So $FG$ is an $F$-algebra, called the group algebra of $G$ over $F$.
Any $\sum_{x\in{G}}a_{x}x\in{FG}$
is viewed as a word $(a_{x})_{x\in{G}}$ over $F$
with coordinates indexed by $G$.
Any left ideal $C$ of $FG$ is called a group code of $G$ over $F$.
We also say that $C$ is an $FG$-code for short.

It is an anti-automorphism of the group: $G\to G$, $g\mapsto g^{-1}$,
which induces an anti-automorphism of the group algebra:
\begin{align} \label{eq bar map}
  FG\;\longrightarrow\; FG,~~~
  \sum_{g\in G}a_{g} g \; \longmapsto \; \sum_{g\in G}a_g g^{-1}.
\end{align}
We denote $\sum_{g\in G}a_g g^{-1}=\overline{\sum_{g\in G}a_g g}$,
and call Eq.\eqref{eq bar map} the ``bar'' map of $FG$ for convenience.
So, $\overline{\overline a}=a$, {$\overline{a b}=\overline b \overline a$},
for $a,b\in FG$. It is an automorphism of $FG$
once $G$ is abelian. The following is a linear form of $FG$:
\begin{align} \label{eq linear form FG}
\sigma:~ FG\longrightarrow F,~~
\sum\limits_{g\in G}a_g g\longmapsto a_{1_G} ~~~
(\mbox{$1_G$ is the identity of $G$}).
\end{align}

For $a=\sum_{g\in G}a_gg, b=\sum_{g\in G}b_gg\in{FG}$,
the inner product $\langle a,b\rangle=\sum_{g\in G}a_gb_g$.
The following is just \cite[Lemma II.4]{FL20}.

\begin{lemma}\label{dual-C}
{\bf(1)} $\sigma(a b)=\sigma(b a)$, $\forall~a,b\in FG$.

{\bf(2)} $\langle a, b\rangle=\sigma(a \overline b)=\sigma(\overline a b)$,
          $\forall~a,b\in FG$.

{\bf(3)} $\langle d a,\,b\rangle=\langle a,\,\overline d b\rangle$,
          $\forall~a, b,d\in FG$.

{\bf(4)} If $C$ is an $FG$-code, then so is $C^\bot$.

{\bf(5)} For $FG$-codes $C$ and $D$,
 $\langle C,D\rangle=0$ if and only if $C\overline D=0$.
\end{lemma}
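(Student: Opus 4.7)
The lemma is a bundle of five standard-flavored statements about the trace form $\sigma$ and the bar involution on $FG$. I will dispatch them in the given order, since each one feeds into the next.

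For \textbf{(1)}, I would just compute: writing $a=\sum a_g g$, $b=\sum b_h h$, the coefficient of $1_G$ in $ab$ is $\sum_{gh=1_G} a_g b_h = \sum_g a_g b_{g^{-1}}$, and the symmetric computation for $ba$ gives $\sum_g b_g a_{g^{-1}}$; relabel $g\mapsto g^{-1}$ to see they agree. For \textbf{(2)}, the coefficient of $1_G$ in $a\bar b = \sum_{g,h} a_g b_h\, g h^{-1}$ comes from the pairs with $g=h$, yielding $\sum_g a_g b_g = \langle a,b\rangle$; the other equality $\langle a,b\rangle = \sigma(\bar a b)$ then follows from~(1) combined with $\sigma(a\bar b) = \sigma(\bar b a)$ and a relabeling $a\leftrightarrow b$ (using $\langle a,b\rangle = \langle b,a\rangle$).

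For \textbf{(3)}, use~(2) and the fact that the bar map is an anti-automorphism: $\langle da,b\rangle = \sigma(\overline{da}\,b) = \sigma(\bar a\bar d\, b) = \sigma(\bar a(\bar d b)) = \langle a,\bar d b\rangle$. For \textbf{(4)}, take $b\in C^{\perp}$ and $d\in FG$; for arbitrary $a\in C$, apply~(3) in the rewritten form $\langle a,db\rangle = \langle \bar d\,a,b\rangle$, and note $\bar d\,a\in C$ because $C$ is a left ideal, so the pairing vanishes, proving $db\in C^{\perp}$.

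The main technical step is \textbf{(5)}. The forward implication (which is the nontrivial one) goes as follows. By~(2), $\langle C,D\rangle=0$ is equivalent to $\sigma(C\bar D)=0$. The key observation is that $C\bar D$ is \emph{itself a left $FG$-submodule} of $FG$, because $C$ is a left ideal: $(FG)\cdot C\bar D \subseteq C\bar D$. Therefore, for every $x\in C\bar D$ and every $g\in G$, the element $g^{-1}x$ again lies in $C\bar D$, hence $\sigma(g^{-1}x)=0$. But $\sigma(g^{-1}x)$ is precisely the coefficient of $g$ in $x$, so every coefficient of $x$ vanishes and $x=0$. The reverse direction is trivial. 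I expect this last step — recognizing that the left-ideal property of $C$ forces the auxiliary set $C\bar D$ to be closed under left multiplication, so that the single linear form $\sigma$ controls all coefficients simultaneously — to be the only place where a moment's thought is needed; everything else is bookkeeping with the bar map and the definition of $\sigma$.
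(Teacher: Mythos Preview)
Your proof is correct in all five parts. The paper itself does not give a proof of this lemma at all: it simply quotes it as \cite[Lemma~II.4]{FL20} and moves on. The only place the paper argues anything resembling a proof is in Remark~\ref{rk sigma cda}, where it verifies the analogue of~(2) for the consta-dihedral algebra by a direct coefficient computation (essentially your argument for~(2)) and then asserts that the remaining parts follow similarly. Your treatment of~(5) --- observing that $C\overline D$ is a left ideal so that vanishing of $\sigma$ on it forces vanishing of every coefficient --- is the clean standard argument and is exactly what one would write; the paper does not spell this out anywhere.
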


Let $H$ be a cyclic group of odd order $n>1$ with $\gcd(n,q)=1$.
Let $e_{0}, e_{1}, \cdots, e_\ell$ be
all primitive idempotents of $FH$,
where $e_{0}=\frac{1}{n}\sum_{x\in{H}}x$.
Thus, $FH$ is a semisimple algebra, i.e.,
$FH$ is the direct sum of simple ideals as follows:
\begin{align} \label{eq FH=...}
FH=FH e_0\oplus FH e_1\oplus\cdots\oplus FH e_\ell,
\end{align}
where $FH e_i$ being a field with identity $e_i$.

Since the bar map Eq.\eqref{eq bar map} is an automorphism of $FH$ of order $2$,
it permutes the primitive idempotents $e_0,e_1,\cdots,e_\ell$ in Eq.\eqref{eq FH=...}
(note that $\overline e_0=e_0$).
By Lemma \ref{dual-C}(5) we have:
\begin{align}
\big\langle FH e_i, FHe_j\big\rangle=
\begin{cases} 0, & \mbox{if~ } e_i\ne\overline e_j;\\ F, & \mbox{if~ } e_i=\overline e_j.
\end{cases}
\end{align}

For any ring $R$ (with identity $1_R$), by $R^\times$ we denote the unit group, i.e.,
the multiplicative group of all the units (invertible elements) of $R$.
For the field $F$, $F^\times=F\backslash\{0\}$.
By $\mathbb Z_{n}$ we denote the integer residue ring modulo $n$,
hence $\mathbb Z_{n}^{\times}$ is the multiplicative group consisting of the reduced
residue classes. Then $q\in{\mathbb Z_{n}^{\times}}$ (since $\gcd(n,q)=1$).
In the multiplicative group $\mathbb Z_{n}^{\times}$,
${\rm ord}_{{\Bbb Z}_n^\times}(q)$ denotes the order of $q$,
and ${\langle q\rangle}_{\mathbb{Z}_{n}^{\times}}$ denotes
the cyclic subgroup generated by $q$.
The following facts are well-known.

\begin{lemma}\label{a^T=bar a}
Keep the notation be as above in Eq.\eqref{eq FH=...}.

{\bf (1)} {\rm (\cite[Theorem 1]{KR})}~ %For any $j>0$,
$\overline{e_{j}}=e_{j}$, $\forall j\ge 0$,
if and only if $-1\in{\langle q\rangle}_{\mathbb{Z}_{n}^{\times}}$.

{\bf (2)} {\rm (\cite[Theorem 6]{AKS})}~
$\overline{e_{j}}\ne e_{j}$, $\forall j>0$,
if and only if ${\rm ord}_{{\Bbb Z}_n^\times}(q)$ is odd.
\end{lemma}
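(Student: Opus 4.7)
The plan is to translate both statements into combinatorial conditions on $q$-cyclotomic cosets modulo $n$. Fix a generator $x$ of $H$ and a primitive $n$-th root of unity $\zeta$ in an extension of $F$; then the simple components $FHe_i$ in \eqref{eq FH=...} correspond bijectively to the $q$-cyclotomic cosets $Q_0=\{0\},Q_1,\ldots,Q_\ell\subseteq\mathbb{Z}_n$, with $e_i$ the idempotent whose evaluation at $\zeta^s$ is $1$ for $s\in Q_i$ and $0$ otherwise. Under this correspondence, the bar map $x\mapsto x^{-1}$ realises negation $s\mapsto -s$ on $\mathbb{Z}_n$, so $\overline{e_j}=e_j$ iff $-Q_j=Q_j$. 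This in turn holds iff $q^ks\equiv -s\pmod n$ for some $k$ and some (equivalently every) representative $s\in Q_j$; setting $d=n/\gcd(s,n)$, this becomes $-1\in\langle q\rangle_{\mathbb{Z}_d^\times}$. As $s$ ranges over $\mathbb{Z}_n\setminus\{0\}$, the modulus $d$ ranges over all divisors of $n$ strictly greater than $1$.

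For part~(1), I would then observe that ``$\overline{e_j}=e_j$ for all $j\ge 0$'' is equivalent to requiring $-1\in\langle q\rangle_{\mathbb{Z}_d^\times}$ for every divisor $d$ of $n$. The ``only if'' direction follows by applying this to the coset of $1\in\mathbb{Z}_n$, which forces $d=n$; the ``if'' direction is the trivial remark that a congruence $q^k\equiv -1\pmod n$ descends to every divisor. So (1) reduces to the single case $d=n$.

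For part~(2), the condition ``$\overline{e_j}\ne e_j$ for all $j>0$'' is equivalent to $-1\notin\langle q\rangle_{\mathbb{Z}_d^\times}$ for every divisor $d>1$ of $n$. The easy direction: if $\mathrm{ord}_{\mathbb{Z}_n^\times}(q)$ is odd, then the order of $q$ modulo any such $d$ also divides it and is odd; since $n$ (and hence $d\ge 3$) is odd, the element $-1\in\mathbb{Z}_d^\times$ has order~$2$ and cannot lie in the odd-order subgroup $\langle q\rangle_{\mathbb{Z}_d^\times}$.

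The main obstacle is the converse of part~(2): from $\mathrm{ord}_{\mathbb{Z}_n^\times}(q)$ being even I must exhibit a divisor $d>1$ of $n$ with $-1\in\langle q\rangle_{\mathbb{Z}_d^\times}$. I would handle this by factoring $n=\prod_i p_i^{a_i}$ into odd prime powers and invoking the CRT isomorphism $\mathbb{Z}_n^\times\cong\prod_i\mathbb{Z}_{p_i^{a_i}}^\times$. Since the order of $q$ on the left is the lcm of its orders on the right, some component $d=p_i^{a_i}$ has $q$ of even order. But $\mathbb{Z}_d^\times$ is cyclic and contains a unique element of order~$2$, namely $-1$, and every even-order subgroup of a cyclic group contains that unique element of order~$2$. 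Hence $-1\in\langle q\rangle_{\mathbb{Z}_d^\times}$, producing the desired nontrivial coset stable under negation and completing the argument.
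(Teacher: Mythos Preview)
Your argument is correct. The translation of the bar map to negation on $q$-cyclotomic cosets is standard, and your handling of both directions in each part is sound; in particular, the converse of (2) via the CRT decomposition $\mathbb{Z}_n^\times\cong\prod_i\mathbb{Z}_{p_i^{a_i}}^\times$, the lcm formula for the order of $q$, and the observation that an even-order subgroup of a cyclic group of odd prime-power modulus must contain the unique involution $-1$, is exactly the right idea and closes the loop cleanly.

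As for comparison with the paper: the paper does not prove this lemma at all. It records the two assertions as ``well-known'' and attributes them to \cite[Theorem~1]{KR} and \cite[Theorem~6]{AKS} respectively, giving no argument of its own. Your write-up therefore supplies a self-contained proof where the paper relies on citation; this is a genuine addition rather than a different route to the same destination.

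One minor remark on presentation: when you write ``every divisor $d$ of $n$'' in the discussion of part~(1), you implicitly use that the case $d=1$ is vacuous (since $\mathbb{Z}_1^\times$ is trivial). It would be marginally cleaner to say ``every divisor $d>1$ of $n$'' to match the range you established in the previous paragraph, but this does not affect correctness.
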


Let $e_0, e_1,\cdots, e_\ell$, where $e_0=\frac{1}{n}\sum_{x\in H}x$,
be the all primitive idempotents of $FH$ as in Eq.\eqref{eq FH=...}.
Denote
\begin{align}\label{eq def lambda}
 \lambda(n)=\min\big\{\dim_{F}(FHe_{1}),\cdots, \dim_{F}(FHe_{\ell})\big\}.
\end{align}
It is known (cf. \cite[Lemma II.2]{LF22}) that:

\begin{lemma} \label{lem lambda=min}
$\lambda(n)=\min\big\{{\rm ord}_{{\Bbb Z}_p^\times}(q)\,\big|\;
\mbox{$p$ runs over the prime divisors of $n$}\big\}$.
\end{lemma}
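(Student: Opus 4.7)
The plan is to translate the quantity $\lambda(n)$ into a statement about $q$-cyclotomic cosets modulo $n$, and then show that the minimum over \emph{all} nontrivial cosets collapses to the minimum over the prime divisors.

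First, I would use the standard identification $FH\cong F[x]/(x^n-1)$ and the factorization $x^n-1=\prod_{j=0}^{\ell} f_j(x)$ into $F$-irreducibles, where the factors are indexed by the $q$-cyclotomic cosets of $\mathbb{Z}_n$ under multiplication by $q$, and where $f_0(x)=x-1$ corresponds to $e_0$. The simple component $FHe_j$ is isomorphic to $F[x]/(f_j(x))$, so $\dim_F FHe_j=\deg f_j$, which in turn equals the size of the $q$-cyclotomic coset $C_j\subseteq\mathbb{Z}_n$ associated to $e_j$. For an element $a\in\mathbb{Z}_n$ with $\gcd(a,n)=n/m$ (equivalently, $a$ has additive order $m$ in $\mathbb{Z}_n$), a short verification shows that $|C_a|$ equals the least $k\ge 1$ with $q^k a\equiv a\pmod n$, which is exactly ${\rm ord}_{\mathbb{Z}_m^\times}(q)$. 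Since $e_0$ corresponds to $m=1$ and is excluded from the definition of $\lambda(n)$, this gives
\begin{align*}
\lambda(n)=\min\bigl\{{\rm ord}_{\mathbb{Z}_m^\times}(q)\,\bigm|\; m\mid n,\; m>1\bigr\}.
\end{align*}

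Next I would reduce this minimum over all nontrivial divisors of $n$ to a minimum over prime divisors of $n$. Every prime $p\mid n$ is itself such a divisor, so the minimum over primes is at least $\lambda(n)$. Conversely, given any divisor $m>1$ of $n$ with $k:={\rm ord}_{\mathbb{Z}_m^\times}(q)$, pick any prime $p\mid m$ (so $p\mid n$); from $q^k\equiv 1\pmod m$ we deduce $q^k\equiv 1\pmod p$, hence ${\rm ord}_{\mathbb{Z}_p^\times}(q)$ divides $k$, and in particular ${\rm ord}_{\mathbb{Z}_p^\times}(q)\le k$. Taking the minimum over $m$ on the right and over primes $p\mid n$ on the left yields the reverse inequality, and the lemma follows.

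The main conceptual step, and the one most likely to need a careful sentence, is the identification $|C_a|={\rm ord}_{\mathbb{Z}_m^\times}(q)$; the rest is essentially elementary number theory, with the reduction-mod-$p$ argument doing all the work in the final collapse. No sophisticated machinery beyond the semisimple decomposition \eqref{eq FH=...} (already set up in the excerpt) is needed.
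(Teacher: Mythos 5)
Your proof is correct. The paper itself gives no proof of this lemma --- it simply cites \cite[Lemma II.2]{LF22} --- and your argument is the standard route to it: identifying $\dim_F FHe_j$ with the size of the corresponding $q$-cyclotomic coset, computing that size as ${\rm ord}_{\mathbb{Z}_m^\times}(q)$ where $m$ is the additive order of a coset representative, and then collapsing the minimum over all divisors $m>1$ of $n$ to the minimum over prime divisors via reduction mod $p$; all steps check out (the separability needed for the coset--idempotent correspondence is guaranteed by $\gcd(n,q)=1$).
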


%\begin{lemma}\label{Zn}\cite[Corollary 2.8]{FL20}
%(1)~There is a sequence $n_{1}, n_{2}, \cdots$
%of positive odd integers coprime to q such that
%$-1\in{\langle q\rangle}_{\mathbb{Z}_{n_{i}}^{\times}}$
%for all $i= 1,2,\cdots$ and $\lim\limits_{i\rightarrow\infty}
%\frac{\log_{q}n_{i}}{\lambda(n_{i})}=0$.
%
%(2)~There is a sequence $n_{1}, n_{2}, \cdots$
%of positive odd integers coprime to q such that
%$ord_{{\mathbb{Z}_{n_{i}}^{\times}}}(q)$ is odd
%for all $i= 1,2,\cdots$ and $\lim\limits_{i\rightarrow \infty}
%\frac{\log_{q}n_{i}}{\lambda(n_{i})}=0$.
%\end{lemma}

%\begin{lemma}\label{q02}\cite[Lemma 2.9]{FL20}
%Let $q\geq 2$ and $k_{1}\leq k_{2}\leq \cdots\leq k_{m}$
%be positive integers.
%If $k_{1}\geq\log_{q} m$, then
%$$(q^{k_{1}}-1)(q^{k_{2}}-1)\cdots (q^{k_{m}}-1)
%\geq q^{k_{1}+k_{2}+\cdots+k_{m}-2}.$$
%\end{lemma}

For an equation of $X,Y$ over $F$, we'll need the following result.
\begin{lemma} \label{lem x^2+gxy+y^2}
Let $g\in F$ and $g\ne\pm 2$. Then the equation $X^2+gXY+Y^2=-1$
has a solution $(s,s')\in F\times F$; and, there is a solution
$(s,s')$ such that $s'=0$ if and only if
either $q$ is even or $4\,|\,(q-1)$.
\end{lemma}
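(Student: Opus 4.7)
I would prove the two assertions essentially independently, splitting the first part into the easy even characteristic case and the odd characteristic case where a pigeonhole argument is the key.

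\textbf{Existence of a solution.} If $q$ is even then $-1=1$ and the pair $(s,s')=(1,0)$ trivially satisfies $s^2+gss'+(s')^2=1=-1$, so we are done. Assume therefore that $q$ is odd. Since $g\ne\pm 2$, completing the square gives
\begin{equation*}
X^2+gXY+Y^2=\bigl(X+\tfrac{g}{2}Y\bigr)^2+D\,Y^2,\qquad D:=1-\tfrac{g^2}{4}\ne 0.
\end{equation*}
Setting $U=X+\frac{g}{2}Y$ and $V=Y$, it suffices to produce $(u,v)\in F\times F$ with $u^2+Dv^2=-1$, and then recover $(s,s')=(u-\frac{g}{2}v,\,v)$. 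Because $q$ is odd, the squaring map on $F$ is $2$-to-$1$ on $F^\times$, so the set $\{u^2:u\in F\}$ has cardinality $\frac{q+1}{2}$; likewise, since $D\ne 0$, the set $\{-1-Dv^2:v\in F\}$ has cardinality $\frac{q+1}{2}$. These are two subsets of $F$, whose sizes sum to $q+1>q=|F|$, so they must intersect, yielding the required $(u,v)$.

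\textbf{The condition for a solution with $s'=0$.} With $s'=0$ the equation reduces to $s^2=-1$, which is solvable if and only if $-1$ is a square in $F$. If $q$ is even, then $-1=1=1^2$. If $q$ is odd, the subgroup of squares in $F^\times$ has index $2$, and $-1$ lies in it if and only if $q\equiv 1\pmod 4$, i.e.\ $4\mid(q-1)$. Conversely, if $q$ is odd and $q\equiv 3\pmod 4$ then $-1$ is a non-square, so no solution with $s'=0$ exists. Combining these observations gives the stated equivalence.

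\textbf{Main obstacle.} There is no deep obstacle; the only nontrivial point is the counting step in odd characteristic, where the pigeonhole between squares and the shifted image $\{-1-Dv^2\}$ forces a common value. One must be careful that the hypothesis $g\ne\pm 2$ enters exactly to guarantee $D\ne 0$, so that $\{-1-Dv^2:v\in F\}$ genuinely has $\frac{q+1}{2}$ elements rather than collapsing to a single point.
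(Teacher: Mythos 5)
Your proof is correct and follows essentially the same route as the paper's: complete the square using $g\ne\pm 2$, then count the $\frac{q+1}{2}$ squares against the $\frac{q+1}{2}$ values of $-1-Dv^2$ and apply pigeonhole, with the $s'=0$ case reducing to whether $-1$ is a square. No issues.
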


\begin{proof}
Let ${\cal Q}=\{a^2\,|\,a\in F\}$ be a subset of $F$.
If $q$ is even, then ${\cal Q}=F$ and
the lemma holds obviously. Assume that $q$ is odd.
Then $|{\cal Q}|=\frac{q+1}{2}$, and
$$
 X^2+gXY+Y^2=\big(X+\frac{g}{2}Y\big)^2
 +\big(1-\frac{g^2}{4}\big)Y^2=X'^2+bY^2,
$$
where $X'=X+\frac{g}{2}Y$ and $b=1-\frac{g^2}{4}\ne 0$
(as $g\ne\pm 2$). Then $\big|(-1-b{\cal Q})\big|=\big|{\cal Q}\big|$, and
${\big|{\cal Q}\big|+\big|(-1-b{\cal Q})\big|}=q+1>|F|$.
So ${\cal Q}\cap (-1-b{\cal Q})\ne\emptyset$, and
there are $s',t'\in F$ such that $X' = t'$ and $Y=s'$ satisfying $t'^2=-1-bs'^2$.
Thus $X=s=t'-\frac{gs'}{2}$ and $Y=s'$ are a solution of the equation.
There is a solution $(s,s')$ such that $s'=0$
if and only if $-1$ is a square of $F$, so
the second conclusion is obvious.
\end{proof}

By ${\rm M}_2(F)$ we denote the
$F$-algebra consisting of all $F$-matrices of degree $2$.

\begin{lemma} \label{lem matrix size 2}
Let $M={\rm M}_2(F)$, and  $\vph(X)=X^2+gX+1$ be
an irreducible polynomial over $F$. Then there is a
subalgebra $E$ of $M$ such that $E\cong F[X]/\langle\vph(X)\rangle$,
hence~$E$ is an extension field over~$F$ of degree $2$; and the following hold.

{\bf(1)} For any $f\in M$ with ${\rm rank}(f)=1$, $Ef=Mf=:L$
is a simple left ideal of $M$
and, for $0\ne c\in L$, $a,b\in E^\times$, $ac=cb$ if and only if $a=b\in F^\times$.

{\bf(2)} There are altogether $q+1$ simple left ideals of $M$ as follows:
$$
 Mf, \qquad f=\begin{pmatrix} a&1\\ 0&0 \end{pmatrix}, \forall\, a\in F ;
 ~~ \mbox{or}~~ f=\begin{pmatrix} 1 & 0\\ 0&0 \end{pmatrix}.
$$

{\bf(3)} Let $L$ be a simple left ideal of $M$.
When $\beta$ runs over $E^\times$, $L\beta$ runs over the simple left ideals of $M$,
with each of them appears exactly $q-1$ times.
\end{lemma}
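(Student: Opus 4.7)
The natural candidate for $E$ is the $F$-algebra generated by a companion matrix of $\varphi$: set
\[
A=\begin{pmatrix}0&-1\\ 1&-g\end{pmatrix},\qquad E=F\cdot I+F\cdot A\subseteq M.
\]
Since $A^2+gA+I=0$ and $\varphi$ is irreducible, the substitution $X\mapsto A$ yields an $F$-algebra isomorphism $F[X]/\langle\varphi(X)\rangle\to E$, so $E$ is a field extension of $F$ of degree $2$ inside $M$. The general framework for the rest of the lemma is the standard description of simple left ideals of $M_2(F)$: a rank-$1$ matrix $f$ can be written as $uv^T$, and then $Mf=\{wv^T:w\in F^2\}$ is a $2$-dimensional left ideal whose \emph{row space} is the $1$-dimensional subspace $\mathrm{span}(v)\subseteq F^2$. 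This identifies the set of simple left ideals of $M$ with the projective line $\mathbb{P}^1(F)$, which has $q+1$ points.

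For \textbf{(1)}, I would first note that $Ef\subseteq Mf$ and that the $F$-linear map $E\to Mf$, $a\mapsto af$, is injective because $E$ is a field and $f\ne0$; hence $\dim_F Ef=2=\dim_F Mf$ and $Ef=Mf=:L$, simple since $\dim_F L$ equals the dimension of the (unique up to iso) simple $M$-module. For the equation $ac=cb$: writing $L$ as the left ideal of matrices with row space $\mathrm{span}(v^T)$, the row space of $cb$ is $\mathrm{span}(v^Tb)$, so $cb\in L$ forces $v^Tb\in\mathrm{span}(v^T)$, i.e.\ $v$ to be a left eigenvector of $b$. But an element of $E\setminus F$ has minimal polynomial $\varphi$, which is irreducible over $F$, so it has no $F$-eigenvectors; therefore $cb\in L\setminus\{0\}$ forces $b\in F$. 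Combined with $ac=cb=bc$ and the fact that $a-b\in E$ would otherwise be invertible and kill $c$, this yields $a=b\in F^\times$. The converse is trivial.

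For \textbf{(2)}, the $q+1$ matrices listed have pairwise distinct row spaces $\mathrm{span}(a,1)$ ($a\in F$) and $\mathrm{span}(1,0)$, exhausting $\mathbb{P}^1(F)$; by the parametrization above, the corresponding left ideals $Mf$ are exactly the simple left ideals of $M$.

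For \textbf{(3)}, note first that for $\beta\in E^\times\subseteq M^\times$ the right multiple $L\beta=Mf\beta$ is again of the form $M(f\beta)$ with $f\beta$ of rank $1$, hence simple. To count multiplicities, consider the map $E^\times\to\{\text{simple left ideals}\}$, $\beta\mapsto L\beta$. Two images coincide, $L\beta=L\beta'$, iff $\gamma:=\beta'\beta^{-1}$ satisfies $L\gamma=L$; by the row-space argument used in (1), this forces $v$ to be a left eigenvector of $\gamma\in E$, hence $\gamma\in F^\times$. Thus the fibers are exactly the $F^\times$-cosets in $E^\times$, each of size $q-1$, and the image has size $|E^\times|/(q-1)=q+1$, so every simple left ideal is attained exactly $q-1$ times. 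The only nontrivial point is the bookkeeping about which matrices in $E$ preserve $L$ under right multiplication; once (1) is established, (3) is almost automatic, so the main conceptual step is really the row-space/eigenvector observation used in (1).
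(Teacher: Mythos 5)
Your proposal is correct. For part (2) it uses exactly the paper's argument: the row space of a nonzero element of $Mf$ is the line spanned by the row space of $f$, the listed matrices realize all $q+1$ lines of $F\times F$, and this parametrizes the simple left ideals. The only real difference is that the paper does not prove (1) and (3) at all --- it cites them from \cite[Lemma III.6]{FL20} --- whereas you supply self-contained proofs: the companion-matrix realization of $E$, the dimension count giving $Ef=Mf$, and the key observation that an element of $E\setminus F$ has irreducible quadratic minimal polynomial and hence no eigenvector in $F^2$, which correctly forces $b\in F^\times$ in (1) and pins down the fibers of $\beta\mapsto L\beta$ as $F^\times$-cosets in (3). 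The counting $|E^\times|/(q-1)=q+1$ matching the number of simple left ideals from (2) then gives surjectivity with constant multiplicity $q-1$. Everything checks; your version is simply more self-contained than the paper's.
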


\begin{proof}
(1) and (3) have been proved in \cite[Lemma III.6]{FL20}.
We prove (2).
For $f\in M$ with ${\rm rank}(f)=1$ and $\alpha\in M$,
each row of $\alpha f$ is a linear combination of the rows of $f$.
Then the rows of $\alpha f$ with $\alpha$ running on $M$
form exactly a $1$-dimensional subspace of $F\times F$.
For the $q+1$ vectors listed in (2), i.e., $(a,1)$, $a\in F$, and $(1,0)$,
any two of them are linearly independent.
Thus each of the $q+1$ vectors generates a
$1$-dimensional subspace, and any two of the $q+1$ obtained
 $1$-dimensional subspaces are distinct.
There are altogether $q+1$ $1$-dimensional subspaces of $F\times F$.
So the $q+1$ simple left ideals listed in (2) are the all
simple left ideals of $M$.
\end{proof}

%
%By (1) and (3), for $\beta\in{E^{\times}}$,
%
%the map $L\rightarrow L\beta (c\rightarrow c\beta)$
%is an isomorphism of left M-modules. Thus,
%$L\beta$ is a simple left idea of M.
% For $\beta_{1}, \beta_{2}\in{E^{\times}}$,
%we have $L\beta_{1}=L\beta_{2}$ if and only if
% $L=L\beta_{2}\beta_{1}^{-1}$. Denote $b=\beta_{2}\beta_{1}^{-1}\in{E^{\times}}$,
% we can write $Lb = Efb$, where
% $$
% \qquad f=\begin{pmatrix} a&1\\
%	0&0 \end{pmatrix}, \forall\, a\in F ;
% ~~ \mbox{or}~~ f=\begin{pmatrix} 1 & 0\\ 0&0 \end{pmatrix}.
%$$
%It follows that $L=Lb$ if and only if $Ef=Efb$,
% there is an element $a\in{M}$ such that $af=fb$
%if and only if $a=b\in{F^{\times}}$.
% For any $\beta_{1}, \beta_{2}\in{E^{\times}}$,
%$L\beta_{1}=L\beta_{2}$ if and only if
% $\beta_{2}\beta_{1}^{-1}\in{F^{\times}}$, by (1).
% Thus, when $\beta$ runs over $E^{\times}$,
%we obtain $\frac{q^{2}-1}{q-1}=q+1$ distinct
% simple ideals $L\beta$ of M, where $L\beta = Mf\beta$.
% Since $dim_{F}E=2$,
%we get $f\beta$ having the same form with $f$, as required.

\section{Consta-dihedral group algebras}
\label{Consta-dihedral group algebras}
Keep the notation in Section \ref{preliminaries}.
In this section we characterize consta-dihedral group algebras.
From now on to the end of the paper we assume that
 (except for other explicit specified):
\begin{align} \label{eq G dihedral}
G=\langle u,v\,|\,u^n=1=v^2, vuv^{-1}=u^{-1}\rangle, \quad
  \mbox{$n>1$ is odd}, \quad\gcd(n,q)=1;
\end{align}
i.e., $G$ is the dihedral group of order $2n$; and denote
$$H=\langle u\rangle, \quad
% which is a cyclic group of order $n$, and
 T=\langle v\rangle, \quad
%$ which is a cyclic group of order $2$;
\mbox{then}~~ G=H\rtimes T.
$$
We further consider the group
$$\tilde G=\langle u,\dot v\,|\,u^n=1=\dot v^4, \dot vu\dot v^{-1}=u^{-1}\rangle=H\rtimes\tilde T,
$$
% be the finite group generated by $u,\dot v$ and the relation:
%$u^n=1$, $\dot v^4=1$ and $\dot v u\dot v^{-1}=u^{-1}$.
%In other words, $\tilde G=H\rtimes \tilde T$,
which is a semidirect product of the cyclic group $H=\langle u\rangle$ of order $n$
by the cyclic group $\tilde T=\langle\dot v\rangle$ of order $4$
with relation  $\dot vu\dot v^{-1}=u^{-1}$.
Obviously, $Z=\{1,\dot v^2\}$ is a central subgroup of $\tilde G$,
and the quotient group $\tilde G/Z\cong G$ is the usual dihedral group of order $2n$.
The $\tilde G$ is called a dicyclic group in literature, e.g., \cite{BR}.

\begin{definition}\label{def consta-dihedral alg} \rm
Let $\cda$ be the $F$-vector space with basis
\begin{align} \label{eq F*G basis}
\big\{u^i\dot v^j\,\big|\,0\le i<n,\,0\le j<2\big\}
% \{u^i\dot v^j\,|\, i=0,1,\cdots n-1 , j=0,1\}
=\{1,u,\cdots,u^{n-1}, \dot v, \;u\dot v,\cdots,u^{n-1}\dot v\}
\end{align}
and endowed with the $F$-linear multiplication
induced by the multiplication of~$\tilde G$
with identifying that $\dot v^2=-1\in F$, i.e., subject to the following relations:
\begin{align}\label{eq def F*G}
 u^n=1,\quad \dot v^2=-1, \quad \dot v u=u^{-1}\dot v.
\end{align}
The $F$-algebra
$\cda=\big\{\sum_{h\in H}a_h h +\!\sum_{h\in H} a_{h\dot v} h\dot v
 \,\big|\, a_h, a_{h\dot v}\in F\big\}$, i.e.,
$\cda=\big\{\sum_{j=0}^1\sum_{i=0}^{n-1}a_{ij}u^i\dot v^j
 \,\big|\, a_{ij}\in F\big\}$,
is called the {\em consta-dihedral group algebra},
and any left ideal of $\cda$ is called a
{\em consta-dihedral code}   (cf. \cite{SR}).
\end{definition}

In another notation, the consta-dihedral group algebra
%can be viewed as a quotient algebra of a
$$\cda =F[X,Y]\big/\langle X^n-1,\, Y^2+1,\, XYX-Y\rangle,$$
where $F[X,Y]$ is the non-commutative $F$-polynomial algebra of $X$ and $Y$,
and $\langle X^n\!-\!1, \,Y^2+1,\, XYX-Y\rangle$
is the ideal generated by $X^n\!-\!1, \, Y^2+1, \, XYX-Y$.
Therefore, $\cda $ is identified with the quotient algebra
of the group algebra $F\tilde G$ over the ideal $\langle \dot v^2+1\rangle$
generated by $\dot v^2+1$:
\begin{align} \label{eq F*G=quotient}
 \cda =F\tilde G\big/\langle \dot v^2+1\rangle.
\end{align}

\begin{remark}\rm
For any finite group  $G$,
by a general theory (\cite[p.268]{CR}), %for any finite group $G$,
a function $\alpha: G \times G \rightarrow F^{\times}$
 is called a {\em $2$-cocycle} of G if
\begin{align} \label{eq cocycle}
\alpha(g_{1},g_{2}g_{3})\alpha(g_{2},g_{3})
 =\alpha(g_{1}g_{2}, g_{3})\alpha(g_{1}, g_{2}),
  \quad\forall\, g_{1},g_2,g_3\in{G}.
\end{align}
For a $2$-cocycle $\alpha$,
the {\em twisted group algebra} of $G$ by $\alpha$, denoted by $F^{\alpha}G$,
is the $F$-vector space with basis $G$ and endowed
with the $F$-bilinear product $FG \times FG \rightarrow FG$ defined by
$$
 g_{1}\cdot g_{2}=\alpha(g_{1},g_{2})(g_{1}g_{2}),
 \qquad \forall \, g_{1}, g_{2}\in{G};
$$
(the associativity of the multiplication follows from Eq.\eqref{eq cocycle}).
Turn back to the dihedral group $G=\langle u,v\,|\,u^n=1=v^2,uvu^{-1}=u^{-1}\rangle$.
It is easy to check that the following $\gamma$ is a $2$-cocycle of $G$:
\begin{equation}\label{ui-vi}
\gamma(u^{i}v^{s}, u^{j}v^{t})=
 \begin{cases}
      -1, \ & s=t=1;\\
      1,\  & \mbox{otherwise};
 \end{cases}
  \quad 0\leq i, j< n,~0\le s,t< 2;
 \end{equation}
and the consta-dihedral group algebra $\cda$
defined above (Definition~\ref{def consta-dihedral alg})
is just the twisted group algebra $F^\gamma G$
by the above $2$-cocycle $\gamma$.
\end{remark}

\begin{remark}\label{rk bar map cda} \rm
By Eq.\eqref{eq bar map}, we have the bar map on the group algebra $F\tilde G$:
$$
F\tilde G\to F\tilde G,~~
\sum_{g\in\tilde G}a_g g \mapsto\overline{\sum_{g\in\tilde G}a_g g}
=\sum_{g\in\tilde G}a_g g^{-1}.
$$
Since $\overline{\dot v^2}=\dot v^2$ (as $\dot v^4=1$),
the ideal $\langle\dot v^2+1\rangle$ of $F\tilde G$ is invariant by the bar map
(i.e., $\overline{\langle\dot v^2+1\rangle}=\langle\dot v^2+1\rangle$),
so the bar map of $F\tilde G$ induces a transformation
(called and denoted by ``bar map'' again) of the quotient algebra
$\cda=F\tilde G/\langle \dot v^2+1\rangle$ (see Eq.\eqref{eq F*G=quotient})
as follows:
for
$\sum_{h\in H}a_h h+\!\sum_{h\in H} a_{h\dot v} h\dot v\in\cda$,
\begin{align} \label{eq bar F*G}
\overline{
 \sum\limits_{h\in H}a_h h +\!\sum\limits_{h\in H} a_{h\dot v} h\dot v}
 = \sum\limits_{h\in H}a_h h^{-1}
    +\!\sum\limits_{h\in H} a_{h\dot v} (h\dot v)^{-1} \in{\cda}.
\end{align} \nopagebreak
Obviously, $\overline{\dot v} =\dot v^{-1}= -\dot v$; and
 $\dot v\,a=\overline a\,\dot v$, $\forall\,a\in FH$.
Eq.\eqref{eq bar F*G} can be rewritten in a linear combination
of the standard basis of $\cda$ in Eq.\eqref{eq F*G basis}:
\begin{align*}
\overline{\sum\limits_{h\in H}a_h h +\!\sum\limits_{h\in H} a_{h\dot v} h\dot v}
=\sum\limits_{h\in H}a_h h^{-1} -\!\sum\limits_{h\in H} a_{h\dot v} h\dot v
\,\in\,{\cda}.
\end{align*}
%And the following two hold. (1)
Because the bar map on $\cda$ is induced by the bar map on $F\tilde G$
which is an anti-automorphism,
the bar map on $\cda$ is again an anti-automorphism:
\begin{align*}
\overline{\overline a}=a, ~~ \overline{ab}=\bar b\,\bar a, ~~~
 \forall\,a,b\in \cda.
\end{align*}
\end{remark}

\begin{remark}\label{rk sigma cda} \rm
Similarly to Eq.\eqref{eq linear form FG},
with the basis Eq.\eqref{eq F*G basis} we get the map
$$
 \sigma:~ \cda \to F,~ \sum_{0\le i<n,\, 0\le j<2}a_{ij}u^i\dot v^j\mapsto a_{00},
$$
which is a linear form of~$\cda$. And we have that:
\begin{itemize}
\item
 Lemma~\ref{dual-C} is still valid
for the consta-dihedral group algebra~$\cda$.
\end{itemize}

\noindent
The proof is similar to \cite[Lemma II.4]{FL20}.
For any elements $a,b\in{\cda}$,
\begin{align*}
a\bar b&=\sum_{0\le i<n,\, 0\le j<2}a_{ij}u^i\dot v^j
\cdot\!\sum_{0\le i'<n,\, 0\le j'<2}b_{i'j'}\overline{u^{i'}\dot v^{j'}} \\
&=\sum_{0\le i,i'<n,\, 0\le j,j'<2}a_{ij}b_{i'j'}u^i\dot v^j\dot v^{-j'}u^{-i'}.
\end{align*}
Rewriting it as a linear combination of the basis Eq.\eqref{eq F*G basis} and
picking up the coefficient of $u^0\dot v^0=1_{\tilde G}$,
 we get  
\begin{align*}
\sigma(a\bar b)=\sum_{(iji'j')} a_{ij}b_{i'j'}s_{iji'j'}, 
\end{align*}
where the sum is over the indexes $(iji'j')$ %are taken over the $i,j,i',j'$ 
satisfying the following two:

(i)~ $0\le i,i'<n$ and $0\le j,j'<2$;

(ii)~ $u^i\dot v^j\dot v^{-j'}u^{-i'}=s_{iji'j'}\cdot 1_{\tilde G}$ for an $s_{iji'j'}\in F$.

\noindent
By (i), it is easy to see that (ii) holds
 if and only if $i=i'$ and $j=j'$;
and at that case $s_{iji'j'}=1$ 
(note that (i) is necessary for the conclusion;
e.g, $\dot  v^3\cdot \dot v^{-1}=-1\cdot 1_{\tilde G}$ in $\cda$ but $3\ne 1$).
So $\sigma(a\bar b)=\sum_{j=0}^1\sum_{i=0}^{n-1}a_{ij}b_{ij}$; i.e., 
 (1) of Lemma~\ref{dual-C} holds for $\cda$.
In a similar way, (5) of Lemma~\ref{dual-C} for $\cda$ holds.
And, (2), (3) and (4) of Lemma~\ref{dual-C} for $\cda$ can be checked by (1) directly.
In particular, if $C$ is a consta-dihedral code,
then so is the orthogonal code $C^\bot$.
\end{remark}

By Definition~\ref{def consta-dihedral alg}, the cyclic group algebra $FH$ is a
commutative subalgebra of the consta-dihedral group algebra $\cda$;
and, as $FH$-modules, we have
\begin{align}\label{eq FH oplus ...}
\cda=FH \oplus FH\dot v=\big\{a+a'\dot v\,\big|\,a,a'\in FH\big\}.
\end{align}

\begin{lemma}\label{F-e0}
Let $FH=FHe_0\oplus FH e_1\oplus\cdots\oplus FHe_\ell$ as in Eq.\eqref{eq FH=...}.
Then the idempotent $e_{0}$ is central in $\cda$
and the ideal $\cda e_{0}$ is a commutative $F$-algebra of dimension $2$,
and the following hold.

{\bf(1)} If $q$ is odd and $4\nmid(q-1)$, then $\cda e_0$ is a field extension over $F$
with degree $|\cda e_0:F|=2$.

{\bf(2)} If either $q$ is even or $4\,|\,(q-1)$, then there is an element $r\in F$ such that $r^2=-1$
and $C_0=\cda(re_0+e_0\dot v)$ is an $1$-dimensional ideal of $\cda e_0$,
and $\langle C_0,C_0\rangle=0$.
\end{lemma}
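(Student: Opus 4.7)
The plan is to first pin down the internal structure of the algebra $\cda e_0$, and then read both conclusions off that structure.

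First, I would verify that $e_0$ is central in $\cda$: since $e_0=\frac{1}{n}\sum_{h\in H}h$ and $\dot v h\dot v^{-1}=h^{-1}$ while the map $h\mapsto h^{-1}$ permutes $H$, we get $\dot v e_0=e_0\dot v$, and centrality in $\cda$ follows. Next, by Eq.\eqref{eq FH oplus ...} we have $\cda e_0=FHe_0\oplus FH\dot v e_0$; since $e_0$ is the idempotent corresponding to the trivial character, $FHe_0=Fe_0$, and using centrality of $e_0$ we get $FH\dot v e_0=Fe_0\dot v$. So $\cda e_0=Fe_0\oplus Fe_0\dot v$ is 2-dimensional. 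Setting $y:=e_0\dot v$, the relation $\dot v^2=-1$ gives $y^2=-e_0$; this is commutative, so $\cda e_0\cong F[Y]/\langle Y^2+1\rangle$ as $F$-algebras via $e_0\mapsto 1$, $y\mapsto Y$.

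For (1), when $q$ is odd and $4\nmid(q-1)$, the element $-1$ is a non-square in $F$, hence $Y^2+1$ is irreducible over $F$, so $\cda e_0\cong F[Y]/\langle Y^2+1\rangle$ is a field of degree $2$ over $F$. For (2), assume $r\in F$ with $r^2=-1$ exists (which holds precisely when $q$ is even or $4\,|\,(q-1)$, as per Lemma~\ref{lem x^2+gxy+y^2}). I would then show $C_0=\cda(re_0+e_0\dot v)=\cda e_0\cdot(re_0+e_0\dot v)$ is $1$-dimensional by computing
\begin{align*}
(re_0+e_0\dot v)\cdot e_0\dot v=re_0\dot v+e_0\dot v^2=re_0\dot v-e_0=r\cdot(re_0+e_0\dot v),
\end{align*}
using $r^2=-1$. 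Combined with $(re_0+e_0\dot v)\cdot e_0=re_0+e_0\dot v$, this shows the principal ideal equals $F(re_0+e_0\dot v)$, which is $1$-dimensional.

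For the self-orthogonality $\langle C_0,C_0\rangle=0$, I invoke the extension of Lemma~\ref{dual-C}(5) to $\cda$ stated in Remark~\ref{rk sigma cda}, which reduces the claim to $C_0\,\overline{C_0}=0$. Since $\overline{e_0}=e_0$ and $\overline{\dot v}=-\dot v$ give $\overline{re_0+e_0\dot v}=re_0-e_0\dot v$, and because $\cda e_0$ is commutative so that $\overline{C_0}=(re_0-e_0\dot v)\cda e_0$, it suffices to verify the single identity
\begin{align*}
(re_0+e_0\dot v)(re_0-e_0\dot v)=r^2 e_0-(e_0\dot v)^2=-e_0-(-e_0)=0,
\end{align*}
which finishes the proof. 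The only place that feels subtle is matching the condition ``$r$ exists'' with ``$q$ is even or $4\,|\,(q-1)$'', but Lemma~\ref{lem x^2+gxy+y^2} (with $g=0$) takes care of this, so there is no real obstacle here once the algebra $\cda e_0$ has been identified with $F[Y]/\langle Y^2+1\rangle$.
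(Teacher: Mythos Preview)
Your proof is correct and follows essentially the same approach as the paper: both identify $\cda e_0$ with $F[Y]/\langle Y^2+1\rangle$ via $e_0\dot v\mapsto Y$, read off (1) from irreducibility of $Y^2+1$ when $-1$ is a nonsquare, and for (2) verify the one-dimensionality and the vanishing of $(re_0+e_0\dot v)\overline{(re_0+e_0\dot v)}$ before invoking the $\cda$-version of Lemma~\ref{dual-C}(5). The only cosmetic differences are that the paper argues the existence of $r$ directly (rather than via Lemma~\ref{lem x^2+gxy+y^2}) and checks one-dimensionality by left-multiplying with $\dot v$ in $\cda$ instead of right-multiplying with $e_0\dot v$ inside the commutative algebra $\cda e_0$.
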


\begin{proof}
Since $\overline{e_0}=e_0$,
$\dot ve_0=\overline{e_0}\,\dot v=e_0\dot v$.
So $e_0$ is a central element of $\cda$.
It is known that $\cda e_0=FH e_0\oplus FH e_0\dot v$
and $FH e_0=\{ae_0\,|\, a\in F\}\cong F$.
Thus $\cda e_0$ is a commutative $F$-algebra
with $e_0$, $e_0\dot v$ being a basis.

(1) Since the group $F^\times$ is a cyclic group
having no element of order $4$,
the polynomial $X^2+1$ is irreducible over $F$.
Because $(e_0\dot v)^2=-e_0$, we have an isomorphism
$\cda e_0\cong F[X]/\langle X^2+1\rangle$ which is
a field extension over $F$ of degree~$2$.

(2) If $q$ is even, then $-1=1$ and $r=1$ satisfies that $r^2=-1$.
If $4\,|(q-1)$, then $F^\times$ has an element $r$ of order $4$,
and so $r^2=-1$. Thus
$$
 \dot v(re_0+e_0\dot v)=re_0\dot v+\dot v e_0\dot v
 =re_0\dot v-e_0=re_0\dot v+r^2e_0=r(re_0+e_0\dot v).
$$
So $\dim_F (\cda(re_0+e_0\dot v))=1$. And
%$\overline{re_0+e_0\dot v}=re_0+\bar{\dot v} e_0=re_0- e_0\dot v$,
\begin{align*}
&(re_0+e_0\dot v)\overline{(re_0+e_0\dot v)}
=(re_0+e_0\dot v)(re_0+\bar{\dot v} e_0)\\
&=(re_0+e_0\dot v)(re_0- e_0\dot v)
=(re_0)^2-(e_0\dot v)^2 = -e_0+e_0=0.
\end{align*}
By Remark~\ref{rk sigma cda} and Lemma~\ref{dual-C}(5), $\langle C_0,C_0\rangle=0$.
\end{proof}

%Note that the group algebra $FH$ is a commutative subalgebra of $\cda $, and
%$$
%\cda =FH\oplus FH\dot v
% =\{a+a' \dot v\,|\,a, a'\in FH\}.
%$$

\begin{lemma}\label{A-M bar e not e}
Keep the notation in Eq.\eqref{eq FH oplus ...} and Eq.\eqref{eq FH=...}.
Let e be a primitive idempotent of $FH$ other than $e_0$ with $\overline e\ne e$.
Then $\tilde F:=FHe$ is a field extension over F,  $e+\overline e$
is a primitive central idempotent of $\cda$ and:

{\bf(1)} The ideal
 $\cda(e+\overline{e})
=F He\oplus FH \bar e\oplus FH e\dot v\oplus FH\bar e\dot v
\cong{\rm M}_2(\tilde F). $

{\bf(2)} With the isomorphism in (1),
if $f\in\cda(e+\overline{e})$ corresponds to the matrix
$\begin{pmatrix}a_{11}& a_{12}\\ a_{21} & a_{22}\end{pmatrix}
\in{\rm M}_2(\tilde F)$, then
$\overline f$ corresponds to the matrix
$\begin{pmatrix}a_{22}& -a_{12}\\ -a_{21} & a_{11}\end{pmatrix}$.
\end{lemma}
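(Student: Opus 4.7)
My plan is to construct the isomorphism of~(1) by hand using an explicit system of matrix units, and then read~(2) off the behaviour of bar on those units. Since $FH$ is commutative semisimple, $\tilde F=FHe$ is automatically a field. Bar is an $F$-algebra automorphism of $FH$ that permutes primitive idempotents, so $\overline{e+\overline e}=e+\overline e$; combined with commutativity of $FH$ and the relation $\dot va=\overline a\dot v$ for $a\in FH$, this gives $\dot v(e+\overline e)=(e+\overline e)\dot v$. Hence $e+\overline e$ is central in $\cda$, and the splitting $\cda=FH\oplus FH\dot v$ of Eq.\eqref{eq FH oplus ...} immediately yields the four-summand decomposition claimed in~(1).

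I then set
$$E_{11}=e,\quad E_{22}=\overline e,\quad E_{12}=e\dot v,\quad E_{21}=-\overline e\dot v.$$
Using $\dot ve=\overline e\dot v$, $\dot v\overline e=e\dot v$, $\dot v^2=-1$ and $e\overline e=0$, I verify directly that $E_{ij}E_{kl}=\delta_{jk}E_{il}$ and $E_{11}+E_{22}=e+\overline e$; the crucial instances are $(e\dot v)(\overline e\dot v)=-e$ and $(\overline e\dot v)(e\dot v)=-\overline e$. A short computation (using $e\overline e=0$ and $\dot ve=\overline e\dot v$) shows that the Peirce corner $E_{11}\cdot\cda(e+\overline e)\cdot E_{11}$ equals $FHe=\tilde F$. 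The standard Peirce argument then delivers the algebra isomorphism $\cda(e+\overline e)\cong M_2(\tilde F)$ via $f\mapsto(E_{1i}fE_{j1})_{ij}$; simplicity of $M_2(\tilde F)$ forces $e+\overline e$ to be a primitive central idempotent of $\cda$, completing~(1).

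For~(2), the centre of $\cda(e+\overline e)$ consists of elements $c+\overline c$ with $c\in FHe$, each fixed by bar, so bar commutes with multiplication by any central element; that is, bar is $\tilde F$-linear on $\cda(e+\overline e)$. The adjugate map on $M_2(\tilde F)$ is also $\tilde F$-linear, so it suffices to verify the claim on the $\tilde F$-basis $\{E_{11},E_{12},E_{21},E_{22}\}$. Using $\overline{ab}=\overline b\,\overline a$, $\overline{\dot v}=-\dot v$ and $\overline{\overline e}=e$, one computes $\overline{E_{11}}=E_{22}$, $\overline{E_{22}}=E_{11}$, $\overline{E_{12}}=-\dot v\,\overline e=-e\dot v=-E_{12}$, and $\overline{E_{21}}=\dot v\,e=\overline e\dot v=-E_{21}$, which is precisely the adjugate of each matrix unit.

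The main bookkeeping hurdle is that the identification $\tilde F=FHe$ is not preserved by bar: bar sends $FHe$ into $FH\overline e$, so one must track how the off-diagonal units $E_{12},E_{21}$ together with the relation $\dot v^2=-1$ ferry an entry $a\in FHe$ through the ideal and back; the resulting sign is exactly what produces the minus signs at the off-diagonal positions of the adjugate.
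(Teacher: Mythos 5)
Your proof is correct. It lands on essentially the same isomorphism as the paper, but the verification is organized differently: the paper writes down the explicit map Eq.\eqref{eq Corre 1} and checks multiplicativity by expanding the product of two general elements, then computes the bar image of a general element to get (2); you instead exhibit the matrix units $E_{11}=e$, $E_{22}=\bar e$, $E_{12}=e\dot v$, $E_{21}=-\bar e\dot v$, obtain the isomorphism from the Peirce decomposition (the corner $e\,\cda(e+\bar e)\,e=FHe$ does come out right), and reduce (2) to the four identities $\overline{E_{11}}=E_{22}$, $\overline{E_{22}}=E_{11}$, $\overline{E_{12}}=-E_{12}$, $\overline{E_{21}}=-E_{21}$ via the observation that both the bar map and the adjugate are $\tilde F$-linear, the $\tilde F$-action being by the bar-invariant central elements $a+\bar a$. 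That linearity reduction is the genuinely nice step: it replaces the paper's bookkeeping with general coefficients $a_{ij}$ (and the conjugation $\overline{a_{21}},\overline{a_{22}}$ appearing in the off-corner entries) by a one-line structural argument. Note that your isomorphism differs from Eq.\eqref{eq Corre 1} by conjugation by $\operatorname{diag}(1,-1)$ (your off-diagonal units carry the opposite sign), which is immaterial here since the adjugate map commutes with that conjugation, so statement (2) holds for either choice. All the individual identities you rely on, e.g.\ $(e\dot v)(\bar e\dot v)=-e$ and $(\bar e\dot v)(e\dot v)=-\bar e$, check out against the relations $\dot v a=\bar a\dot v$, $\dot v^2=-1$, $e\bar e=0$.
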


\begin{proof}
Since $e\bar{e}=0$, $e+\bar{e}$ is an idempotent of $\cda$.
For $\dot v\in{\cda}$, we have
$\dot v (e+\bar e) =\dot v e + \dot v\bar e=\bar e \dot v +e\dot v=(\bar e+e) \dot v $ .
Thus, $e+\bar e$ is a central element of~$\cda$.
So, $\cda (e+\bar e)=F He\oplus FH e\dot v \oplus FH \bar e\oplus FH\bar e\dot v$
is an ideal of $\cda$.
We first show an $\tilde F$-algebra isomorphism. %and the image of the bar map for later quotations.
Define a map:
\begin{align}\label{eq Corre 1}
\begin{array}{ccc}
{\rm M}_2(\tilde F) & \mathop{\longrightarrow}\limits^{\cong} & %F^\alpha G(e+\bar e),
F He\oplus FH e\dot v \oplus FH \bar e\oplus FH\bar e\dot v,
\\[3pt]
\begin{pmatrix}a_{11} & a_{12}\\ a_{21} & a_{22} \end{pmatrix}
& \longmapsto &
a_{11}e - a_{12}\,e\,\dot v + \overline{a_{21}}\,\bar e\,\dot v + \overline{a_{22}}\,\bar e,
\end{array}
\end{align}
which is a linear isomorphism.
For $a_{ij},b_{ij}\in \widetilde F$, $1\le i,j\le 2$,
noting that $\dot v a_{ij}=\overline{a_{ij}}\,\dot v$ and $\dot v\dot v=-1$, we have
\begin{align*}
&\big(a_{11}e - a_{12}e\dot v + \overline{a_{21}}\,\bar e\dot v
 + \overline{a_{22}}~\overline{e}\big)
\big(b_{11}e - b_{12}e\dot v + \overline{b_{21}}\,\bar e \dot v
 +\overline{b_{22}}~\overline{e})\\
&=\;
(a_{11}b_{11}+a_{12}b_{21})e - (a_{11}b_{12}+a_{12}b_{22})e\dot v
\\
&\quad + \overline{(a_{21}b_{11}+a_{22}b_{21})}~\overline e \dot v
 +\overline{(a_{21}b_{12}+a_{22}b_{22})}~\overline e .
\end{align*}
So, Eq.\eqref{eq Corre 1} is an $\tilde F$-algebra isomorphism, and (1) holds.

Next, we have the bar map image of
$a_{11}e - a_{12}e\dot v +\overline{a_{21}}\,\bar e\dot v+\overline{a_{22}}\,\bar e$
(note that $\overline{\dot v}={\dot v}^{-1}=-\dot v$) as follows:
\begin{align*}
\overline{a_{11}e - a_{12}e\dot v +
  \overline{a_{21}}\,\bar e\dot v + \overline{a_{22}}\,\bar e}
&=\overline{a_{11}}\,\bar e - \overline{\dot v}\,\overline{a_{12}}\,\bar e
  + \overline{\dot v}\,a_{21}\, e + a_{22}\, e.\\
&= a_{22}\, e +\,a_{12}\, e\,\dot v
  - \overline{a_{21}}\,\bar e\, \dot v + \overline{a_{11}}\,\bar e.
\end{align*}
Thus, this image corresponds the matrix:
\begin{align} \label{bar to matrix}
\overline{a_{11}e - a_{12}e\dot v +
  \overline{a_{21}}\,\bar e\dot v + \overline{a_{22}}\,\bar e}
 ~\longleftrightarrow~
 \begin{pmatrix}a_{22} & -a_{12}\\ -a_{21} & a_{11} \end{pmatrix}.
\end{align}
We are done.
\end{proof}

\begin{lemma}\label{A-M bar e=e}
Let e be a primitive idempotent of FH with $\bar e=e\ne e_0$.
Then $FHe$ is a field extension over F,
$e$ is a primitive central idempotent of $\cda$,
$\tilde F :=\{a\,|\,a\in{FHe},\, a=\bar a\}$
is a subfield of $FHe$ with degree $|FHe:\tilde F|=2$,
the ideal $\cda e = FHe\oplus FHe v \cong {\rm M}_{2}(\tilde F)$,
and the center ${\rm Z}(\cda e )=\tilde F$.
\end{lemma}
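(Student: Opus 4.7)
The plan is to follow the same overall structure as the proof of Lemma~\ref{A-M bar e not e}, adapted to the case $\bar e = e$. Set $K := FHe$, which is a field by the semisimple decomposition Eq.\eqref{eq FH=...}. Since $\bar e = e$, we have $\dot v e = \bar e\,\dot v = e\dot v$, so $e$ is central in $\cda$, and $\cda e = FHe \oplus FHe\,\dot v = K \oplus K\dot v$ as left $K$-modules. The bar map of $\cda$ restricts to a ring automorphism of $K$, and $\dot v$ satisfies $\dot v^2 = -e$ together with $\dot v\,a = \bar a\,\dot v$ for every $a \in K$.

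First I would show that bar acts nontrivially on $K$, so that $\tilde F = \{a \in K \mid \bar a = a\}$ is a proper subfield with $[K:\tilde F] = 2$. If bar were trivial on $K$, then $ue = u^{-1}e$, i.e.\ $(u^2 - 1)e = 0$; combined with $(ue)^n = e$ and $\gcd(2,n) = 1$ (since $n$ is odd), this forces $ue = e$, i.e.\ $(u-1)e = 0$. A direct calculation with cyclic shifts shows ${\rm ann}_{FH}(u-1) = F\cdot e_0$, whence $e \in Fe_0$ and $e = e_0$, contradicting the hypothesis. So bar has order $2$ on $K$, $[K:\tilde F] = 2$, and $\dim_{\tilde F}(\cda e) = 4$.

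Next I would compute the center. For $z = a + b\dot v \in \cda e$ with $a,b \in K$ and any $c \in K$, the relation $\dot v c = \bar c\,\dot v$ gives $cz - zc = b(c - \bar c)\dot v$. Since $K \neq \tilde F$, one can choose $c$ with $\bar c \neq c$, and centrality of $z$ forces $b = 0$. Then $z = a$ commutes with $\dot v$ iff $a = \bar a$, i.e.\ $a \in \tilde F$. Thus ${\rm Z}(\cda e) = \tilde F$.

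For simplicity and the matrix algebra identification, any nonzero two-sided ideal $I$ of $\cda e$ contains some $z = a + b\dot v \neq 0$. If $b = 0$, then $a \in I \cap K$ is a unit of $\cda e$, forcing $I = \cda e$; otherwise the commutator identity above produces $b(c-\bar c)\dot v \in I$ with $b(c-\bar c) \in K^\times$, so $\dot v \in I$ and then $-e = \dot v^2 \in I$, again giving $I = \cda e$. Hence $\cda e$ is a central simple $\tilde F$-algebra of dimension $4$. Since $\tilde F$ is finite, Wedderburn's little theorem excludes any finite noncommutative division algebra, so the Artin--Wedderburn decomposition forces $\cda e \cong {\rm M}_2(\tilde F)$, and the ideal $\cda e = FHe \oplus FHe\dot v$ admits no nontrivial central idempotents, making $e$ primitive central in $\cda$. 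The main obstacle is the clean verification of simplicity through the commutator trick; the rest is the nontriviality of bar on $K$ (which uses $e \neq e_0$ and $n$ odd crucially), dimensional bookkeeping, and the appeal to Wedderburn.
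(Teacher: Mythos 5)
Your proof is correct, but it takes a genuinely different route from the paper's. The paper argues constructively: it writes $FHe=\tilde F\oplus\tilde F(ue)$, notes that the minimal polynomial of $ue$ over $\tilde F$ is $\vph_{ue}(X)=X^2+gX+1$ with $g\ne\pm2$, uses Lemma~\ref{lem x^2+gxy+y^2} to find $s,s'\in\tilde F$ with $s^2+gss'+s'^2=-1$, and then exhibits explicit matrices $\vep,\eta,\nu$ (Eq.\eqref{E-E-V}) satisfying the same relations as $e, ue,\dot ve$, so that $e\mapsto\vep$, $ue\mapsto\eta$, $\dot ve\mapsto\nu$ is the isomorphism Eq.\eqref{eq Corre 2}. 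You instead verify abstractly that $\cda e$ is a $4$-dimensional central simple $\tilde F$-algebra (via the commutator identity $cz-zc=b(c-\bar c)\dot v$, which handles both the center and simplicity) and then invoke Wedderburn's little theorem to rule out a $4$-dimensional division algebra, forcing $\cda e\cong{\rm M}_2(\tilde F)$. Your argument is cleaner and self-contained for this lemma --- it even replaces the citation of \cite[Lemma II.3]{LF22} for $|FHe:\tilde F|=2$ with a direct computation, and it avoids Lemma~\ref{lem x^2+gxy+y^2} entirely at this stage. What it does not provide is the \emph{explicit} isomorphism Eq.\eqref{eq Corre 2}, which the paper relies on later (in Lemma~\ref{lem C_t 1=e}) to compute $f\bar f$ in matrix form and decide exactly when $\langle C_t\beta_t,C_t\beta_t\rangle=0$; so in the context of the whole paper the constructive version, and hence Lemma~\ref{lem x^2+gxy+y^2}, cannot be dispensed with.
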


\begin{proof}
Since $\dot v e =\bar e \dot v=e \dot v$ (as $e=\bar e$),
$e$ is a primitive central idempotent of $\cda$ .
The $FHe$ is a field with identity e.
Since $n>1$ is odd,
$\tilde F$ is a subfield of $FHe$
and $|FH e:\tilde F|=2$ (cf. \cite[Lemma II.3]{LF22}).
Since $FHe=\sum_{i=0}^{n-1}Fu^{i}e=\sum_{i=0}^{n-1}F(ue)^i$,
$FHe=\tilde F\oplus\tilde F(ue)$ is an extension over $\tilde F$ by the element $ue$.
And, the minimal polynomial of $ue$ over $\tilde F$ is $\vph_{ue}(X)=X^2+gX+1$,
where $\pm 2\ne g\in\tilde F$ such that
$g$ and $2$ cannot be both zero in $\tilde F$
(because $\vph_{ue}(X)$ is irreducible); cf. \cite[Lemma III.3]{FL20}.
By Lemma \ref{lem x^2+gxy+y^2},
we take $s,s'\in\tilde F$ such that $s^2+gss'+s'^2=-1$, and set
\begin{equation}\label{E-E-V}
\vep=\begin{pmatrix}
1 &  0 \\
0 &  1
\end{pmatrix},
\quad
\eta=\begin{pmatrix}
-g & 1 \\
-1&  0
\end{pmatrix},
\quad
\nu=%\begin{pmatrix} y & 0 \\ 0 & y \end{pmatrix}
\begin{pmatrix}
s & s' \\
sg+s' & -s \\
\end{pmatrix}.
\end{equation}
Then the characteristic polynomial of $\eta$ is
$\varphi_{\eta}(X)=X^{2}+gX+1=\varphi_{ue}(X)$, and
$\nu^2=-\vep$ and $\nu\eta \nu^{-1}=\eta^{-1}$.
Mapping $e\mapsto \varepsilon$, $ue\mapsto \eta$ and $\dot v e\mapsto \nu$, we get
$$\cda e=FHe\oplus FH\dot ve
=\tilde F\oplus\tilde Fue\oplus\tilde F\dot v e\oplus\tilde F u\dot ve $$
and
${\rm M}_2(\tilde F)=
\tilde F\vep \oplus \tilde F\eta\oplus \tilde F\nu\oplus\tilde F\eta\nu$,
and the following (where $a,b,c,d\in{\tilde F}$)
\begin{equation}\label{eq Corre 2}
\begin{array}{ccc}
\cda e & \longrightarrow & {\rm M}_{2}(\tilde F),\\
ae+bue+c\dot v e +du \dot v e &\longmapsto & a\varepsilon+b\eta+c\nu+d\eta\nu,
\end{array}
\end{equation}
is an algebra isomorphism. Any $ae\in\tilde F$ is mapped to $a\varepsilon$;
so $\tilde F={\rm Z}(\cda e )$.
\end{proof}

\begin{theorem}\label{Main01}
The consta-dihedral group algebra $\cda$ is an orthogonal direct sum of ideals $A_t$:
\begin{align} \label{eq cda=A_0+...}
 \cda = A_{0}\oplus A_{1} \oplus \cdots\oplus A_{m},
\end{align}
where $A_{0}=\cda e_{0}$ is described in Lemma \ref{F-e0}
and, for $1\le t\le m$, the ideal
$A_{t}\cong M_{2}(F_{t})$ with $ F_{t}$ is
a field extension over F with
$\dim_{F}F_{t} =k_{t}$ and one of the following two holds:

{\bf(1)} The identity $1_{A_t}=e+\bar e$ for a primitive idempotent $e$ of $FH$
with $\bar e\ne e$, and $A_t=\cda(e+\bar e)\cong{\rm M}_2( F_t)$
as in Eq.\eqref{eq Corre 1}, where $F_t=FHe$.

{\bf(2)} The identity  $1_{A_t}=e$ for a primitive idempotent $e$ of $FH$
with $\bar e=e\ne e_0$, and $A_t=\cda e\cong{\rm M}_2( F_t)$
as in Eq.\eqref{eq Corre 2}, where $F_t=\{a\,|\,a\in{FHe},\, a=\bar a\}$
is the subfield of the field $FHe$ with degree $|FHe:F_t|=2$.
\end{theorem}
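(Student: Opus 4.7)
The plan is to assemble the decomposition from the primitive idempotents of $FH$ by grouping those permuted by the bar map. Start with $1=e_0+e_1+\cdots+e_\ell$ as in Eq.\eqref{eq FH=...}. By Lemma~\ref{a^T=bar a} (or simply because the bar map is an automorphism of $FH$ of order $\le 2$), the map $e_i\mapsto\overline{e_i}$ permutes $\{e_0,\ldots,e_\ell\}$ and fixes $e_0$. Partition this set into bar-orbits: the singleton $\{e_0\}$; the other fixed singletons $\{e\}$ with $\overline e=e\ne e_0$; and the two-element orbits $\{e,\overline e\}$ with $\overline e\ne e$. For each orbit $O$ set $\varepsilon_O=\sum_{e\in O}e$, so $\varepsilon_O\in FH$ is an idempotent with $\overline{\varepsilon_O}=\varepsilon_O$.

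Next I would verify that each $\varepsilon_O$ is \emph{central} in $\cda$. Centrality in $FH$ is automatic; for the commutation with $\dot v$, Remark~\ref{rk bar map cda} gives $\dot v\,a=\overline{a}\,\dot v$ for every $a\in FH$, so $\overline{\varepsilon_O}=\varepsilon_O$ forces $\dot v\varepsilon_O=\varepsilon_O\dot v$. Pairwise orthogonality and $\sum_O\varepsilon_O=1$ then yield a two-sided ideal decomposition $\cda=\bigoplus_O\cda\varepsilon_O$. Label the orbits $O_0=\{e_0\},O_1,\ldots,O_m$ and set $A_t=\cda\varepsilon_{O_t}$. Identification of each factor is then a direct appeal to the structure lemmas: $A_0=\cda e_0$ is the piece of Lemma~\ref{F-e0}; for a two-element orbit $O_t=\{e,\overline e\}$, Lemma~\ref{A-M bar e not e} provides $A_t=\cda(e+\overline e)\cong{\rm M}_2(FHe)$ via Eq.\eqref{eq Corre 1}, giving case~(1) with $F_t=FHe$; for a bar-fixed singleton $O_t=\{e\}$ with $e\ne e_0$, Lemma~\ref{A-M bar e=e} provides $A_t=\cda e\cong{\rm M}_2(F_t)$ via Eq.\eqref{eq Corre 2}, giving case~(2) with $F_t$ the index-$2$ subfield. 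In both of the last two cases $A_t$ is simple, so $\varepsilon_{O_t}$ is automatically a primitive central idempotent of $\cda$.

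It remains to promote the ideal direct sum to an \emph{orthogonal} one (in the inner product sense). By Remark~\ref{rk sigma cda}, Lemma~\ref{dual-C} applies verbatim to $\cda$; in particular, $\langle A_s,A_t\rangle=0$ iff $A_s\overline{A_t}=0$. Since $\overline{\varepsilon_{O_t}}=\varepsilon_{O_t}$ is central, $\overline{A_t}=\overline{\cda\varepsilon_{O_t}}=\varepsilon_{O_t}\overline{\cda}=\cda\varepsilon_{O_t}=A_t$, and for $s\ne t$ the orthogonality $\varepsilon_{O_s}\varepsilon_{O_t}=0$ gives $A_s\overline{A_t}=A_sA_t=0$, so $\langle A_s,A_t\rangle=0$. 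The only genuine subtlety in the whole argument is the centrality step, which uses the anti-automorphism property of the bar map on $\cda$ passed through the quotient by $\langle\dot v^2+1\rangle$; once that is in hand, the theorem falls out by bookkeeping of bar-orbits and an application of the two previous lemmas.
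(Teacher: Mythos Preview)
Your proof is correct and follows essentially the same route as the paper: the decomposition is obtained by grouping the primitive idempotents of $FH$ into bar-orbits and invoking Lemmas~\ref{F-e0}, \ref{A-M bar e not e}, \ref{A-M bar e=e}, and orthogonality is verified via Lemma~\ref{dual-C}(5) by computing $A_s\overline{A_t}=0$ using $\overline{1_{A_t}}=1_{A_t}$ and $1_{A_s}1_{A_t}=0$. You spell out the centrality step and the orbit bookkeeping more explicitly than the paper does, but the argument is the same.
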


\begin{proof}
By Lemma~\ref{F-e0},  Lemma~\ref{A-M bar e not e} and Lemma~\ref{A-M bar e=e},
the direct sum in Eq.\eqref{eq cda=A_0+...}
%$\cda = A_{0}\oplus A_{1} \oplus \cdots A_{m}$
follows at once; and for $1\le t\le m$, either (1) or (2) holds.
If $0\le s\ne t\le m$,
%$1_{A_s}\cdot\overline{1_{A_t}}=1_{A_s}\cdot 1_{A_t}=0$; hence
$$
 A_s\cdot\overline{A_t}=\cda 1_{A_s}\!\cdot\overline{\cda 1_{A_t}}
=\cda 1_{A_s}\!\cdot\overline{1_{A_t}}\cda
=\cda 1_{A_s} 1_{A_t}\cda =0.
$$
By Remark~\ref{rk sigma cda} and Lemma~\ref{dual-C}(5),
$\langle A_s, A_t\rangle=0$, $0\le s\ne t\le m$. That is,
Eq.\eqref{eq cda=A_0+...} is an orthogonal decomposition.
\end{proof}

\begin{corollary} \label{cor C=C cap A+...}
Keep the notation in Theorem~\ref{Main01}.
Let $C$, $D$ be any left ideals of $\cda$. Then:

{\bf(1)} $C=C_0\oplus C_1\oplus\cdots\oplus C_m$,
where $C_t=C\cap A_t=1_{A_t}\!\cdot\! C$, $t=0,1,\cdots,m$.
(We call $C_t$ the {\em $A_t$-component} of $C$.)

{\bf(2)} Let $D=D_0\oplus D_1\oplus\cdots\oplus D_m$
be as in~(1). Then for $c=c_0+c_1+\cdots+c_m\in C$ and
$d=d_0+d_1+\cdots+d_m\in D$, the inner product
$\langle c,d\rangle=\sum_{t=0}^m\langle c_t,d_t\rangle$.

{\bf(3)} $C^\bot=C_0^{\bot_{\!A_0}}\oplus C_1^{\bot_{\!A_1}}
  \oplus\cdots\oplus C_m^{\bot_{\!A_m}}$,
where $C_t^{\bot_{\!A_t}}$, $0\le t\le m$, denotes the orthogonal
subspace of $C_t$ in $A_t$.
Both $C_t$ and $C_t^{\bot_{\!A_t}}$ are left ideals of $A_t$.

{\bf(4)} $C$ is self-orthogonal if and only if every $C_t$, $0\le t\le m$,
is self-orthogonal in $A_t$.

{\bf(5)} $C$ is LCD if and only if every $C_t$, $0\le t\le m$, is LCD in $A_t$.
\end{corollary}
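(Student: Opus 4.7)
The approach is to leverage the two inputs from Theorem~\ref{Main01}: the identity $1$ of $\cda$ decomposes as $1_{A_0}+\cdots+1_{A_m}$ with pairwise orthogonal \emph{central} idempotents, and $\langle A_s,A_t\rangle=0$ whenever $s\ne t$. Once these are in hand, every part of the corollary is formal bookkeeping, so the plan is to dispatch the five items in the order stated, each reducing to its predecessor.

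For (1), I would first note that centrality of $1_{A_t}$ inside $\cda$ makes $1_{A_t}\cdot C$ closed under left multiplication by $\cda$, hence a left ideal; it lies in $A_t$ (as $A_t$ is an ideal) and in $C$ (as $C$ is a left ideal). Conversely, any $c\in C\cap A_t$ satisfies $c=1_{A_t}c\in 1_{A_t}C$, giving $C\cap A_t=1_{A_t}C$. Summing via $1=\sum_t 1_{A_t}$ then yields $C=\sum_t C_t$, and directness is inherited from the ambient decomposition Eq.\eqref{eq cda=A_0+...}. Part~(2) is the observation that $\langle c_s,d_t\rangle=0$ whenever $s\ne t$ by Theorem~\ref{Main01}, so the bilinear expansion of $\langle c,d\rangle$ collapses to the diagonal $\sum_t\langle c_t,d_t\rangle$.

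For (3), writing $x=x_0+\cdots+x_m\in \cda$ with $x_t\in A_t$, the identity in (2) shows $x\in C^\bot$ iff $\langle x_t,c_t\rangle=0$ for every $t$ and every $c_t\in C_t$, i.e.\ iff $x_t\in C_t^{\bot_{\!A_t}}$ for all~$t$, so $C^\bot=\bigoplus_t C_t^{\bot_{\!A_t}}$. That each $C_t$ is a left ideal of $A_t$ follows from $A_tC_t\subseteq A_tC\cap A_t\subseteq C\cap A_t=C_t$; and since $C^\bot$ is a left ideal of $\cda$ by Remark~\ref{rk sigma cda}, applying~(1) to $C^\bot$ shows the same for each $C_t^{\bot_{\!A_t}}$. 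Parts~(4) and~(5) then follow at once: the inclusion $C\subseteq C^\bot$ translates componentwise to $C_t\subseteq C_t^{\bot_{\!A_t}}$ for every~$t$, and $C\cap C^\bot=\bigoplus_t(C_t\cap C_t^{\bot_{\!A_t}})$ vanishes iff each summand does.

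Since the proof is entirely formal, there is no serious obstacle; the only subtleties worth spelling out are the identification $1_{A_t}C=C\cap A_t$ (which does require that $1_{A_t}$ be central, a property supplied by Theorem~\ref{Main01}) and the fact that the orthogonal complement of the component $C_t$ must be taken inside $A_t$ rather than inside $\cda$, which is exactly what part~(2) reconciles.
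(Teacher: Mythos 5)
Your proposal is correct and follows essentially the same route as the paper: both rest on the central orthogonal idempotent decomposition $1=1_{A_0}+\cdots+1_{A_m}$ from Theorem~\ref{Main01} and the orthogonality $\langle A_s,A_t\rangle=0$ for $s\ne t$, with the five parts reduced to one another in the same order. The only cosmetic difference is that you invoke the orthogonality of the $A_t$'s directly in part~(2), whereas the paper re-expands $\langle c,d\rangle=\sigma(c\bar d)$ via Lemma~\ref{dual-C}(1); these are interchangeable.
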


\begin{proof}
(1).
By Eq.\eqref{eq cda=A_0+...}, $1=1_{A_0}+1_{A_1}+\cdots+1_{A_m}$.
Since $C_t\subseteq C$, we have
$C_0\oplus C_1\oplus\cdots\oplus C_m\subseteq C$.
On the other hand, for $c\in C$,
$$c=1\!\cdot\! c=1_{\!A_0}\!\cdot\! c
 +1_{\!A_1}\!\cdot\! c+\cdots+1_{\!A_m}\!\cdot\! c
\in (C\cap A_{0})\oplus (C\cap A_{1})\oplus\cdots \oplus (C\cap A_{m});
$$
so $C\subseteq C_0\oplus C_1\oplus\cdots\oplus C_m$.
We call $c_t=1_{\!A_t}\!\cdot\! c$ the {\em $A_t$-component} of $c$.

(2). By Remark~\ref{rk sigma cda} and Lemma~\ref{dual-C}(1), we have
\begin{equation*}
  \begin{split}
\langle c, d\rangle
&=\sigma(c \overline d)=\sigma(c_{0}\bar d_{0}+c_{1}\bar d_{1}\cdots +c_{m}\bar d_{m})\\
&=\sigma(c_{0}\bar d_{0})+\sigma(c_{1}\bar d_{1})+\cdots+\sigma(c_{m}\bar d_{m})\\
&=\langle c_0,d_0\rangle+\langle c_1,d_1\rangle+\cdots+\langle c_m,d_m\rangle.
\end{split}
\end{equation*}

(3). By~(1), %we may assume that
$C^{\bot}= C_{0}'\oplus C_{1}'\oplus \cdots \oplus C_{m}'$,
where $C_{t}'=C^\bot\!\cap\! A_{t}$.
Since $C_t=C\cap A_t\subseteq C$ and $C^{\bot}\cap A_{t}\subseteq C^\bot$,
$\langle C_t,\,C_t'\rangle\subseteq\langle C,\,C^\bot\rangle =0$.
Thus $C_t'\subseteq C_t^{\bot_{A_t}}$.
Conversely, assume that $a_t\in C_t^{\bot_{A_t}}$,
then the $A_j$-component of $a_t$ is zero provided $j\ne t$;
 for any $c=c_0+c_1+\cdots+c_m\in C$,
by the above (2), $\langle c,a_t\rangle=\langle c_t, a_t\rangle=0$;
so $a_t\in C^\bot\cap A_t=C'_t$.
We get that $C_t^{\bot_{A_t}}\subseteq C'_t$.

(4).
%we must shows that $C\subseteq C^{\bot}$ if and only if
%$C_t\subseteq C_t^{\bot_{\!A_t}}$ for $t=0, 1, \cdots, m$.
%An easy verification shows this by applying (2) and (3).
If $\langle C,C\rangle=0$, since $C_t\subseteq C$,
we have $\langle C_t,C_t\rangle\subseteq \langle C,C\rangle=0$;
i.e., $C_t$ is self-orthogonal in $A_t$.
Conversely, if $\langle C_t,C_t\rangle=0$ for all $t=0,1,\cdots,m$, then,
by the above (2), $\langle C,C\rangle=\sum_{t=0}^m\langle C_t,C_t\rangle=0$.

(5). As $(C\cap D)\cap A_t=(C\cap A_t)\cap(D\cap A_t)$, with notation in (1) we have
\begin{align} \label{eq C cap D}
C\cap D=(C_0\cap D_0)\oplus(C_1\cap D_1)\oplus\cdots\oplus(C_m\cap D_m).
\end{align}
By (3), $C^\bot\cap A_t=C_t^{\bot_{A_t}}$.
Applying Eq.\eqref{eq C cap D} to $D=C^\bot$, we get
$$C\cap C^{\bot}= (C_{0}\cap C_{0}^{\bot_{A_{0}}}) \oplus (C_{1}\cap C_{1}^{\bot_{A_{1}}})\oplus \cdots \oplus (C_{m}\cap C_{m}^{\bot_{A_{m}}}).
$$
Therefore, $C\cap C^{\bot}=\{0\}$ if and only if
$C_{t}\cap C_{t}^{\bot_{A_{t}}}=\{0\}$ for $t=0,1,\cdots,m$.
% as required.
\end{proof}

\begin{corollary}\label{Cor-k1+k2}
Keep the notation in Theorem~\ref{Main01}.

{\bf(1)} $k_{1}+k_{2}+ \cdots + k_{m}=\frac{n-1}{2}$.

{\bf(2)}  $2k_{t}\geq \lambda(n)$,  $t=1,\ldots, m$,
where $\lambda(n)$ is defined in Eq.\eqref{eq def lambda}.
\end{corollary}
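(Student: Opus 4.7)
The plan is a straightforward dimension count on both sides of the decomposition $\cda = A_0 \oplus A_1 \oplus \cdots \oplus A_m$ from Theorem~\ref{Main01}.

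For part (1), I would first read off the total dimension $\dim_F \cda = 2n$ from the standard basis in Eq.\eqref{eq F*G basis}. By Lemma~\ref{F-e0}, the component $A_0 = \cda e_0$ has $F$-dimension $2$. For $1 \le t \le m$, the isomorphism $A_t \cong {\rm M}_2(F_t)$ gives $\dim_F A_t = 4 \dim_F F_t = 4 k_t$. Equating,
$$
 2n \;=\; 2 + \sum_{t=1}^{m} 4 k_t,
$$
and solving yields $\sum_{t=1}^m k_t = \frac{n-1}{2}$.

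For part (2), fix $t \ge 1$ and split according to the two cases in Theorem~\ref{Main01}. In case~(1), the identity is $1_{A_t} = e + \bar e$ for a primitive idempotent $e \ne e_0$ of $FH$ with $\bar e \ne e$, and $F_t = FHe$. Hence $k_t = \dim_F FHe \ge \lambda(n)$ by the definition in Eq.\eqref{eq def lambda}, so $2k_t \ge 2\lambda(n) \ge \lambda(n)$. In case~(2), the identity is $1_{A_t} = e$ for a primitive idempotent $e \ne e_0$ of $FH$ with $\bar e = e$, and $F_t$ is a subfield of $FHe$ of index $2$. Thus $\dim_F FHe = 2\dim_F F_t = 2k_t$, and again $\dim_F FHe \ge \lambda(n)$ gives $2k_t \ge \lambda(n)$. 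In either case the bound holds.

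There is no real obstacle here: every ingredient (the ambient dimension $2n$, the dimension of each $A_t$, and the lower bound $\dim_F FHe_i \ge \lambda(n)$ for $i \ge 1$) is already supplied by the preceding lemmas and the definition of $\lambda(n)$. The only thing to be careful about is the factor of $2$ in case~(2) of Theorem~\ref{Main01}, where $F_t$ sits as an index-$2$ subfield of $FHe$ rather than equalling $FHe$; this is precisely why the bound is stated as $2k_t \ge \lambda(n)$ and not $k_t \ge \lambda(n)$.
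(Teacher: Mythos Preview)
Your proposal is correct and follows exactly the paper's approach: a dimension count $2n = 2 + \sum_{t=1}^m 4k_t$ for part~(1), and the definition of $\lambda(n)$ for part~(2). The paper's own proof is in fact terser---it dismisses part~(2) as ``obvious''---so your case split according to Theorem~\ref{Main01}(1)(2) spells out more than the authors did.
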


\begin{proof}
By Eq.~\eqref{eq cda=A_0+...}, we have that  $2n=\dim_{F}\cda=\sum_{t=0}^{m}\dim_{F}A_{t}$,
where $\dim_{F}A_{0}=\dim_{F}\cda e_0 =2$, see Lemma \ref{F-e0}(1).
Since $A_{t}\cong  M_{2}(F_{t})$ for $t=1,\cdots,m$, $\dim_{F}A_{t}=4k_t$,
and so $2(k_{1}+k_{2}+ \cdots + k_{m})=n-1$.
The second conclusion is obvious.
\end{proof}

\section{Consta-dihedral codes}\label{Consta-dihedral codes}

Any left ideal of the consta-dihedral algebra
$\cda$ is called a {\em consta-dihedral code} over $F$ of length $2n$,
cf.~Definition~\ref{def consta-dihedral alg}.
In this section we construct some consta-dihedral codes and
investigate their algebraic properties.

\begin{remark} \label{rk def K^x} \rm
Keep the notation in Theorem~\ref{Main01}:
$\cda = A_{0}\oplus A_{1} \oplus \cdots\oplus A_{m}$,
 where $A_{0}=\cda e_{0}$ and, for $t=1,\cdots, m$,
the ideal $A_{t}\cong {\rm M}_{2}(F_{t})$ as
described in Theorem \ref{Main01}(1) and (2),
and $\dim_{F}F_{t}=k_{t}$. %Further,
By the isomorphism $A_{t}\cong {\rm M}_{2}(F_{t})$, % $1\le t\le m$
applying Lemma~\ref{lem matrix size 2} to ${\rm M}_{2}(F_{t})$,
we get a field $K_{t}\subseteq A_{t}$
corresponding the subfield (denoted by $E$ in Lemma~\ref{lem matrix size 2})
of ${\rm M}_{2}(F_{t})$ of dimension $2$ over $F_t$.
So $\dim_{F}K_{t}=2k_{t}$ because $\dim_{F} F_{t}=k_t$.
And the following hold.

\begin{itemize}
\item[(1)]
The following is a subgroup of the multiplicative
unit group $(\cda)^\times$:
\begin{align} \label{eq K^*=...}
 K^*:=\{e_0\}\times K_1^\times\cdots\times K_m^\times,
\end{align}
where $K_t^\times=K_t \backslash \{0\}$
is the multiplicative unit group of the field $K_t$.
If $C$ is a left ideal of $A_t$ and $\beta\in K_t^\times$,
then $C\beta$ is a left ideal of $A_t$ which is isomorphic to $C$
(cf. Lemma~\ref{lem matrix size 2}).
\item[(2)]
If $1_{A_t}=e$ for a primitive idempotent $e$ of $FH$ with $\bar e=e\ne e_0$,
then we choose $K_t=FH e$,
hence $F_{t}=\tilde F =\{a\in{K_t}\,|\,\bar a =a\}={\rm Z}(A_t)$
as described in Lemma~\ref{A-M bar e=e},
where ${\rm Z}(A_t)$ denotes the center of $A_t$.
\end{itemize}
\end{remark}

\begin{lemma} \label{lem C_t 1=e+bar e}
Assume that $1_{A_t}\!=e+\bar e$ for a primitive idempotent $e$ of $FH$
with $\bar e\ne e$. Let $C_t=A_t e$ and $\beta_t\in K_t^\times$.
Then $C_t\beta_t$ is a simple left ideal of $A_t$ and
 $\langle C_t\beta_t, C_t\beta_t\rangle=0$.
\end{lemma}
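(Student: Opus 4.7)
The plan is to work inside the matrix model $A_t\cong{\rm M}_2(F_t)$ of Lemma~\ref{A-M bar e not e}, where both claims become elementary facts about $2\times 2$ matrices.

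First, under the isomorphism of Eq.\eqref{eq Corre 1}, one reads off that $e$ corresponds to the rank-one matrix $\begin{pmatrix}1&0\\0&0\end{pmatrix}$ (set $a_{11}=1$ and $a_{12}=a_{21}=a_{22}=0$) and $\bar e$ to $\begin{pmatrix}0&0\\0&1\end{pmatrix}$. So Lemma~\ref{lem matrix size 2}(1) says $C_t=A_te$ is a simple left ideal of $A_t$, and Lemma~\ref{lem matrix size 2}(3), applied with $K_t$ in the role of $E$ (cf.\ Remark~\ref{rk def K^x}), gives that $C_t\beta_t$ is again a simple left ideal for any $\beta_t\in K_t^\times$.

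For the self-orthogonality, by Remark~\ref{rk sigma cda} and Lemma~\ref{dual-C}(5) it suffices to show $(C_t\beta_t)\,\overline{(C_t\beta_t)}=0$. Since $\overline{e+\bar e}=e+\bar e$, the ideal $A_t$ is bar-invariant; hence $\overline{C_t}=\overline{A_te}=\bar e\,A_t$, and therefore
\[
(C_t\beta_t)\,\overline{(C_t\beta_t)}\;=\;A_t\cdot e\,\beta_t\overline{\beta_t}\,\bar e\cdot A_t.
\]
The decisive observation is that, by Lemma~\ref{A-M bar e not e}(2), the bar map on $A_t$ corresponds to the classical adjugate on ${\rm M}_2(F_t)$. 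The standard identity $\alpha\cdot{\rm adj}(\alpha)=\det(\alpha)\cdot I$ therefore gives $\beta_t\overline{\beta_t}=\det(\beta_t)\cdot 1_{A_t}=\det(\beta_t)(e+\bar e)$. Substituting,
\[
e\,\beta_t\overline{\beta_t}\,\bar e\;=\;\det(\beta_t)\,e\bar e\;=\;0,
\]
since $e$ and $\bar e$ are distinct (orthogonal) primitive idempotents of $FH$, so the whole product vanishes.

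I anticipate no real difficulty: the only substantive step is identifying the bar map with the adjugate, which is already packaged in Lemma~\ref{A-M bar e not e}(2); the rest is bookkeeping with the orthogonal idempotents $e$ and $\bar e$ inside the bar-invariant ideal $A_t$.
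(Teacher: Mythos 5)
Your proof is correct, and for the self-orthogonality it takes a genuinely different (and cleaner) route than the paper. The paper's proof first disposes of the case $C_t\beta_t=C_t$ via $e\bar e=0$, and otherwise invokes the classification of simple left ideals (Lemma~\ref{lem matrix size 2}(2)) to write $C_t\beta_t=Mf$ with $f=\left(\begin{smallmatrix}a&1\\0&0\end{smallmatrix}\right)$, then verifies $f\,\overline{f}=0$ by an explicit $2\times 2$ multiplication using Eq.~\eqref{bar to matrix}. You instead extract from Lemma~\ref{A-M bar e not e}(2) the conceptual fact that the bar map is the classical adjugate on ${\rm M}_2(F_t)$, so $\beta_t\overline{\beta_t}$ corresponds to the scalar matrix $\det(\beta_t)I$ and is therefore central in $A_t$; then $e\,\beta_t\overline{\beta_t}\,\bar e=\beta_t\overline{\beta_t}\,e\bar e=0$ handles all $\beta_t$ uniformly, with no case split and no appeal to the list of simple left ideals. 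What this buys is exactly the kind of ``orthogonality by matrix computation'' technique the paper advertises, in a form that would also transfer to any unit of $A_t$, not just elements of $K_t^\times$. One small notational caveat: $\det(\beta_t)$ lies in $F_t=FHe$, and under Eq.~\eqref{eq Corre 1} the scalar matrix $cI$ corresponds to $ce+\bar c\,\bar e$ rather than to the product $c\cdot(e+\bar e)$ computed inside $\cda$ (the latter equals $ce$ since $c\in FHe$ and $e\bar e=0$). This does not affect your argument: the only property you use is that $\beta_t\overline{\beta_t}$ is central (equivalently, lies in $FHe\oplus FH\bar e\subseteq FH$ and so commutes past $e$), after which the factor $e\bar e=0$ finishes the proof.
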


\begin{proof}
Let $M:={\rm M}_2(F_t)$, where $F_t=FHe$, see Theorem \ref{Main01}(1).
By Eq.\eqref{eq Corre 1},
$C_t=A_t e$ corresponds to $M\!\cdot\!\begin{pmatrix}1&0\\ 0&0\end{pmatrix}$,
which is a simple left ideal of $M$.
Hence $C_t$ and $C_t\beta_t$ are simple left ideal of $A_t$.
If $C_t\beta_t=C_t$, then
 $C_t\beta_t\cdot\overline{C_t\beta_t}=A_t e\overline{A_t e}=A_t e\overline{e}A_t=0$.
By Remark~\ref{rk sigma cda} and Lemma~\ref{dual-C}(5),
$\langle C_t\beta,C_t\beta\rangle=0$.
In the following, we assume that $C_t\beta_t\ne C_t$.
By Lemma~\ref{A-M bar e not e} and Lemma~\ref{lem matrix size 2},
the simple left ideal $C_t\beta_t$
corresponds the simple left ideal $Mf$ of $M$ as follows
$$
 C_t\beta_t ~\longleftrightarrow ~ Mf, \qquad
f=\begin{pmatrix} a& 1\\0&0\end{pmatrix},~~ a\in F_t.
$$
Let $f'\in A_t$ correspond $f\in M$ by the isomorphism Eq.\eqref{eq Corre 1}.
Then $C_t\beta_t=A_tf'$.
By Lemma~\ref{A-M bar e not e}(2) (cf. Eq.\eqref{bar to matrix}),
$\overline{f'}$ corresponds the matrix
$\begin{pmatrix} 0 & -1\\0&a\end{pmatrix}$.
Because
$$
\begin{pmatrix} a& 1\\0&0\end{pmatrix}
\begin{pmatrix} 0 & -1\\0&a\end{pmatrix}
=\begin{pmatrix}  0 & 0\\0&0\end{pmatrix},
~~\mbox{hence}~~ f'\!\cdot\!\overline{f'}=0.
$$
So $C_t\beta_t\cdot\overline{C_t\beta_t}=A_tf'\!\cdot\!\overline{f'}A_t=0$,
i.e., $\langle C_t\beta_t, C_t\beta_t\rangle=0$.
\end{proof}

\begin{lemma} \label{lem C_t 1=e}
Assume that $1_{\!A_t}\!=e$ for a primitive idempotent $e$ of $FH$
with $\bar e=e\ne e_0$,
and set $C_t=A_tf$ where $f=se-s'ue+\dot v e\in A_t$ with
$s,s'\in F_t$ as in Eq.\eqref{E-E-V}.
Then $C_t$ is a simple left ideal of $A_t$, and for any $\beta_t\in K_t^\times$,
\begin{align} \label{eq <C_t,c_t>}
 \langle C_t\beta_t,C_t\beta_t\rangle=\begin{cases}
   0, & \mbox{if $q$ is even or $4\,\big|\,(q^{k_t}-1)$;}\\
   F, &\mbox{otherwise.}\end{cases}
\end{align}
\end{lemma}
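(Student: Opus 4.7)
Plan: I work inside the matrix realization $A_t\cong\mathrm{M}_2(F_t)$ of Eq.\eqref{eq Corre 2}. Summing $s\varepsilon-s'\eta+\nu$ entry-wise using Eq.\eqref{E-E-V} gives
\[
f \;\longleftrightarrow\; \begin{pmatrix} 2s+s'g & 0 \\ sg+2s' & 0 \end{pmatrix}.
\]
To see $f\ne 0$, I would rule out the simultaneous vanishing of the two first-column entries: together with $g\ne\pm 2$ this would force $s=s'=0$, contradicting $s^2+gss'+s'^2=-1$. Hence $f$ has rank exactly~$1$, so Lemma~\ref{lem matrix size 2}(1) makes $C_t=A_tf$ a simple left ideal of $A_t$; and by Remark~\ref{rk def K^x}(1) the same holds for $C_t\beta_t$ for any $\beta_t\in K_t^\times$.

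For the inner product, Remark~\ref{rk sigma cda} together with Lemma~\ref{dual-C}(5) converts $\langle C_t\beta_t,C_t\beta_t\rangle=0$ into the algebraic condition $(C_t\beta_t)\overline{(C_t\beta_t)}=0$. Since the bar is an anti-automorphism and $A_t$ is bar-invariant, $\overline{A_tf\beta_t}=\bar\beta_t\bar f A_t$, whence
\[
(C_t\beta_t)\overline{(C_t\beta_t)} \;=\; A_t\,f(\beta_t\bar\beta_t)\bar f\,A_t \;=\; \mu\cdot A_t f\bar f A_t,
\]
where $\mu:=\beta_t\bar\beta_t$. The identity $\bar\mu=\mu$ (immediate from $\overline{ab}=\bar b\,\bar a$) forces $\mu\in F_t=\mathrm{Z}(A_t)$; as $\mu\ne 0$, the vanishing reduces to whether $f\bar f=0$ inside $A_t$.

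Applying Lemma~\ref{lem x^2+gxy+y^2} over $F_t$ now supplies the dichotomy. When $q$ is even or $4\,|\,(q^{k_t}-1)$, I would pick a solution with $s'=0$ and $s^2=-1$; a direct expansion, using $(\dot v e)^2=-e$ and $s$ being central, gives $f\bar f=(se+\dot v e)(se-\dot v e)=s^2 e-(\dot v e)^2=(s^2+1)e=0$, yielding the first line of Eq.\eqref{eq <C_t,c_t>}. In the remaining case every solution has $s'\ne 0$, and the goal is $\langle C_t\beta_t,C_t\beta_t\rangle=F$, i.e., the pairing is not identically zero. This second case will be the main obstacle: my plan would be to exhibit specific $c_1,c_2\in C_t\beta_t$ whose pairing $\sigma(c_1\bar c_2)\in F$ can be shown to be nonzero by expanding directly in the $\cda$-basis of Eq.\eqref{eq F*G basis}, so that the forced non-vanishing of $s'$ produces a surviving contribution that the $F$-linear form $\sigma$ does not annihilate.
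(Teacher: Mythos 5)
Your reduction is the same as the paper's up to the decisive point: the matrix of $f$ is $\begin{pmatrix}2s+gs'&0\\ gs+2s'&0\end{pmatrix}$, the nonvanishing of its first column follows from $\det\begin{pmatrix}2&g\\ g&2\end{pmatrix}=4-g^2\ne0$, so $C_t=A_tf$ is simple; and since $\mu=\beta_t\bar\beta_t$ is a nonzero element of $F_t={\rm Z}(A_t)$, the orthogonality question collapses to whether $f\bar f=0$. Your treatment of the case $s'=0$ is also correct. But the other half of the dichotomy --- showing that $\langle C_t\beta_t,C_t\beta_t\rangle=F$ when $q$ is odd and $4\nmid(q^{k_t}-1)$ --- is left as a plan (``exhibit specific $c_1,c_2$ \dots by expanding directly''), and this is the genuine gap: the whole content of the second case is precisely to prove that $f\bar f\ne0$ whenever $s'\ne0$, and no argument for that is given. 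Note also that your proposed detour through $\sigma(c_1\bar c_2)$ is unnecessary once you have reduced to $f\bar f$; what is needed is simply the general computation of $f\bar f$.

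The paper closes this by expanding
\begin{align*}
f\bar f=(se-s'ue+\dot ve)(se-s'u^{-1}e-\dot ve)
=(s^2+s'^2+1)e-ss'(ue+u^{-1}e)+s'(ue-u^{-1}e)\dot ve ,
\end{align*}
observing via Eq.\eqref{eq Corre 2} that $ue+u^{-1}e$ corresponds to $-g\vep$, so the relation $s^2+gss'+s'^2=-1$ kills the first two terms and leaves $f\bar f=s'(ue-u^{-1}e)\dot ve$, where $ue-u^{-1}e$ corresponds to an invertible matrix (determinant $4-g^2\ne0$) and $\dot ve$ is invertible. Hence $f\bar f=0$ if and only if $s'=0$, and Lemma~\ref{lem x^2+gxy+y^2} applied over $F_t$ (with $|F_t|=q^{k_t}$) then yields exactly the dichotomy of Eq.\eqref{eq <C_t,c_t>}. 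You should supply this computation (or an equivalent one) to make the second case an actual proof rather than an intention.
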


\begin{proof}
Note that in this case we choose $K_t=FHe$,
$F_t=\{a\,|\,a\in K_t, \overline a=a\}={\rm Z}(A_t)$ with $|K_t:F_t|=2$,
see Theorem \ref{Main01}(2) and Remark~\ref{rk def K^x}(2).
By Eq.\eqref{E-E-V} and its notation, $g,s,s'\in F_t$ satisfy that
 $g$ and $2$ are not both zero,
$g\ne\pm 2$ which implies that
$\det\begin{pmatrix} 2 & g\\ g & 2 \end{pmatrix}\ne 0$,
and $s^2+gss'+s'^2=-1$.
So the matrix
$$
 s\vep-s'\eta+\nu =\begin{pmatrix}2s+gs' & 0\\ gs+2s' &0 \end{pmatrix}
% ~\in\,{\rm M}_2(F_t)
%,\quad\mbox{and}\quad
 %,\quad s^2+gss'+s'^2=-1.
$$
has non-zero first column,
hence ${\rm rank}(s\vep-s'\eta+\nu)=1$.
By Lemma \ref{lem matrix size 2}(1),
${\rm M}_2(F_t)(s\vep-s'\eta+\nu)$ is a simple left ideal of
${\rm M}_2(F_t)$. By the isomorphism Eq.\eqref{eq Corre 2},
 $s\vep-s'\eta+\nu$ corresponds the element
$f=se-s'ue+\dot v e\in A_t$, and $C_t=A_tf$
is a simple left ideal of $A_t$.
For $\beta_t\in K_t^\times$,
obviously, $\overline{\beta_t\bar\beta_t}=\beta_t\bar\beta_t$;
so $\beta_t\bar\beta_t\in F_t={\rm Z}(A_t)$. Then we have
$$
C_t\beta_t\cdot \overline {C_t\beta_t}=A_tf\beta_t\cdot\overline{A_tf\beta_t}
=A_t \,f\beta_t\,\overline \beta_t\,\overline f \,\overline A_t
=A_t \,f\,\overline f\,\beta_t\,\overline \beta_t A_t.
$$
Thus $C_t\beta\cdot \overline {C_t\beta}=0$
if and only if $f\,\overline f=0$.
Since $\overline u=u^{-1}$ and $\overline{\dot v}=-\dot v$, we get
\begin{align*}
f\bar f
&= (se-s'ue+\dot v e)\!\cdot\!\overline{(se-s'ue+\dot v e)}\\
&=(se - s'u e + \dot v e)\!\cdot\! (se-s'u^{-1}e- \dot v e)\\
&=(s^2+s'^2+1)e-ss'(u^{-1}e+ue)+s'(ue-u^{-1}e)\dot ve.
\end{align*}
By Eq.\eqref{eq Corre 2}, $u^{-1}e$ corresponds the matrix
$$
\eta^{-1}=\begin{pmatrix}-g & 1 \\ -1&  0 \end{pmatrix}^{-1}
=\begin{pmatrix} 0 & -1 \\ 1&  -g \end{pmatrix}.
$$
So we have the following correspondence:
$$
u^{-1}e+ue~\longleftrightarrow~ \begin{pmatrix} -g & 0 \\ 0 & -g \end{pmatrix},
\qquad
ue-u^{-1}e~\longleftrightarrow~ \begin{pmatrix} -g & 2 \\ -2 & g \end{pmatrix}.
$$
Thus $ue-u^{-1}e$ is invertible because
$\det\begin{pmatrix} g & -2 \\ 2 & -g \end{pmatrix}=4-g^2\ne 0$.
%($g\ne\pm 2$ by the above condition (C)).
And $(s^2+s'^2+1)e-ss'(u^{-1}e+ue)$ corresponds to the matrix
$$
(s^2+s'^2+1)\begin{pmatrix} 1 & 0 \\ 0 & 1 \end{pmatrix}
-ss'\begin{pmatrix} -g & 0 \\ 0 & -g \end{pmatrix}.
$$
Because $s^2+s'^2+1+gss'=0$, we get $(s^2+s'^2+1)e-ss'(u^{-1}e+ue)=0$.
 Therefore
\begin{align} \label{eq f bar f}
f\,\bar f=s'(ue-u^{-1}e)\dot ve, \qquad
ue\!-\!u^{-\!1}e,\, \dot ve\,\in\,A_t^\times.
%\mbox{with both $u^{-1}e-ue$ and $\dot ve$ inverttible in $A_t$}.
\end{align}
It follows that $f\,\bar f=0$ if and only if $s'=0$.
As $s,s'\in F_t$ and $|F_t|=q^{k_t}$,
Eq.\eqref{eq <C_t,c_t>} follows from Lemma \ref{lem x^2+gxy+y^2}.
\end{proof}

\begin{remark} \label{rk notation Z C ...} \rm
In the following we always fix:

--- $C_t=A_te$ as in Lemma~\ref{lem C_t 1=e+bar e}
if it is the case of Theorem \ref{Main01}(1);

--- $C_t=A_tf$ as in Lemma~\ref{lem C_t 1=e}
if it is the case of Theorem \ref{Main01}(2);

\noindent
and consider the following consta-dihedral code
\begin{align}\label{eq.C-O}
 C=C_1\oplus\cdots\oplus C_m.
\end{align}
We have the following:

(1)~ For $C$ in Eq.(\ref{eq.C-O})
the rate ${\rm R}(C)=\frac{1}{2}-\frac{1}{2n}$,
because
\begin{align} \label{eq dim C_1+...+C_m}
\dim_F C=\sum_{t=1}^{m}\dim_F C_t=2\sum_{t=1}^{m}k_t = n-1.
\end{align}

(2)~
By $C_t^{\bot_{\!A_t}}$ we denote the orthogonal
submodule (left ideal) of $C_t$ in $A_t$ (see Corollary~\ref{cor C=C cap A+...}(3)), and
so $C_t\cap C_t^{\bot_{\!A_t}}$ is still a left ideal of $A_t$.
Since $C_t$ is simple,  $C_t\cap C_t^{\bot_{\!A_t}}$ is either $C_t$ or $0$.
Hence,
\begin{align}\label{C-cap}
C_t\cap C_t^{\bot_{\!A_t}}=
\begin{cases} C_t, & \langle C_t,C_t\rangle=0;\\ 0, & \mbox{otherwise.} \end{cases}
\end{align}
\end{remark}

\begin{lemma} \label{lem LCD case}
Assume that $q$ is odd and $4\nmid(q-1)$.
Assume that there are $m'$ indexes $1\le i_1<\cdots<i_{m'}\le m$
such that, for $t=1,\cdots,m'$,
$e_{i_t}=\overline e_{i_t}$ and $k_{i_t}$ is odd.
Let $C'=C_{i_1}\oplus\cdots\oplus C_{i_{m'}}$
and $\widehat C'=A_0\oplus C'$. Then for any $\beta\in K^*$
both $C'\beta$ and $\widehat C'\beta$ are LCD consta-dihedral codes.
\end{lemma}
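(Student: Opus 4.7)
The plan is to invoke Corollary~\ref{cor C=C cap A+...}(5), which reduces the LCD property of a consta-dihedral code to the LCD property of each of its $A_t$-components inside $A_t$. Writing $\beta=e_0+\beta_1+\cdots+\beta_m$ with $\beta_t\in K_t^\times$, and using that $A_tA_{t'}=0$ for $t'\ne t$, the $A_t$-components of $\widehat{C}'\beta$ are $A_0$ for $t=0$, $C_{i_s}\beta_{i_s}$ for $t=i_s$ ($s=1,\dots,m'$), and $0$ for the remaining $t$; the components of $C'\beta$ are the same except that the $A_0$-component is $0$. Zero components are trivially LCD in the corresponding $A_t$, so only the $A_0$-component and the $C_{i_s}\beta_{i_s}$-components require checking.

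For the $A_0$-component I would show that the restriction of the inner product to $A_0$ is non-degenerate, i.e.\ $A_0^{\bot_{\!A_0}}=0$, from which $A_0\cap A_0^{\bot_{\!A_0}}=0$ is immediate. By Lemma~\ref{F-e0}, $A_0=\cda e_0$ has basis $\{e_0,e_0\dot v\}$; a short calculation based on $\bar e_0=e_0$, $\overline{\dot v}=-\dot v$ and $\dot v^2=-1$ (using $\langle x,y\rangle=\sigma(x\bar y)$) produces the Gram matrix $\frac{1}{n}I_2$, which is non-degenerate. So $A_0$ is LCD as a left ideal of itself.

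For each $C_{i_s}\beta_{i_s}$: by Remark~\ref{rk def K^x}(1), $C_{i_s}\beta_{i_s}$ is again a simple left ideal of $A_{i_s}$, so the simple-ideal dichotomy behind Eq.~(\ref{C-cap}) applies and $C_{i_s}\beta_{i_s}\cap(C_{i_s}\beta_{i_s})^{\bot_{\!A_{i_s}}}$ is either $C_{i_s}\beta_{i_s}$ (when self-orthogonal) or $0$. Since $e_{i_s}=\bar e_{i_s}\ne e_0$, Theorem~\ref{Main01}(2) puts us in the setting of Lemma~\ref{lem C_t 1=e}, which gives $\langle C_{i_s}\beta_{i_s},C_{i_s}\beta_{i_s}\rangle=0$ if and only if $q$ is even or $4\mid(q^{k_{i_s}}-1)$. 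Under the hypothesis $q$ is odd and $q\equiv 3\pmod 4$, so for odd $k_{i_s}$ one has $q^{k_{i_s}}\equiv 3\pmod 4$, ruling out both conditions. Hence $C_{i_s}\beta_{i_s}$ is not self-orthogonal and is therefore LCD in $A_{i_s}$.

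Assembling these componentwise LCD statements through Corollary~\ref{cor C=C cap A+...}(5) yields the LCD property of both $C'\beta$ and $\widehat{C}'\beta$. There is no real obstacle: the heavy lifting has already been carried out in Lemma~\ref{lem C_t 1=e}, and the only new ingredient is the elementary observation that $q\equiv 3\pmod 4$ combined with $k_{i_t}$ odd forces $q^{k_{i_t}}\equiv 3\pmod 4$, which is precisely the mechanism converting the hypothesis ``$k_{i_t}$ odd'' into non-self-orthogonality of the chosen simple components.
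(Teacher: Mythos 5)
Your proposal is correct and follows essentially the same route as the paper's proof: reduce to components via Corollary~\ref{cor C=C cap A+...}(5), use Lemma~\ref{lem C_t 1=e} together with the observation that $q\equiv 3\pmod 4$ and $k_{i_t}$ odd force $4\nmid(q^{k_{i_t}}-1)$ to conclude each $C_{i_t}\beta_{i_t}$ is non-self-orthogonal hence (by the simple-ideal dichotomy of Eq.~\eqref{C-cap}) LCD in $A_{i_t}$, and check the $A_0$-component separately. Your only deviation is cosmetic: you verify non-degeneracy on $A_0$ by computing the Gram matrix $\tfrac{1}{n}I_2$, whereas the paper notes $A_0\overline{A_0}=A_0\ne 0$ and uses that $A_0$ is a field (Lemma~\ref{F-e0}(1)); both are valid.
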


\begin{proof}
Note that $q^{k_{i_t}}-1=(q-1)(q^{k_{i_t}-1}+q^{k_{i_t}-2}+\cdots+q+1)$.
Since both $k_{i_t}$ and $q$ are odd and $4\nmid(q-1)$,
we have $4\nmid(q^{k_{i_t}}-1)$.
Write $\beta=e_{0}+\beta_{1}+\cdots+\beta_{m}$,
where $\beta_{t}\in K_{t}^{\times}$ for $t=1,\cdots,m$.
Then $$C^{\prime}\beta= C_{i_1}\beta_{i_1}+\cdots+ C_{i_{m'}}\beta_{i_{m'}}.$$
By Lemma~\ref{lem C_t 1=e},  $\langle C_{i_t}\beta_{i_t}, C_{i_t}\beta_{i_t}\rangle\neq 0$
for $t=1,\cdots,m'$.
By Eq(\ref{C-cap}), we have $C_{i_t}\beta_{i_t}\cap (C_{i_t}\beta_{i_t})^{\bot_{A_{i_t}}}=0$.
Then $C_{i_t}\beta_{i_t}$ is an LCD consta-dihedral code in~$A_{i_t}$.
By Lemma \ref{cor C=C cap A+...} (5), $C'\beta$  is an LCD consta-dihedral code.
Moreover,  by Lemma~\ref{F-e0}(1),
$\langle A_{0}, A_{0}\rangle=  A_{0}\overline {A_{0}}=A_{0} \neq 0$;
so  $\widehat{C}'\beta$  is an LCD consta-dihedral code.
\end{proof}

For integers $s,t$ and a prime $p$, $p^s\,\Vert\,t$ means
that $p^s\,|\,t$ but $p^{s+1}\nmid t$.

\begin{theorem} \label{thm LCD case}
Assume that $q$ is odd and $4\nmid(q-1)$.
Assume that $-1\in{\langle q\rangle}_{\mathbb{Z}_{n}^{\times}}$
and $2\,\Vert\, {\rm ord}_{{\Bbb Z}_n^\times}(q)$.
Then for any $\beta \in K^*$, both $C\beta$
and $A_0\oplus C\beta$ are LCD consta-dihedral codes.
\end{theorem}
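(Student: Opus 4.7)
The plan is to reduce the theorem to Lemma~\ref{lem LCD case} by verifying that, under the present hypotheses, \emph{every} index $t\in\{1,\ldots,m\}$ satisfies the two conditions $\overline{e_t}=e_t$ and $k_t$ odd; then taking $\{i_1,\ldots,i_{m'}\}=\{1,\ldots,m\}$ in that lemma gives $C'=C$ and $\widehat C'=A_0\oplus C$, which is precisely the desired conclusion.

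The first condition is immediate from Lemma~\ref{a^T=bar a}(1) together with the hypothesis $-1\in\langle q\rangle_{\mathbb{Z}_n^\times}$: one gets $\overline{e_t}=e_t$ for every $t\ge 0$. In particular, every $A_t$ with $t\ge 1$ falls under case~(2) of Theorem~\ref{Main01}, so $2k_t=\dim_F FHe_t$ equals the size $d_t$ of the $q$-cyclotomic coset modulo~$n$ attached to $e_t$.

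For the second condition, I would write $\mathrm{ord}_{\mathbb{Z}_n^\times}(q)=2j$ with $j$ odd, using $2\,\Vert\,\mathrm{ord}_{\mathbb{Z}_n^\times}(q)$. For each $t\ge 1$, the coset size satisfies $d_t=\mathrm{ord}_{\mathbb{Z}_{n_t}^\times}(q)$ for the divisor $n_t\mid n$ which is the order of the element representing the coset; via the natural surjection $\mathbb{Z}_n^\times\to\mathbb{Z}_{n_t}^\times$ this forces $d_t\mid 2j$. Since $e_t\ne e_0$ and $n$ is odd, the coset is non-zero and closed under $a\mapsto -a$, so $d_t$ is even. Hence $k_t=d_t/2$ divides $j$, and $k_t$ is odd.

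Having verified both conditions for the full index set $\{1,\ldots,m\}$, Lemma~\ref{lem LCD case} applies directly and yields the asserted LCD property of $C\beta$ and of $A_0\oplus C\beta$ for every $\beta\in K^*$. The only slightly delicate step is the parity argument above, which leans crucially on the sharp hypothesis $2\,\Vert\,\mathrm{ord}_{\mathbb{Z}_n^\times}(q)$; the rest of the proof is an assembly of results proved earlier.
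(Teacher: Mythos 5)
Your proposal is correct and follows essentially the same route as the paper: use Lemma~\ref{a^T=bar a}(1) to get $\overline{e_t}=e_t$ for all $t$, then show each $k_t=\frac{1}{2}\dim_F FHe_t$ is odd because the $q$-cyclotomic coset size divides ${\rm ord}_{{\Bbb Z}_n^\times}(q)=2j$ with $j$ odd, and finally invoke Lemma~\ref{lem LCD case} with the full index set. Your extra step verifying that the coset size is even is redundant (it already equals $2k_t$ by Theorem~\ref{Main01}(2)) but harmless.
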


\begin{proof}
Since $-1\in{\langle q\rangle}_{\mathbb{Z}_{n}^{\times}}$,
by Lemma~\ref{a^T=bar a}(1), $\overline e_t=e_t$,
$t=1,\cdots,m$. By Lemma~\ref{lem LCD case},
it is enough to show that any $k_t$ is odd for $1\le t\le m$.
By Theorem~\ref{Main01}(2), $k_t=\frac{1}{2}\dim_F FH e_t$, i.e.,
$FH e_t$ is a field extension over $F$ with degree $2k_t$.
There exists a $q$-coset $Q\subseteq {\Bbb Z}_n$ such that
$\dim_F FH e_t=|Q|$. However, $|Q|$ is a divisor of
${\rm ord}_{{\Bbb Z}_n^\times}(q)$. Hence, by
the assumption that $2\,\Vert\, {\rm ord}_{{\Bbb Z}_n^\times}(q)$,
$k_t=\frac{1}{2}|Q|$ is odd.
\end{proof}

\begin{theorem} \label{thm self-dual case}
Assume that $q$ is even or $4\,\big|\,(q-1)$,
Assume that $r\in F$ satisfies that $r^2=-1$.
Set $C_0=A_0(re_0+e_0\dot v)$ as in Lemma~\ref{F-e0}(2), and
$$
\widehat C=C_0\oplus C=C_0\oplus C_1\oplus\cdots\oplus C_m.
$$
Then, for any $\beta\in K^*$, $\widehat C\beta$ is a self-dual consta-dihedral code.
\end{theorem}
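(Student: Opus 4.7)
The plan is to apply Corollary \ref{cor C=C cap A+...} to reduce the self-duality of $\widehat C\beta$ to two separate tasks: (a) verifying that the $A_t$-component of $\widehat C\beta$ is self-orthogonal in $A_t$ for every $t$, and (b) showing that $\dim_F\widehat C\beta = n = \tfrac{1}{2}\dim_F\cda$. Once both are established, self-orthogonality of dimension exactly $n$ in an ambient space of dimension $2n$ forces self-duality.

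First, I would write $\beta = e_0 + \beta_1 + \cdots + \beta_m$ with $\beta_t\in K_t^\times$ and observe that
\[
\widehat C\beta \;=\; C_0 \oplus C_1\beta_1 \oplus \cdots \oplus C_m\beta_m
\]
is precisely the component decomposition of $\widehat C\beta$ relative to Eq.\eqref{eq cda=A_0+...}: the piece $C_0\subseteq A_0$ is annihilated by every $\beta_t$ with $t\ge 1$ while $e_0$ acts as identity on $A_0$, and each $C_t\beta_t\subseteq A_t$ is a left ideal of $A_t$ by Remark \ref{rk def K^x}(1). For the dimension, $\dim_F C_0 = 1$ from Lemma \ref{F-e0}(2), and right multiplication by $\beta_t\in K_t^\times$ is an $F$-linear bijection, so $\dim_F C_t\beta_t = \dim_F C_t$. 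Combined with Eq.\eqref{eq dim C_1+...+C_m} this gives $\dim_F\widehat C\beta = 1 + (n-1) = n$, as required.

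Second, for self-orthogonality of each component: the $A_0$-piece is $C_0$, for which $\langle C_0,C_0\rangle = 0$ by Lemma \ref{F-e0}(2). For $t\ge 1$ in Case (1) of Theorem \ref{Main01}, Lemma \ref{lem C_t 1=e+bar e} yields $\langle C_t\beta_t, C_t\beta_t\rangle = 0$ directly. For Case (2), Lemma \ref{lem C_t 1=e} reduces the matter to the condition ``$q$ is even or $4\,|\,(q^{k_t}-1)$''; the former is our hypothesis in one subcase, and in the other subcase $4\,|\,(q-1)$ forces $4\,|\,(q^{k_t}-1)$ through the factorization $q^{k_t}-1 = (q-1)(q^{k_t-1}+\cdots+q+1)$. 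Assembling these facts via Corollary \ref{cor C=C cap A+...}(4) gives self-orthogonality of $\widehat C\beta$, and combined with the dimension count of the previous paragraph we conclude that $\widehat C\beta$ is self-dual. I do not foresee any serious obstacle: the real work has already been done in Lemmas \ref{F-e0}, \ref{lem C_t 1=e+bar e} and \ref{lem C_t 1=e}, and this theorem is essentially a bookkeeping step that assembles those component-wise results through the orthogonal decomposition of $\cda$.
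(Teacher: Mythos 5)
Your proposal is correct and follows essentially the same route as the paper: decompose $\widehat C\beta$ into its $A_t$-components, invoke Lemma~\ref{F-e0}(2), Lemma~\ref{lem C_t 1=e+bar e} and Lemma~\ref{lem C_t 1=e} for component-wise self-orthogonality, and combine with the dimension count $\dim_F(\widehat C\beta)=1+(n-1)=n$ from Eq.\eqref{eq dim C_1+...+C_m}. If anything, you are slightly more explicit than the paper in spelling out why $4\,|\,(q-1)$ forces $4\,|\,(q^{k_t}-1)$, a step the paper's proof leaves implicit when citing Lemma~\ref{lem C_t 1=e}.
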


\begin{proof}
By Eq.\eqref{eq K^*=...} we write $\beta=e_0+\beta_1+\cdots+\beta_m$
with $\beta_t\in K_t^\times$ for $t=1,\cdots,m$. Then
$$
\widehat C\beta=C_0\oplus C_1\beta_1\oplus\cdots\oplus C_m\beta_m,
$$
and by Corollary~\ref{cor C=C cap A+...}(2),
$$
\langle \widehat C\beta,\widehat C\beta\rangle=
\langle C_0,C_0\rangle+\langle C_1\beta_1,C_1\beta_1\rangle
+\cdots+\langle C_m\beta_m,C_m\beta_m\rangle.
$$
By Lemma~\ref{lem C_t 1=e+bar e} and Lemma~\ref{lem C_t 1=e},
$\langle C_t\beta_t,C_t\beta_t\rangle=0$, $t=1,\cdots,m$.
By Lemma~\ref{F-e0}, $\langle C_0,C_0\rangle=0$.
Hence $\widehat C\beta$ is self-orthogonal.
Finally, by Lemma~\ref{F-e0} and Eq.\eqref{eq dim C_1+...+C_m},
$\dim_F(\widehat C\beta)=\dim_F(C_0)+\dim_F(C)=n$.
So $\widehat C\beta$ is self-dual.
\end{proof}

\begin{remark}\label{rk  b...}\rm
Recall that, for any subset
 $I_{*}=\{i_1,\cdots,i_k\}\subseteq I=\{1,2,\cdots, n\}$ ($1\le i_1<\cdots<i_k\le n$),
there is the projection $\rho_{I_{*}}:$
\!$F^I\to F^{I_{*}}$,\! $(a_1,a_2\cdots,a_n)\mapsto(a_{i_1},\cdots,a_{i_k})$.
For $B\subseteq F^n$ with $|B|=q^k$, if there are subsets (repetition is allowed)
$I_1,\cdots, I_s$ of $\{1,2,\cdots,n\}$
such that: (1) for $1\le j\le s$, the projection $\rho_{I_{j}}:$
$F^n\to F^{I_j}$ maps $B$ bijectively onto $F^{I_j}$
(such $I_j$ is called an {\em information index set} of the code $B$);
(2) there is an integer $t$ such that for any $1\le i\le n$
the number of the subsets $I_j$ which contains $i$ (i.e., $i\in I_j$) equals $t$;
then $B$ is called a {\em balanced code}.
An important result (cf. \cite[Corollary 3.4]{FL15}) is that,
if $B$ is balanced, then the cardinality
$|B^{\le\delta}|\le q^{kh_q(\delta)}$ for $0\le\delta\le 1-q^{-1}$, where
\begin{align} \label{eq def B^<=}
 B^{\le\delta}=\{c\,|\, c\in B,\, {{\rm w}(c)}\le\delta n\},
\end{align}
and
\begin{align} \label{eq def h_q}
h_{q}(\delta)=
\delta \log_{q}(q-1)-\delta \log_{q}(\delta)-(1-\delta)\log_q(1-\delta),\quad
 \delta \in[0,1-q^{-1}],
\end{align}
is the {\em $q$-entropy function}.
It is easy to see that group codes are balanced, cf. \cite[Remark II.3]{FL20}.
And it is also easy to prove that constacyclic codes are balanced,
see \cite[Lemma II.8]{FL22}.
\end{remark}

\begin{lemma}\label{q01}
If $B$ is a consta-dihedral code over $F$, then
$B$ is balanced; in particular, for $0\le \delta\le 1-q^{-1}$,
the cardinality $|B^{\leq \delta}|\leq q^{\dim_F\!B \cdot h_{q}(\delta)}$.
\end{lemma}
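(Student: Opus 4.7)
The plan is to verify that $B$ satisfies the definition of balanced code given in Remark~\ref{rk  b...}; the cardinality estimate $|B^{\le\delta}|\le q^{\dim_F B\cdot h_q(\delta)}$ then follows from \cite[Corollary 3.4]{FL15}. The key point is that left multiplication by elements of $\tilde G$ in $\cda$ acts on the standard basis Eq.\eqref{eq F*G basis} of $\cda$ as signed permutations of the $2n$ coordinate positions, and each such signed permutation preserves the left ideal $B$ setwise.

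First I observe that for any $g\in\tilde G$, using the relations $u^n=1$, $\dot v^2=-1$, $\dot v u=u^{-1}\dot v$, one checks immediately that $g\cdot(u^{i}\dot v^{j})=\pm u^{i'}\dot v^{j'}$ for suitable $i',j'$; so the map $F^{2n}\to F^{2n}$ induced by left multiplication by $g$ is a signed permutation $\tau_g$ of the standard basis. Since $B$ is a left ideal of $\cda$, we have $\tau_g(B)=B$. A signed permutation that stabilises a linear code sends information index sets to information index sets: if $I$ is an information index set of $B$ and $\pi_g$ is the underlying coordinate permutation of $\tau_g$, then $\rho_{\pi_g(I)}:B\to F^{\pi_g(I)}$ agrees, up to a diagonal scaling of the target, with $\rho_I:B\to F^I$, and is therefore again a bijection.

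Next I identify the coordinate action. The element $\dot v^2=-1\in\cda$ acts as the scalar $-1$ on every basis vector, hence acts trivially on coordinate positions; so the coordinate-permutation action of $\tilde G$ factors through $\tilde G/Z\cong G$, where $Z=\{1,\dot v^2\}$. A short orbit--stabilizer argument — equivalently, observing that the standard basis Eq.\eqref{eq F*G basis} is a set of coset representatives of $\tilde G/Z$ under the natural projection — shows that this induced action of $G$ on the $2n$ coordinate positions is regular.

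Finally, pick any information index set $I$ of $B$ (which exists because $B$ is a linear code of dimension $\dim_F B$) and form the family $\{\pi_g(I):g\in G\}$, a collection of $|G|=2n$ information index sets. By the regularity of the action, for each position $x_0$ and each $j\in I$ there is a unique $g\in G$ with $\pi_g(j)=x_0$; consequently every position lies in exactly $|I|=\dim_F B$ members of the family. This is the balanced-code condition of Remark~\ref{rk  b...} with constant $t=\dim_F B$, and the cardinality bound follows. The only delicate point is ruling out that the signs contributed by the relation $\dot v^2=-1$ spoil the information-set property; this is handled by the diagonal-scaling remark at the outset, after which the remainder of the argument is essentially the same counting used for ordinary group codes.
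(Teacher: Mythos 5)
Your proposal is correct and follows essentially the same route as the paper: left multiplication by $u$ and $\dot v$ acts on the $2n$ coordinates as invertible (signed) permutation matrices preserving the left ideal $B$, so the $G$-orbit of one information index set gives $2n$ information index sets, and the (regular) transitive action of $G$ on the positions yields the balanced-code condition. The only cosmetic difference is that you extract the constant $t=\dim_F B$ directly from regularity, whereas the paper invokes transitivity together with \cite[Lemma II.9]{FL22}; both are valid.
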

\begin{proof}
For any $a=
\sum_{i=0}^{n-1}a_{i0}u^i\dot v^0 +\sum_{i=0}^{n-1}a_{i1}u^i\dot v^1\in\cda$,
as a word of $F^{2n}$,
\begin{align}\label{eq word a}
a=(a_{00}, a_{10}, \cdots, a_{n-1,0}, a_{01}, a_{11}, \cdots, a_{n-1,1}),
\end{align}
the coordinates of the word $a$ indexed
by the standard basis Eq.\eqref{eq F*G basis} of $\cda$:
$$
 I=\{u^{0}\dot v^{0}, u^{1}\dot v^{0}, \cdots , u^{n-1}\dot v^{0},
u^{0}\dot v^{1}, u^{1}\dot v^{1}, \cdots, u^{n-1}\dot v^{1}\}.
$$
Let ${\rm Sym}(I)$ be the symmetric group of the set $I$.
The dihedral group $G$ (as in Eq.\eqref{eq G dihedral})
acts on the set $I$ through the homomorphism
$\theta:G\to{\rm Sym}(I)$, $g\mapsto\theta_g$, as follows:
$\theta_u$ is the permutation: $\theta_u(u^i\dot v^j)=u^{i+1}\dot v^j$
(since $u\cdot u^i\dot v^j=u^{i+1}\dot v^j$), i.e.,
\begin{align} \label{eq theta_u}
 \theta_{u}=( u^{0}\dot v^{0}, u^{1}\dot v^{0}, \cdots, u^{n-1}\dot v^{0})
(u^{0}\dot v^{1}, u^{1}\dot v^{1}, \cdots, u^{n-1}\dot v^{1})
\end{align}
is a double circulant permutation of $I$; and $\theta_v$ is the permutation:
\begin{align} \label{eq theta_v}
\theta_{v}(u^i\dot v^j)
=\begin{cases} u^{n-i}\dot v, & j=0;\\ u^{n-i}, & j=1;\end{cases}
\quad\Big(\mbox{since }~
 \dot v(u^i\dot v^j)=\begin{cases} u^{n-i}\dot v, & j=0; \\
   -u^{n-i}, & j=1;\end{cases} \Big)
\end{align}
i.e.,
$$
 \theta_{v}=(u^0\dot v^0, u^0\dot v^1)(u^1\dot v^0, u^{n-1}\dot v^1)
 \cdots(u^{n-1}\dot v^0, u^1\dot v^1)
$$
is a product of $n$ transpositions of $I$.
In fact, there is a bijection $I\to G$ (by dropping the dot from $\dot v$)
 such that the action of $G$ on $I$ defined by $\theta$ as above
is equivalent to the left regular action of $G$ on $G$ itself.
In particular, $G$ acts on~$I$ transitively.
Next, let $\Theta_u$ be the permutation matrix of the permutation~$\theta_u$, i.e.,
$$
\Theta_{u}=\begin{pmatrix} Y& 0\\ 0&Y\end{pmatrix}_{2n\times2n},
~~~~ \mbox{where}~~
Y=\begin{pmatrix}
0 &0 & \cdots & 1\\ 1 & 0 & \cdots  &0\\
& \ddots & \ddots & \vdots\\
 & & 1& 0\end{pmatrix}_{n\times n}.
$$
As shown in Eq.\eqref{eq theta_v}, we should take
$$
\Theta_{\dot v}=\begin{pmatrix} 0& -X\\ X&0\end{pmatrix}_{2n\times2n},
~~~~ \mbox{where}~~
X=\begin{pmatrix}
1 &0 & \cdots & 0\\ \vdots & \vdots & \begin{sideways}$\ddots$\end{sideways}  &1\\
\vdots& 0 & \begin{sideways}$\ddots$\end{sideways} & \\
0 & 1& & \end{pmatrix}_{n\times n}.
$$
By Eq.\eqref{eq word a}, any $a\in\cda$ is identified with a word in $F^{2n}$,
i.e., a $1\times 2n$ matrix; in this way $a\Theta_u$ and $a\Theta_{\dot v}$ make sense.
By Eq.\eqref{eq theta_u} and Eq.\eqref{eq theta_v} we have
\begin{align}\label{eq a Theta}
 u a=a\Theta_u, \quad \dot v a=a\Theta_{\dot v}, \qquad \forall\,a\in\cda.
\end{align}

Let $J=\{u^{i_1}\dot v^{0}, u^{i_2}\dot v^{0}, \cdots, u^{i_s}\dot v^{0},
u^{j_1}\dot v^{1}, u^{j_2}\dot v^{1}, \cdots, u^{j_t}\dot v^{1}\}\subseteq I$,
$s+t=k$, $0\leq i_1 < i_2< \cdots< i_s \leq n-1$ and
$0\leq j_1 < j_2< \cdots< j_t \leq n-1$.
For $a=
\sum_{i=0}^{n-1}a_{i0}u^i\dot v^0 +\sum_{i=0}^{n-1}a_{i1}u^i\dot v^1\in\cda$,
by Remark \ref{rk  b...} we can write
$$
\rho_J(a)=a_{i_1 0}u^{i_1}\dot v^{0}
 + \cdots+ a_{i_s 0}u^{i_s}\dot v^{0} +
a_{j_1 1}u^{j_1}\dot v^{1}+\cdots+a_{j_t 1} u^{j_t}\dot v^{1};
$$
by Eq.\eqref{eq word a}, we can identify $\rho_J(a)$ with such a word in $F^{2n}$
whose coordinates outside~$J$ are zero. Then
$\rho_J(a)\Theta_u$ ($\rho_J(a)\Theta_{\dot v}$, resp.)
 makes sense and it is a word in $F^{2n}$
whose coordinates outside~$\theta_u(J)$ (outside~$\theta_v(J)$, resp.) are zero.
By Eq.\eqref{eq a Theta},
$\rho_J(a)\Theta_u=\rho_{\theta_u\!(\!J\!)}(ua)$ and
$\rho_J(a)\Theta_{\dot v}=\rho_{\theta_{v}\!(\!J\!)}(\dot va)$.
Replacing $a$ by $u^{-1}a$ (and replacing $a$ by $\dot v^{-1}a$)
we have
\begin{align}\label{eq rho_theta}
 \rho_{\theta_u\!(\!J\!)}(a)=\rho_J(u^{-1}a)\Theta_u, \quad
 \rho_{\theta_{v}\!(\!J\!)}(a)=\rho_J(\dot v^{-1}a)\Theta_{\dot v},\qquad
\forall\, a\in\cda.
\end{align}

Assume that $\dim_F B=k$, and $I_*\subseteq I$ is an information index set of $B$,
i.e., $|I_*|=k$ and $\rho_{I_*}(B)=F^{I_*}$.
Because $B$ is a left ideal of $\cda$, $u^{-1}B=B=\dot v^{-1} B$.
Note that both $\Theta_u$ and $\Theta_{\dot v}$ are invertible.
By Eq.\eqref{eq rho_theta},
$$
 \rho_{\theta_u\!(I_*\!)}(B)=\rho_{I_*}(u^{-1}B)\Theta_u=F^{\theta_u\!(I_*\!)},
\quad
 \rho_{\theta_{v}\!(I_*\!)}(B)=\rho_{I_*}(\dot v^{-1}B)\Theta_{\dot v}
  =F^{\theta_{v}\!(I_*\!)}.
$$
In other words, both $\theta_u(I_*)$ and $\theta_{v}(I_*)$
are information index sets of $B$.
For any $g\in G$,
because $g$ can be written as a product of several $u$ and $v$,
$\theta_g$ is a product of several $\theta_u$ and $\theta_v$;
so $\theta_g(I_*)$ is an information index set of $B$ too.

Finally, fix an information index set $I_*\subseteq I$ of $B$. Then
the $2n$ subsets (repetition allowed):
$\theta_g(I_*)$, $g\in G$, are all information index sets of $B$.
And, since $G$ acts on $I$ transitively,
by \cite[Lemma II.9]{FL22}, there is an integer $t$ such that
for any $u^i\dot v^j\in I$ the number of such
$g\in G$ that $u^i\dot v^j\in \theta_g(I)$ equals $t$.
In conclusion, $B$ is a balance code.
\end{proof}

\section{Asymptotic property of consta-dihedral codes}
\label{Asymptotic property of consta-dihedral codes}

Keep the notation in Section \ref{Consta-dihedral group algebras}
and Section \ref{Consta-dihedral codes}.
In this section we always denote
$$ A=A_1\oplus\cdots\oplus A_m, \quad
\mbox{where $A_{t}\cong {\rm M}_{2}(F_{t})$, $\dim_{F}A_t=4k_t$, $t=1,\cdots,m$.}
$$
Then $\cda=A_0\oplus A$.
By Corollary \ref{Cor-k1+k2}, we have that
\begin{align} \label{eq k_1+...}
	2k_t\ge \lambda(n),~ t=1,\cdots,m; \qquad
	k_1+\cdots+k_m=(n-1)/2.
\end{align}
We further assume that
\begin{align}\label{eq k_1<=}
	\lambda(n)/2\le k_1\le k_2\le\cdots\le k_m.
\end{align}
From now on, let $\delta$ be a real number satisfying that
($h_q(\delta)$ is defined in Eq.\eqref{eq def h_q})
\begin{align}\label{eq delta in...}
	\delta\in(0,1-q^{-1}) \quad\mbox{and}\quad
    h_q(\delta)< 1/4.
\end{align}
In this section, we prove that
the consta-dihedral codes constructed in the last section
are asymptotically good.

\subsection{Consta-dihedral codes of rate $\frac{1}{2}-\frac{1}{2n}$}

In this subsection,
we consider the consta-dihedral code $C=C_1\oplus\cdots\oplus C_m$
 defined in Eq.\eqref{eq.C-O}. Recall that
$K^*=\{e_0\}\times K_1^\times\times\cdots\times K_m^\times$
with each field $K_t\subseteq A_t$
of dimension $\dim_F K_t=2k_t$, see Remark~\ref{rk def K^x}.
For any $\beta\in K^*$, $\beta=e_0+\beta_1+\cdots+\beta_m$,
we have a consta-dihedral code
$C\beta=C_1\beta_1\oplus\cdots\oplus C_m\beta_m$.

For any $0\ne d\in A$, there is a unique subset
$\omega_d=\{t_1,\cdots,t_r\}\subseteq\{1,2,\cdots,m\}$ such that
$d=d_{t_1}+\cdots+d_{t_r}$, where
$d_{t_i}\in A_{t_i}\backslash\{0\}$ for $i=1,\cdots,r$;
we denote %$\ell_d=k_{t_1}+\cdots+k_{t_r}$.
\begin{align} \label{eq ell_d<=}
\ell_d=k_{t_1}+\cdots+k_{t_r}, \quad~
\mbox{by Eq.\eqref{eq k_1+...} and Eq.\eqref{eq k_1<=},}\quad
   k_1\le \ell_d\le(n-1)/2.
\end{align}

\begin{lemma}\label{lem K_d}
Let $0\ne d\in A$.
Set ${\cal K}(C)_d=\{\beta\in K^*\;|\;d\in C\beta \}$.
Then
$$
|{\cal K}(C)_d|\le |K^*|\big/q^{\ell_d}.
$$
\end{lemma}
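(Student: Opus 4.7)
The strategy is to exploit the orthogonal decomposition $\cda=A_0\oplus A_1\oplus\cdots\oplus A_m$ and reduce the counting to each factor $A_t$ separately. Write $\beta=e_0+\beta_1+\cdots+\beta_m$ with $\beta_t\in K_t^\times$, and write $d=\sum_{t\in\omega_d}d_t$ with each $d_t\in A_t\setminus\{0\}$. Since $C=C_1\oplus\cdots\oplus C_m$ and the identities $1_{A_t}$ are mutually orthogonal idempotents (and annihilate $e_0$), one gets $C\beta=C_1\beta_1\oplus\cdots\oplus C_m\beta_m$, so membership $d\in C\beta$ splits into independent conditions $d_t\in C_t\beta_t$ for every $t\in\{1,\dots,m\}$; the conditions for $t\notin\omega_d$ (where $d_t=0$) are vacuous.

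Next, for each $t\in\omega_d$, I would count $N_t=|\{\beta_t\in K_t^\times\mid d_t\in C_t\beta_t\}|$. Here $A_t\cong\mathrm{M}_2(F_t)$ with $|F_t|=q^{k_t}$, and $C_t$ is a simple left ideal of $A_t$ by Lemma~\ref{lem C_t 1=e+bar e} or Lemma~\ref{lem C_t 1=e}; consequently $C_t\beta_t$ is simple of $F_t$-dimension $2$ for every $\beta_t$. If $d_t$ has rank $2$ in $\mathrm{M}_2(F_t)$, then $d_t$ is invertible and lies in no proper left ideal, so $N_t=0$. If $d_t$ has rank $1$, then $A_td_t$ is the unique simple left ideal of $A_t$ containing $d_t$, and by Lemma~\ref{lem matrix size 2}(3) applied to $\mathrm{M}_2(F_t)$ with the chosen field $K_t$, each simple left ideal of $A_t$ is hit by the map $\beta_t\mapsto C_t\beta_t$ exactly $q^{k_t}-1$ times. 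Therefore $N_t\le q^{k_t}-1$ in all cases.

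Multiplying over all indices and using that for $t\notin\omega_d$ the factor is $|K_t^\times|=q^{2k_t}-1=(q^{k_t}-1)(q^{k_t}+1)$, I would conclude
$$
|{\cal K}(C)_d|\;\le\;\prod_{t\in\omega_d}(q^{k_t}-1)\cdot\prod_{t\notin\omega_d}(q^{2k_t}-1)
\;=\;\frac{|K^*|}{\prod_{t\in\omega_d}(q^{k_t}+1)}\;\le\;\frac{|K^*|}{q^{\ell_d}},
$$
since $\ell_d=\sum_{t\in\omega_d}k_t$.

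The only non-routine point is matching the right-multiplication parametrization: one has to verify that Lemma~\ref{lem matrix size 2}(3) — stated for $\mathrm{M}_2(F)$ and a fixed quadratic subfield $E$ — transfers faithfully to $A_t$ and $K_t$. This is precisely the content of Remark~\ref{rk def K^x}, where $K_t$ was chosen to correspond to $E$ under $A_t\cong\mathrm{M}_2(F_t)$, so $|K_t^\times|=q^{2k_t}-1$ and each simple left ideal is hit $q^{k_t}-1$ times. Once this identification is in hand, the argument is a short product estimate.
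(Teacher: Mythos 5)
Your proposal is correct and follows essentially the same route as the paper: split the membership condition componentwise over the decomposition $A=A_1\oplus\cdots\oplus A_m$, note that $d_t$ lies in at most one simple left ideal of $A_t$ (which is hit exactly $q^{k_t}-1$ times by $\beta_t\mapsto C_t\beta_t$ via Lemma~\ref{lem matrix size 2}(3)), and conclude with the product estimate $|K^*|/\prod_{t\in\omega_d}(q^{k_t}+1)\le |K^*|/q^{\ell_d}$. The rank-$2$ versus rank-$1$ case distinction you add is a harmless elaboration of the paper's remark that distinct simple left ideals intersect trivially.
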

\begin{proof}
Assume that $\omega_d=\{t_1,\cdots,t_r\}\subseteq\{1,2,\cdots,m\}$,
and $d=d_{t_1}+\cdots+d_{t_r}$ for $d_{t_i}\in A_{t_i}\backslash\{0\}$.
Then $d\in C\beta$ if and only if
$d_{t_i}\in C_{t_i}\beta_{t_i}$, $i=1,\cdots,r$.
The $C_{t_i}\beta_{t_i}$ is a simple left ideal of $A_{t_i}$.
In $A_{t_i}$, the intersection of any two distinct simple left ideals
is~$0$; so there is at most one simple left ideal $C'_{t_i}$ containing~$d_{t_i}$.
By Lemma~\ref{lem matrix size 2}(3),
there are exactly
$q^{k_{t_i}}-1$ elements $\beta_{t_i}$ in $K_{t_i}^{\times}$ such that $C_{t_i}\beta_{t_i}=C'_{t_i}$.
 Thus
$$
\big|\{\beta_{t_i}\in K_{t_i}^\times \,|\,
  d_{t_i}\in C_{t_i}\beta_{t_i}\}\big|
\le q^{k_{t_i}}-1 .
$$
Since $\dim_{F}K_{t_i}=2k_{t_i}$, see Remark \ref{rk def K^x}, we get
$q^{k_{t_i}}-1 =\frac{|K_{t_i}^\times|}{(q^{k_{t_i}}+1)}$.
Set $\omega_d'=\{1,2,\cdots,m\}\backslash\omega_d$. Then
\begin{align*}
\big|{\cal K}(C)_d\big|
\le \prod_{t'\in\omega_d'}\! |K_{t'}^\times|\cdot
\prod_{t\in\omega_d}\! \frac{|K_t^\times|}{q^{k_{t}}+1}
= \prod_{t=1}^{m}|K_t^\times|\Big/\prod_{t\in\omega}(q^{k_{t}}+1)
\le |K^*|\Big/\prod_{t\in\omega}q^{k_{t}},
\end{align*}
i.e.,
$\big|{\cal K}(C)_d\big|\le
|K^*|\big/q^{k_{t_1}+\cdots+k_{t_r}}
=|K^*|\big/q^{\ell_d}$.
\end{proof}

Denote
\begin{align} \label{}
\Omega=\{A_{t_1}\oplus\cdots\oplus A_{t_r}~|~\{t_1,\cdots,t_r\}\subseteq\{1,\cdots,m\}\},
\end{align}
which is the set of all ideals of $A$.

\begin{lemma}\label{lem K^<delta}
Set ${\cal K}(C)^{\!\le\delta}=
\{\,\beta\in K^*\,|\,\Delta(C\beta)\le\delta\,\}$.
If $\frac{1}{4}-h_q(\delta)-\frac{\log_q n}{\lambda(n)}>0$,
then
\begin{align} \label{eq K^<=...}
\big|{\cal K}(C)^{\le\delta}\big| \le |K^{*}|\!\cdot\!
q^{-2\lambda(n)\big(\frac{1}{4}- h_q(\delta)
	-\frac{\log_q n}{\lambda(n)}\big)}.
\end{align}
\end{lemma}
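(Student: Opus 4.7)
The plan is to bound $|{\cal K}(C)^{\le\delta}|$ by a union bound over potential low-weight nonzero codewords. If $\Delta(C\beta)\le\delta$, then some $0\ne d\in C\beta\subseteq A$ satisfies ${\rm w}(d)\le 2n\delta$; running $d$ over all such elements and applying Lemma~\ref{lem K_d} gives
\begin{equation*}
|{\cal K}(C)^{\le\delta}|\;\le\;\sum_{0\ne d\in A,\,{\rm w}(d)\le 2n\delta}|{\cal K}(C)_d|\;\le\;|K^*|\sum_{d}q^{-\ell_d}.
\end{equation*}
I then partition the sum according to $\omega=\omega_d\subseteq\{1,\ldots,m\}$: the direct sum $A_\omega:=\bigoplus_{t\in\omega}A_t$ is itself a left ideal of $\cda$, hence a consta-dihedral code of $F$-dimension $4\ell_\omega$, and every $d$ with $\omega_d=\omega$ lies in $A_\omega$. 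Lemma~\ref{q01} then gives $|\{d:\omega_d=\omega,\,{\rm w}(d)\le 2n\delta\}|\le q^{4\ell_\omega h_q(\delta)}$. Setting $c=1-4h_q(\delta)$ (which is positive by \eqref{eq delta in...}), the estimate becomes
\begin{equation*}
|{\cal K}(C)^{\le\delta}|\;\le\;|K^*|\sum_{\emptyset\ne\omega\subseteq\{1,\ldots,m\}}q^{-c\ell_\omega}.
\end{equation*}

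The remaining task is to show that the combinatorial sum on the right is at most $n^{2}q^{-c\lambda(n)/2}$, which rearranges exactly to $q^{-2\lambda(n)(1/4-h_q(\delta)-\log_q n/\lambda(n))}$. Partitioning by $r=|\omega|$ and using $\ell_\omega\ge rk_1\ge r\lambda(n)/2$ from \eqref{eq k_1<=} together with $\binom{m}{r}\le m^{r}/r!$,
\begin{equation*}
\sum_{\omega}q^{-c\ell_\omega}\;\le\;\sum_{r\ge 1}\frac{(mq^{-c\lambda(n)/2})^{r}}{r!}\;=\;e^{\,mq^{-c\lambda(n)/2}}-1.
\end{equation*}
The hypothesis $1/4-h_q(\delta)-\log_q n/\lambda(n)>0$ combined with $m\le (n-1)/\lambda(n)\le n$ (from \eqref{eq k_1+...}) yields $mq^{-c\lambda(n)/2}\le nq^{-c\lambda(n)/2}\le q^{-c\lambda(n)/4}<1$; then convexity of $e^{x}-1$ on $[0,1]$ gives $e^{x}-1\le(e-1)x<2x$, so $\sum_\omega q^{-c\ell_\omega}\le 2nq^{-c\lambda(n)/2}\le n^{2}q^{-c\lambda(n)/2}$ because $n\ge 3$.

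The main delicacy is arranging the partition so that the balanced-code bound applies sharply: it is essential that each $A_\omega$ is itself a consta-dihedral code, so Lemma~\ref{q01} delivers the dimension-sensitive bound $q^{4\ell_\omega h_q(\delta)}$ rather than the much coarser $q^{(2n-2)h_q(\delta)}$ one would get from using all of $A$ at once. Without this refinement, the resulting sum cannot collapse into a polynomial-in-$n$ factor, and the characteristic correction $\log_q n/\lambda(n)$ in the exponent would not emerge.
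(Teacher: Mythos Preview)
Your proof is correct and follows essentially the same approach as the paper: the union bound over low-weight $d\in A$, the application of Lemma~\ref{lem K_d} to get the factor $q^{-\ell_d}$, and the crucial use of Lemma~\ref{q01} on each ideal $A_\omega$ to get the sharp bound $q^{4\ell_\omega h_q(\delta)}$ are exactly the paper's ingredients. The only cosmetic difference is in bookkeeping the combinatorial sum---the paper groups terms by the value $\ell=\ell_d$ and bounds $|\{\omega:\ell_\omega=\ell\}|\le n^{\ell/k_1}$, whereas you group by $r=|\omega|$ and sum via the exponential series; both routes land on the same exponent $-2\lambda(n)\big(\tfrac14-h_q(\delta)-\tfrac{\log_q n}{\lambda(n)}\big)$.
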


\begin{proof}
For any subset $\omega\subseteq\{1,\cdots,m\}$, we denote
$A_\omega=\bigoplus_{t\in\omega}A_t$, so $ A_\omega\in{\Omega}$.
For $k_1\le\ell\le\frac{n-1}{2}$, we set
\begin{align} \label{eq A_ell=}
\begin{array}{l}
{\cal A}_\ell =\big\{\,A_\omega\in{\Omega}\;\big|\;
 \dim_{F} A_\omega=4\ell\,\big\};
\\[5pt]
{\cal D}_\ell=\big\{\,d\in A\;\big|\;
  0<{\rm w}(d)/2n\le\delta,\,\ell_d=\ell\,\big\}.
\end{array}
\end{align}
For $A_\omega\in {\cal A}_\ell$,
$\dim_{F} A_\omega=4\sum_{t\in\omega}k_t=4\ell$  and $k_t\ge k_1$,
by the assumption Eq.\eqref{eq k_1<=},
we have that $|\omega|\le \ell/k_1$. Thus,
\begin{align} \label{eq |cal A|<=}
|{\cal A}_\ell|\le m^{\ell/k_1}\le n^{\ell/k_1}.
\end{align}
It is obvious that (where
${A_\omega}^{\!\le\delta}$ is defined in Eq.\eqref{eq def B^<=})
\begin{align} \label{eq ...D_ell}
{\cal D}_\ell\subseteq \bigcup_{A_\omega\in{\cal A}_\ell}
{A_\omega}^{\!\le\delta}
\qquad\mbox{and}\qquad
{\cal K}(C)^{\!\le\delta}=
\bigcup_{\ell=k_1}^{(n-1)/2}
\bigcup_{d\in{\cal D}_\ell} {\cal K}(C)_d.
\end{align}
By Lemma~\ref{q01}, for $A_\omega\in{\cal A}_\ell$,
we have that
$|{{ A}_\omega}^{\!\le\delta}|\le q^{4\ell h_q(\delta)}$
since $\dim_{F} A_\omega=4\ell$.
By Eq.\eqref{eq ...D_ell}, we get
\begin{align*}
|{\cal D}_\ell|&\le \sum_{A_\omega\in{\cal A}_\ell}
|{{A}_\omega}^{\!\le\delta}|
\le
|{\cal A}_\ell|\cdot q^{4\ell h_q(\delta)}
\le n^{\frac{\ell}{k_1}} q^{4\ell h_q(\delta)}
=q^{4\ell h_q(\delta)+\frac{\ell\log_q n}{k_1}}.
\end{align*}
By Lemma~\ref{lem K_d}, $|{\cal K}(C)_d|\le |K^*|\big/q^{\ell}$.
From Eq.\eqref{eq ...D_ell} we obtain
\begin{align*}
|{\cal K}(C)^{\!\le\delta}|
&\le \sum_{\ell=k_1}^{(n\!-\!1)/2}\!\sum_{d\in{\cal D}_\ell}
|{\cal K}(C)_d|
\le \sum_{\ell=k_1}^{(n\!-\!1)/2}\!\sum_{d\in{\cal D}_\ell} |K^*|\big/q^{\ell}
=\sum_{\ell=k_1}^{(n\!-\!1)/2} |{\cal D}_\ell|\!\cdot\!|K^*|\big/q^{\ell} \\
&\le \sum_{\ell=k_1}^{(n\!-\!1)/2}\!\! |K^{\!*}|\!\cdot\!
q^{4\ell h_q(\delta)+\frac{\ell\log_q n}{k_1}}/q^\ell
=\sum_{\ell=k_1}^{(n\!-\!1)/2}\!\! |K^{\!*}|\!\cdot\!
q^{-4\ell\big(\frac{1}{4}- h_q(\delta)-\frac{\log_q n}{4k_1}\big)}.
\end{align*}
Because $\frac{1}{4}- h_q(\delta)-\frac{\log_q n}{4k_1}>0$ and
$\ell\ge k_1$, we further get
\begin{align*}
\big|{\cal K}(C)^{\!\le\delta}\big|
\le \sum_{\ell=k_1}^{(n-1)/2}
q^{-4k_1\big(\frac{1}{4}- h_q(\delta)-\frac{\log_q n}{4k_1}\big)}
|K^{\!*}|
\le q^{-4k_1\big(\frac{1}{4}- h_q(\delta)\big)+2\log_q n}
|K^{\!*}|.
\end{align*}
The last inequality holds since $\frac{n-1}{2}-k_{1}+1\leq n=q^{\log_{q}n}$.
Further, $\frac{1}{4}-h_q(\delta)>0$ and $2k_1\ge\lambda(n)$. So
\begin{align*}
\big|{\cal K}(C)^{\!\le\delta}\big|
\le |K^{\!*}|\!\cdot\!
 q^{-4k_1\big(\frac{1}{4}- h_q(\delta)\big)+2\log_q n}
\le |K^{\!*}|\!\cdot\!
 q^{-2\lambda(n)\big(\frac{1}{4}- h_q(\delta)\big)+2\log_q n}.
\end{align*}
That is,
$
\big|{\cal K}(C)^{\le\delta}\big| \le |K^{\!*}|\!\cdot\!
q^{-2\lambda(n)\big(\frac{1}{4}- h_q(\delta)
	-\frac{\log_q n}{\lambda(n)}\big)}.
$
\end{proof}

\begin{remark} \label{rk n_1,...} \rm
Let ${\cal P}$ be the set of all primes, and
${\cal P}_t$ the set of primes less than or equal to $t$.
We denote ${\cal G}_t
=\{p\in{\cal P}\,|\, q< p\le t,\,
 {\rm ord}_{{\Bbb Z}_p^\times}(q)\ge(\log_q t)^2\}$,
and denote ${\cal G}=\bigcup_{t=1}^{\infty}{\cal G}_t$.
Then the density of ${\cal G}$ is
$\lim\limits_{t\to\infty}|{\cal G}_t|\big/|{\cal P}_t|=1$,
see \cite[Lemma~II.6]{FL20}. Hence, by Lemma~\ref{lem lambda=min}, there are
positive  odd integers $n_1,n_2,\cdots $ with every
$n_i$ coprime to $q$ and $n_i\to\infty$ such that
\begin{align}\label{eq n_1,...}
  \lim_{i\to\infty}\frac{\log_q n_i}{\lambda(n_i)}=0.
\end{align}
\end{remark}

\begin{theorem} \label{thm asymptotically good}
Let $\delta$ be as in Eq.\eqref{eq delta in...}, and $n_1,n_2,\cdots$
as in Eq.\eqref{eq n_1,...}. Then
there are consta-dihedral code
$C^{(i)}$ of length $2n_i$, for $i=1,2,\cdots$, such that

{\bf(1)} the length $2n_i$ of $C^{(i)}$ is going to infinity;

{\bf(2)}
${\rm R}(C^{(i)})=\frac{1}{2}-\frac{1}{2n_i}$ for $i=1,2,\cdots$;

{\bf(3)}
the relative minimum distance $\Delta(C^{(i)})>\delta$ for $i=1,2,\cdots$;

\noindent
hence the code sequence $C^{(1)}, C^{(2)},\cdots$ is asymptotically good.
\end{theorem}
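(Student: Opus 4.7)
The plan is to use Lemma~\ref{lem K^<delta} as the main tool: it gives a quantitative upper bound on the number of ``bad'' parameters $\beta\in K^*$ that produce a twisted code $C\beta$ of relative distance at most $\delta$. My strategy is to show that this bad set is a proper (in fact negligible) subset of $K^*$ for all sufficiently large $i$, so that a good $\beta_i\in K^*$ must exist; then define $C^{(i)}:=C\beta_i$.

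First, for each $n_i$ I form the consta-dihedral algebra $\cda$ with $n=n_i$, take its decomposition $\cda=A_0\oplus A_1\oplus\cdots\oplus A_m$ from Theorem~\ref{Main01}, and build the code $C=C_1\oplus\cdots\oplus C_m$ of Eq.\eqref{eq.C-O}. By Eq.\eqref{eq dim C_1+...+C_m} we have $\dim_F C=n_i-1$; since right multiplication by any $\beta\in K^*$ is an $F$-linear bijection of $\cda$, also $\dim_F(C\beta)=n_i-1$. Thus every candidate code has length $2n_i$ and rate $\frac{1}{2}-\frac{1}{2n_i}$ automatically, which already takes care of (1) (from $n_i\to\infty$) and (2).

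Next I verify the hypothesis of Lemma~\ref{lem K^<delta}. Combining $h_q(\delta)<\tfrac{1}{4}$ from Eq.\eqref{eq delta in...} with $\frac{\log_q n_i}{\lambda(n_i)}\to 0$ from Eq.\eqref{eq n_1,...}, for all sufficiently large $i$
\[
\tfrac{1}{4}-h_q(\delta)-\tfrac{\log_q n_i}{\lambda(n_i)}\;\ge\;\tfrac{1}{2}\bigl(\tfrac{1}{4}-h_q(\delta)\bigr)\;>\;0.
\]
Lemma~\ref{lem K^<delta} then gives
\[
\frac{|{\cal K}(C)^{\le\delta}|}{|K^*|}\;\le\;q^{-2\lambda(n_i)\bigl(\frac{1}{4}-h_q(\delta)-\frac{\log_q n_i}{\lambda(n_i)}\bigr)}.
\]
The step I expect to be the only subtle one is confirming that this exponent tends to $-\infty$: besides the bracketed factor staying bounded away from $0$ (already shown), one needs $\lambda(n_i)\to\infty$. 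This follows because $\log_q n_i\to\infty$ together with $\frac{\log_q n_i}{\lambda(n_i)}\to 0$ forces $\lambda(n_i)\to\infty$.

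Consequently $|{\cal K}(C)^{\le\delta}|/|K^*|\to 0$, and in particular $|{\cal K}(C)^{\le\delta}|<|K^*|$ for all sufficiently large $i$. Choose any $\beta_i\in K^*\setminus{\cal K}(C)^{\le\delta}$ and set $C^{(i)}:=C\beta_i$; by definition of ${\cal K}(C)^{\le\delta}$, $\Delta(C^{(i)})>\delta$. Discarding the finitely many small indices does not affect asymptotic goodness, so after reindexing the sequence $C^{(1)},C^{(2)},\ldots$ satisfies (1)--(3) simultaneously and is therefore asymptotically good.
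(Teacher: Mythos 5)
Your proposal is correct and follows essentially the same route as the paper: apply Lemma~\ref{lem K^<delta} with $n=n_i$, observe that the ratio $|{\cal K}(C)^{\le\delta}|/|K^*|$ tends to $0$ (after discarding finitely many indices so the hypothesis $\frac14-h_q(\delta)-\frac{\log_q n_i}{\lambda(n_i)}>0$ holds), and pick $\beta^{(i)}\in K^*\setminus{\cal K}(C)^{\le\delta}$ to define $C^{(i)}=C\beta^{(i)}$. Your added remarks that $\lambda(n_i)\to\infty$ and that $\dim_F(C\beta)=n_i-1$ are correct details the paper leaves implicit.
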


\begin{proof}
Since $\frac{1}{4}-h_q(\delta)>0$,
by dropping finitely many terms (if necessary),
we can further assume that the sequence
$n_1,n_2,\cdots$ satisfy Eq.\eqref{eq n_1,...} and that
$\frac{1}{4}-h_q(\delta)-\frac{\log_q n_i}{\lambda(n_i)}>0$,
for \,$i=1,2,\cdots$.
In Lemma~\ref{lem K^<delta}, take $n=n_i$, we get
$$
\lim_{i\to\infty} \frac{\big|{\cal K}(C)^{\le\delta}\big|}{|K^*|}
 \le \lim_{i\to\infty}q^{-2\lambda(n_{i})\big(\frac{1}{4}
 	-h_q(\delta)-\frac{\log_q n_i}{\lambda(n_i)}\big)}=0.
$$
Thus we can take $\beta^{(i)}\in K^*\backslash \mathcal K(C)^{\le\delta}$ for $i=1,2,\cdots$.
Set $C^{(i)}=C\beta^{(i)}$.
Then $C^{(i)}$ is a consta-dihedral code of length $2n_i$
and the obtained code sequence
\begin{align}\label{eq C^i}
	C^{(1)}, \, C^{(2)},\,\cdots
\end{align}
satisfy the statements (1), (2) and (3).
\end{proof}

The density of the set ${\cal G}$
of primes in Remark~\ref{rk n_1,...} equals $1$,
so that we can take some subsets of ${\cal G}$ which
satisfy more conditions.
\begin{lemma} [{\cite[Corollary II.8]{FL20}}] \label{le more on n_1,...}
There are positive odd integers $n_1,n_2, \cdots$
with every $n_i$ coprime to $q$ and $n_i\to\infty$ such that
\begin{align} \label{eq q odd order}
 \lim_{i\to\infty}\frac{\log_q n_i}{\lambda(n_i)}=0;\qquad
 \mbox{${\rm ord}_{{\Bbb Z}_{n_i}^\times}(q)$ is odd},
 ~~\forall\,i=1,2,\cdots. 	
\end{align}
\end{lemma}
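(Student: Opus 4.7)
I would take each $n_i$ to be a single prime $p_i$: then $\mathbb{Z}_{n_i}^\times$ has the unique prime divisor $p_i$, so by Lemma~\ref{lem lambda=min} we have $\lambda(n_i)={\rm ord}_{{\Bbb Z}_{p_i}^\times}(q)$, and the two conditions in Eq.\eqref{eq q odd order} collapse to a single requirement on the primes $p_i$: the order ${\rm ord}_{{\Bbb Z}_{p_i}^\times}(q)$ must be odd and must grow faster than $\log_q p_i$. The growth condition is already handled in Remark~\ref{rk n_1,...}, where the set $\mathcal{G}$ of primes $p$ with ${\rm ord}_{{\Bbb Z}_p^\times}(q)\ge(\log_q p)^2$ has natural density~$1$; it remains to intersect $\mathcal{G}$ with the set of primes of odd order and verify that the intersection is infinite.

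The key number-theoretic input I would invoke is a classical density theorem of Hasse type: the set
\[
\mathcal{O}=\{\,p\in\mathcal{P}\mid p\nmid 2q,~{\rm ord}_{{\Bbb Z}_p^\times}(q)\text{ is odd}\,\}
\]
has positive natural density $\rho>0$. This is obtained by Chebotarev applied to the Kummer-type towers gotten by adjoining $2^k$-th roots of $q$ to $\mathbb{Q}$: the condition ``${\rm ord}_{{\Bbb Z}_p^\times}(q)$ odd'' translates into a conjugacy-class condition on Frobenius elements in the Galois group of such a tower, and the resulting alternating sum of densities (over $k\ge 1$) is strictly positive for every $q$.

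Combining the two facts yields the desired sequence: since $\mathcal{G}$ has density $1$ and $\mathcal{O}$ has density $\rho>0$, the intersection $\mathcal{O}\cap\mathcal{G}$ has natural density at least $\rho>0$ and is in particular infinite. I then extract a strictly increasing sequence $p_1<p_2<\cdots$ from $\mathcal{O}\cap\mathcal{G}$ and set $n_i=p_i$. By construction, each $n_i$ is an odd prime coprime to $q$ with ${\rm ord}_{{\Bbb Z}_{n_i}^\times}(q)$ odd; and the inclusion $p_i\in\mathcal{G}$ gives $\lambda(n_i)\ge(\log_q p_i)^2$, whence $\log_q n_i/\lambda(n_i)\le 1/\log_q p_i\to 0$ as $i\to\infty$.

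The main obstacle is the positive-density assertion for $\mathcal{O}$; this is the one non-elementary ingredient, and is precisely what is cited from \cite{FL20} (Corollary~II.8 there). Everything else is a direct combination of that density fact with the construction already contained in Remark~\ref{rk n_1,...} and the trivial reduction to $n_i$ prime, which makes the two conditions on $\lambda(n_i)$ and ${\rm ord}_{{\Bbb Z}_{n_i}^\times}(q)$ coincide.
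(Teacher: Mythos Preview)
Your argument is correct, and in fact the paper does not give its own proof of this lemma at all: it simply cites \cite[Corollary~II.8]{FL20}. Your reconstruction---take $n_i$ prime, use that the set $\mathcal{G}$ of Remark~\ref{rk n_1,...} has density~$1$, and intersect with the positive-density set $\mathcal{O}$ of primes for which $q$ has odd multiplicative order---is exactly the scheme the paper itself carries out in the proof of the companion Lemma~\ref{rk more on n_1,...}, where the density of $\mathcal{O}$ is quoted as $\tfrac{1}{3}$ from Odoni~\cite{O} (under the extra hypothesis there) and then combined with $\mathcal{G}$ in the same way. One trivial slip: you write ``$\mathbb{Z}_{n_i}^\times$ has the unique prime divisor $p_i$'' where you mean $n_i$ does.
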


\begin{theorem}
The self-orthogonal consta-dihedral codes over any finite field $F$
are asymptotically good.
\end{theorem}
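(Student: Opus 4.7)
The plan is to reuse the code construction in Theorem~\ref{thm asymptotically good} verbatim, but to specialize the sequence of lengths $n_1,n_2,\ldots$ to the one supplied by Lemma~\ref{le more on n_1,...}, which enforces the additional constraint that ${\rm ord}_{\mathbb{Z}_{n_i}^{\times}}(q)$ is odd. The point is that this extra number-theoretic condition forces a uniform structural simplification of the decomposition in Theorem~\ref{Main01}, and that simplification makes self-orthogonality automatic for \emph{every} choice of $\beta\in K^{*}$.

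More precisely, first I would fix any $\delta$ satisfying Eq.\eqref{eq delta in...}, and choose $n_1,n_2,\ldots$ as in Lemma~\ref{le more on n_1,...}; dropping finitely many terms, assume $\frac{1}{4}-h_q(\delta)-\frac{\log_q n_i}{\lambda(n_i)}>0$ for all $i$. Since ${\rm ord}_{\mathbb{Z}_{n_i}^{\times}}(q)$ is odd, Lemma~\ref{a^T=bar a}(2) gives $\overline{e_j}\ne e_j$ for every nonzero primitive idempotent $e_j$ of $FH$. Consequently, in the decomposition $\cda=A_0\oplus A_1\oplus\cdots\oplus A_m$ of Theorem~\ref{Main01}, every component $A_t$ with $t\ge 1$ falls into case~(1), i.e.\ $1_{A_t}=e+\bar e$ with $\bar e\ne e$. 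Lemma~\ref{lem C_t 1=e+bar e} then yields $\langle C_t\beta_t,\,C_t\beta_t\rangle=0$ for every $\beta_t\in K_t^\times$, without any parity assumption on $q$ or on $k_t$.

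Next I would form the consta-dihedral code $C=C_1\oplus\cdots\oplus C_m$ as in Eq.\eqref{eq.C-O} (note that $A_0$ is \emph{not} included, so the self-orthogonality issue from Lemma~\ref{F-e0} never arises). For any $\beta=e_0+\beta_1+\cdots+\beta_m\in K^{*}$, Corollary~\ref{cor C=C cap A+...}(2) gives $\langle C\beta,C\beta\rangle=\sum_{t=1}^{m}\langle C_t\beta_t,C_t\beta_t\rangle=0$, so $C\beta$ is self-orthogonal regardless of the characteristic of $F$ or divisibility properties of $q-1$. Equivalently, by part~(4) of the same corollary, $C\beta$ is self-orthogonal because each of its $A_t$-components is self-orthogonal in $A_t$.

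Finally I would invoke the argument of Theorem~\ref{thm asymptotically good} verbatim: Lemma~\ref{lem K^<delta} shows $|\mathcal{K}(C)^{\le\delta}|/|K^{*}|\to 0$ as $i\to\infty$, so some $\beta^{(i)}\in K^{*}\setminus\mathcal{K}(C)^{\le\delta}$ exists, and the resulting codes $C^{(i)}:=C\beta^{(i)}$ have length $2n_i\to\infty$, rate $\frac{1}{2}-\frac{1}{2n_i}\to\frac{1}{2}$ and relative minimum distance exceeding $\delta$. By the previous paragraph each $C^{(i)}$ is self-orthogonal, giving the required asymptotically good sequence. There is essentially no new obstacle: the only nontrivial input is Lemma~\ref{le more on n_1,...}, which produces a density-$1$ supply of lengths forcing case~(1) of Theorem~\ref{Main01} everywhere; the rest is an immediate assembly of earlier lemmas.
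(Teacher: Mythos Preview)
Your proposal is correct and follows essentially the same route as the paper: specialize to the sequence $n_1,n_2,\ldots$ from Lemma~\ref{le more on n_1,...} so that ${\rm ord}_{\mathbb{Z}_{n_i}^\times}(q)$ is odd, apply Lemma~\ref{a^T=bar a}(2) to force $\overline{e}\ne e$ for all nontrivial primitive idempotents, and then use Lemma~\ref{lem C_t 1=e+bar e} together with Corollary~\ref{cor C=C cap A+...}(4) to conclude that each $C^{(i)}=C\beta^{(i)}$ from Theorem~\ref{thm asymptotically good} is self-orthogonal. One small inaccuracy: the supply of lengths in Lemma~\ref{le more on n_1,...} is not of density~$1$ (the odd-order condition cuts the density to about $\frac{1}{3}$, cf.\ the proof of Lemma~\ref{rk more on n_1,...}), but only infinitude is needed, so this does not affect the argument.
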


\begin{proof}
Taking $n_1,n_2,\cdots$ as in Eq.\eqref{eq q odd order}.
By Lemma~\ref{a^T=bar a}(2), for the cyclic group~$H$ of order $n_i$,
any primitive idempotent $e$ of $FH$ other than $e_0$
satisfies that $\overline e\ne e$.
By Lemma~\ref {lem C_t 1=e+bar e} and Corollary~\ref{cor C=C cap A+...}(4),
the consta-dihedral code~$C^{(i)}$ with length $2n_i$ in Eq.\eqref{eq C^i}
is self-orthogonal.
\end{proof}

\begin{lemma} \label{rk more on n_1,...}
 Assume that $q$ is odd and $4\nmid(q-1)$.
Then there are positive odd integers
$n_1,n_2,\cdots$ with every $n_i$ coprime to $q$
and $n_i\to\infty$ such that
\begin{align} \label{eq T=...}
	\lim_{i\to\infty}\frac{\log_q n_i}{\lambda(n_i)}=0;\quad
	-1\in\langle q\rangle_{{\Bbb Z}_{n_i}^\times}
	\mbox{ and }
	2\,\Vert\,{\rm ord}_{{\Bbb Z}_{n_i}^\times}(q),
	~for~  \,i=1,2,\cdots, 	
\end{align}
where ``\,$2\,\Vert\,t$'' means that $2\,|\,t$ but $2^2\nmid t$.
\end{lemma}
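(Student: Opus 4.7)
The plan is to take each $n_i$ to be a prime $p_i$ satisfying three conditions: $p_i \equiv 3 \pmod{4}$; $q$ is a quadratic non-residue modulo $p_i$; and $\mathrm{ord}_{\mathbb{Z}_{p_i}^\times}(q) \ge (\log_q p_i)^2$. First I would check that the first two conditions together already force $2\,\Vert\,\mathrm{ord}_{\mathbb{Z}_{p_i}^\times}(q)$ and give $-1 \in \langle q\rangle_{\mathbb{Z}_{p_i}^\times}$: since $p_i \equiv 3 \pmod{4}$ we have $v_2(p_i - 1) = 1$; since $q$ is a non-square mod $p_i$, its order does not divide $(p_i-1)/2$, so $v_2(\mathrm{ord}_{p_i}(q)) = v_2(p_i - 1) = 1$; and then $q^{\mathrm{ord}_{p_i}(q)/2}$ is an element of order $2$ in the cyclic group $\mathbb{Z}_{p_i}^\times$, so it must equal $-1$.

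Next I would show that the set $\mathcal{P}^*$ of primes $p > q$ with $p \equiv 3 \pmod{4}$ and $q$ a non-QR modulo $p$ has positive density. Writing $q = p_0^e$ with $p_0$ an odd prime, the assumption $4 \nmid (q-1)$ forces $e$ to be odd and $p_0 \equiv 3 \pmod{4}$, so $\sqrt{q} \notin \mathbb{Q}$ and $\mathbb{Q}(i, \sqrt{q})$ is a genuine biquadratic extension of $\mathbb{Q}$ with Galois group $(\mathbb{Z}/2)^2$. By Chebotarev's density theorem, the set of unramified primes whose Frobenius is non-trivial on both subfields $\mathbb{Q}(i)$ and $\mathbb{Q}(\sqrt{q})$, i.e.\ primes inert in both, has density $1/4$; these are exactly the primes of $\mathcal{P}^*$ up to finitely many ramified exceptions.

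Then I would intersect $\mathcal{P}^*$ with the order-growth condition. Following the argument recalled in Remark \ref{rk n_1,...} (based on \cite[Lemma II.6]{FL20}), the primes $p \le t$ with $\mathrm{ord}_p(q) < (\log_q t)^2$ must divide $\prod_{k < (\log_q t)^2}(q^k - 1)$ and hence number only $o(\pi(t))$. Discarding them from $\mathcal{P}^*$ still leaves a set of density $1/4$, in particular infinite. Enumerate its elements as $p_1 < p_2 < \cdots$ and set $n_i := p_i$. Then $n_i$ is an odd prime with $\gcd(n_i, q) = 1$ and $n_i \to \infty$; by Lemma \ref{lem lambda=min}, $\lambda(n_i) = \mathrm{ord}_{p_i}(q) \ge (\log_q p_i)^2$, so $\log_q n_i / \lambda(n_i) \le 1/\log_q p_i \to 0$; and the conditions $-1 \in \langle q \rangle_{\mathbb{Z}_{n_i}^\times}$ and $2 \,\Vert\, \mathrm{ord}_{\mathbb{Z}_{n_i}^\times}(q)$ hold by the first paragraph.

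The main obstacle is identifying the arithmetic condition on $p$ that simultaneously forces $v_2(\mathrm{ord}_p(q)) = 1$ (rather than $\ge 2$ or $= 0$) and $-1 \in \langle q \rangle_{\mathbb{Z}_p^\times}$, while still cutting out a positive-density set of primes. Once the condition ``$p \equiv 3 \pmod{4}$ and $q$ is a non-QR mod $p$'' is seen to achieve both at once, the remainder is a routine Chebotarev plus order-growth argument of the same shape as in the proof of \cite[Corollary II.8]{FL20} (Lemma~\ref{le more on n_1,...} above).
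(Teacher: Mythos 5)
Your proof is correct, and it differs from the paper's in the one genuinely non-trivial ingredient: how one produces a positive-density set of primes $p$ with $2\,\Vert\,{\rm ord}_{{\Bbb Z}_p^\times}(q)$. The paper works with the set of primes for which ${\rm ord}_{{\Bbb Z}_p^\times}(q)$ is even, invokes Odoni's theorem \cite{O} to give this set density $2/3$, and intersects it with the primes $p\equiv 3\pmod 4$ via the inclusion--exclusion lower bound $2/3+1/2-1=1/6$. You instead impose ``$p\equiv 3\pmod 4$ and $q$ a non-residue mod $p$'' and compute the density of that set directly as $1/4$ by Chebotarev on the biquadratic field ${\Bbb Q}(i,\sqrt{q})$ (after correctly checking from $4\nmid(q-1)$ that $q=p_0^e$ with $e$ odd, so $\sqrt q\notin{\Bbb Q}$ and the extension really has group $({\Bbb Z}/2)^2$). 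Note that on the primes $p\equiv3\pmod4$ the two conditions actually coincide: there $v_2(p-1)=1$, so ${\rm ord}_{{\Bbb Z}_p^\times}(q)$ is even if and only if $q$ is a non-residue; so you and the paper end up selecting essentially the same primes, but your density computation is sharper (an exact $1/4$ rather than a lower bound $1/6$) and replaces the appeal to Odoni's theorem by the more standard Chebotarev (or, even more elementarily, quadratic reciprocity plus Dirichlet on progressions mod $4p_0$). Your deductions that $v_2({\rm ord}_{{\Bbb Z}_p^\times}(q))=v_2(p-1)=1$ and that $q^{{\rm ord}(q)/2}=-1$ as the unique involution of the cyclic group ${\Bbb Z}_p^\times$, and your handling of the order-growth condition via the set ${\cal G}$ of Remark~\ref{rk n_1,...} together with Lemma~\ref{lem lambda=min}, all match the paper's argument and are sound.
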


\begin{proof}
Let ${\mathcal O}=\{\,p\in{\cal P}\,|\,
\mbox{${\rm ord}_{{\Bbb Z}_p^\times}(q)$ is odd}\,\}$
and $\overline{\mathcal O}={\cal P}\backslash{\mathcal O}$.
By the assumption,
we can write $q=r^s$ for an odd prime $r$ and an odd positive integer $s$.
By \cite[Theorem 1]{O}, the density of ${\mathcal O}$ in ${\cal P}$ equals
$\frac{1}{3}$, hence the density of $\overline{\mathcal O}$ equals
$\frac{2}{3}$. On the other hand, we consider
$${\cal T}=
\big\{\,p\in{\cal P}\,\big|\, 2\,\Vert\,(p-1) %2\,|\,(p-1), \,4\nmid(p-1)
\,\big\}
 =\big\{\,p\in{\cal P}\,\big|\, p\equiv 3\!\!\!\pmod 4\,\big\}.
$$
By a Dirichlet's theorem on density
(cf. \cite[Ch.6 \S4 Theorem 2]{Serre}),
the density of ${\cal T}$ in ${\cal P}$ equals $\frac{1}{2}$.
Thus the density of $\overline{\mathcal O}\cap{\cal T}$
is at least $\frac{2}{3}+\frac{1}{2}-1=\frac{1}{6}$.
Hence the density of $\overline{\mathcal O}\cap{\cal T}\cap{\cal G}$
is at least $\frac{1}{6}$, where ${\cal G}$ is defined in Remark~\ref{rk n_1,...}.
For any $p\in\overline{\mathcal O_t}\cap{\cal T}\cap{\cal G}$,
we have that $-1\in\langle q\rangle_{{\Bbb Z}_{p}^\times}$
\big(because: ${\rm ord}_{{\Bbb Z}_{p}^\times}(q)$ is even and
$-1$ is the unique element of order $2$ in ${\Bbb Z}_p^\times$\big),
and $2\,\Vert\,{\rm ord}_{{\Bbb Z}_{p}^\times}(q)$
\big(because: ${\rm ord}_{{\Bbb Z}_{p}^\times}(q)\,|\,(p-1)$
 but $4\nmid(p-1)$\big).
Thus, there are positive odd integers
$n_1,n_2,\cdots$ with every $n_i$ coprime to $q$
and $n_i\to\infty$ such that
Eq.\eqref{eq T=...} holds.
\end{proof}

\begin{theorem} \label{thm LCD good}
Assume that $q$ is odd and $4\nmid(q-1)$ (i.e. $q\equiv 3~({\rm mod}~4)$).
Then the LCD consta-dihedral codes over $F$ are asymptotically good.
In particular, LCD quasi-cyclic codes of index $2$ over $F$
 are asymptotically good.
\end{theorem}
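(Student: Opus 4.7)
The plan is to combine three ingredients already developed in the paper: Theorem~\ref{thm asymptotically good} (which extracts consta-dihedral codes of the form $C\beta$ with large relative distance from the class constructed in Remark~\ref{rk notation Z C ...}), Theorem~\ref{thm LCD case} (which guarantees $C\beta$ is LCD whenever $n$ satisfies $-1\in\langle q\rangle_{\mathbb Z_n^\times}$ together with $2\,\Vert\,{\rm ord}_{\mathbb Z_n^\times}(q)$), and Lemma~\ref{rk more on n_1,...} (which provides a single integer sequence meeting all the required number-theoretic and density conditions at once).

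First I would fix a real number $\delta\in(0,1-q^{-1})$ with $h_q(\delta)<1/4$ and invoke Lemma~\ref{rk more on n_1,...} to produce odd positive integers $n_1,n_2,\dots$, each coprime to~$q$ and with $n_i\to\infty$, such that simultaneously
\[
\lim_{i\to\infty}\frac{\log_q n_i}{\lambda(n_i)}=0,
\qquad -1\in\langle q\rangle_{\mathbb Z_{n_i}^\times},
\qquad 2\,\Vert\,{\rm ord}_{\mathbb Z_{n_i}^\times}(q).
\]
For each $n_i$, I would form the corresponding consta-dihedral algebra $\cda$ with $|G|=2n_i$, take the code $C=C_1\oplus\cdots\oplus C_m$ as in Eq.\eqref{eq.C-O}, and construct $C^{(i)}=C\beta^{(i)}$ by the same argument used in the proof of Theorem~\ref{thm asymptotically good}: discarding finitely many small terms so that $\tfrac14-h_q(\delta)-\tfrac{\log_q n_i}{\lambda(n_i)}>0$ for all $i$, Lemma~\ref{lem K^<delta} yields $|\mathcal K(C)^{\le\delta}|/|K^*|\to 0$, so one can pick $\beta^{(i)}\in K^*\setminus \mathcal K(C)^{\le\delta}$. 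The resulting $C^{(i)}$ has length $2n_i$, rate $\tfrac12-\tfrac{1}{2n_i}$, and relative minimum distance $>\delta$.

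Now the key point is that the same $n_i$ were already arranged to satisfy the two number-theoretic hypotheses of Theorem~\ref{thm LCD case}. Therefore $C^{(i)}=C\beta^{(i)}$ is an LCD consta-dihedral code for every~$i$. Since its rate tends to $1/2$ and $\Delta(C^{(i)})>\delta$ uniformly, the sequence $C^{(1)},C^{(2)},\dots$ is an asymptotically good family of LCD consta-dihedral codes, proving the first statement.

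For the ``in particular'' assertion, I would use the $FH$-module decomposition $\cda=FH\oplus FH\dot v$ (Eq.\eqref{eq FH oplus ...}): every left ideal of $\cda$ is in particular a left $FH$-submodule of the free module $FH^{\oplus 2}$, i.e., a quasi-cyclic code of index~$2$ over~$F$ with respect to the cyclic group $H$, and the standard inner product on $F^{2n_i}$ used to define LCD-ness is the same in both pictures. Hence each $C^{(i)}$ is also an LCD quasi-cyclic code of index~$2$, yielding the secondary conclusion. There is essentially no single hard step here; the only delicate coordination is arranging the \emph{same} sequence $\{n_i\}$ to serve both the distance estimate and the algebraic LCD criterion, and this was already accomplished in Lemma~\ref{rk more on n_1,...} via the positive-density intersection $\overline{\mathcal O}\cap\mathcal T\cap\mathcal G$.
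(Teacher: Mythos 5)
Your proposal is correct and follows essentially the same route as the paper: take the sequence $n_1,n_2,\dots$ from Lemma~\ref{rk more on n_1,...}, apply the selection argument of Theorem~\ref{thm asymptotically good} (via Lemma~\ref{lem K^<delta}) to get codes $C^{(i)}=C\beta^{(i)}$ with rate near $1/2$ and relative distance $>\delta$, and invoke Theorem~\ref{thm LCD case} on the same $n_i$ to conclude these codes are LCD, with the quasi-cyclic identification coming from Eq.\eqref{eq FH oplus ...}. No gaps.
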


\begin{proof}
Take $n_1,n_2,\cdots$ as in Eq.\eqref{eq T=...}.
By Theorem~\ref{thm LCD case},
the  $C^{(i)}$ with length~$2n_i$ in Eq.\eqref{eq C^i}
is an LCD consta-dihedral code.
Thus the LCD consta-dihedral code sequence Eq.\eqref{eq C^i} is asymptotically good.
By Eq.\eqref{eq FH oplus ...}, any consta-cyclic code %(left ideal of $\cda$)
is a quasi-cyclic code of index $2$. So the ``In particular'' part holds.
\end{proof}

\subsection{Self-dual consta-dihedral codes}

In this subsection we always assume that
$q$ is even or $4\,|\,(q-1)$, i.e., $q\,{\not\equiv}\,3~({\rm mod}~4)$.
Keep the notation in Theorem~\ref{thm self-dual case}:
\begin{itemize}
\item
  $\widehat C=C_0\oplus C=C_0\oplus C_1\oplus\cdots\oplus C_m$
where $C_0=A_{0}(re_{0}+e_{0}\dot v)$ is defined in Lemma~\ref{F-e0}(2);
\item
For any $\beta=e_0+\beta_1+\cdots+\beta_m\in K^*$,
the consta-dihedral code
$$
 \widehat C\beta=C_0\oplus C_1\beta_1\oplus\cdots\oplus C_m\beta_m
$$
is self-dual; in particular, the rate ${\rm R}(\widehat C\beta)=\frac{1}{2}$.
\end{itemize}
We will find the $\beta$ such that
the relative minimal distance $\Delta(\widehat C\beta)>\delta$.

 Note that $\cda=A_0\oplus A$.
For any $\widehat d=d_0+d\in \cda$
with $d_0\in A_0$ and $d\in A$,
if $d_0\notin C_0$, then $\widehat d\notin \widehat C\beta$
for any $\beta\in K^*$.

\begin{lemma}  \label{lem K_hat d}
Assume that $0\ne\widehat d=d_0+d\in C_0\oplus A$
with $d_0\in C_0$ and $d\in A$.
Set $\mathcal K (\widehat C)_{\widehat d}
 =\{\beta\in K^*\;|\;\widehat d\in\widehat C\beta \}$.
Then
$$
\big|\mathcal K(\widehat C)_{\widehat d}\big|\le |K^*|\big/q^{\ell_d}.
$$
where $\ell_d$ is defined in Eq.\eqref{eq ell_d<=}.
\end{lemma}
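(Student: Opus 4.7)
The plan is to reduce this directly to Lemma~\ref{lem K_d}. The key observation is that $C_0$ is a fixed ideal (it does not get multiplied by any coordinate of $\beta$, since the $A_0$-coordinate of $\beta$ is just $e_0$), so the condition $\widehat d \in \widehat C\beta$ decouples between the $A_0$ part and the $A$ part.

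First I would write $\widehat C\beta = C_0 \oplus C_1\beta_1 \oplus \cdots \oplus C_m\beta_m$ and use the $A_0\oplus A$ decomposition to observe that $\widehat d = d_0 + d$ belongs to $\widehat C\beta$ if and only if $d_0 \in C_0$ (which is given by hypothesis) and $d \in C_1\beta_1\oplus\cdots\oplus C_m\beta_m = C\beta$. This equivalence uses the uniqueness of decomposition along the orthogonal sum $\cda = A_0\oplus A_1\oplus\cdots\oplus A_m$ from Theorem~\ref{Main01} together with Corollary~\ref{cor C=C cap A+...}(1). Consequently, the counting set satisfies
\[
\mathcal K(\widehat C)_{\widehat d} \;=\; \{\beta\in K^* \mid d\in C\beta\} \;=\; \mathcal K(C)_d,
\]
with the latter being the set considered in Lemma~\ref{lem K_d}.

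Next I would split into two cases. If $d\ne 0$, Lemma~\ref{lem K_d} applies verbatim and gives $|\mathcal K(\widehat C)_{\widehat d}| = |\mathcal K(C)_d| \le |K^*|/q^{\ell_d}$. If $d = 0$, then (since $\widehat d\ne 0$) we must have $d_0 \ne 0$; every $\beta\in K^*$ trivially satisfies $d\in C\beta$, so $|\mathcal K(\widehat C)_{\widehat d}| = |K^*|$, and with the convention $\ell_0 = 0$ coming from $\omega_d = \emptyset$ in Eq.\eqref{eq ell_d<=}, the bound $|K^*|/q^{\ell_d} = |K^*|$ holds with equality.

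There is no real obstacle here: the work was already done in Lemma~\ref{lem K_d}, and the only subtlety is to confirm that adjoining the fixed ideal $C_0$ to the code does not change the counting of admissible $\beta$'s. The proof is essentially a one-line reduction after unpacking the direct-sum decomposition.
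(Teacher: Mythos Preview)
Your proposal is correct and follows essentially the same route as the paper: the paper's proof is the single observation that $\widehat d\in\widehat C\beta$ if and only if $d\in C\beta$, after which Lemma~\ref{lem K_d} is invoked. Your treatment of the edge case $d=0$ (with the convention $\ell_0=0$) is more careful than the paper, which tacitly relies on $\ell_d$ being defined via Eq.~\eqref{eq ell_d<=} and hence implicitly assumes $d\ne 0$; either way the bound is trivial in that case and the lemma is only applied later with $\ell_d\ge k_1>0$.
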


\begin{proof}
It is clear that $\widehat d\in\widehat C\beta$ if and only if
$d\in C\beta$. So this lemma follows from Lemma~\ref{lem K_d}
immediately.
\end{proof}

\begin{lemma} \label{lem hat K^<delta}
Let ${\mathcal K(\widehat C)}^{\!\le\delta}=
\{\,\beta\in K^*\,|\,\Delta(\widehat C\beta)\le\delta\,\}$.
If $\frac{1}{4}-h_q(\delta)-\frac{\log_q n}{\lambda(n)}>0$,
then
\begin{align} \label{eq hat K^<=...}
		\big|\mathcal K(\widehat C)^{\le\delta}\big| \le |K^{*}|\!\cdot\!
		q^{-2\lambda(n)\big(\frac{1}{4}- h_q(\delta)
			-\frac{\log_q n}{\lambda(n)}\big)+h_q(\delta)}.
\end{align}
\end{lemma}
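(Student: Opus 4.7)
The plan is to adapt the argument of Lemma~\ref{lem K^<delta} so as to absorb the extra $A_0$-summand present in $\widehat C$. First, any $\widehat d\in\widehat C\beta$ decomposes uniquely as $\widehat d=d_0+d$ with $d_0\in A_0$ and $d\in A$; necessarily $d_0\in C_0$, and the membership $\widehat d\in\widehat C\beta$ is equivalent to $d\in C\beta$. So for $d\ne 0$, Lemma~\ref{lem K_hat d} supplies $|\mathcal K(\widehat C)_{\widehat d}|\le|K^*|/q^{\ell_d}$, exactly as in the previous lemma.

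The potentially dangerous case $d=0$ must be disposed of at the outset, since otherwise $|\mathcal K(\widehat C)_{\widehat d}|=|K^*|$. Because $C_0$ is one-dimensional over $F$ and generated by $re_0+e_0\dot v=\tfrac{r}{n}\sum_i u^i+\tfrac{1}{n}\sum_i u^i\dot v$, which has all $2n$ coordinates on the standard basis \eqref{eq F*G basis} nonzero, every nonzero element of $C_0$ has Hamming weight exactly $2n$. Since $\delta<1-q^{-1}<1$, no nonzero $\widehat d$ with $d=0$ satisfies $w(\widehat d)\le 2n\delta$, so this case contributes nothing to $\mathcal K(\widehat C)^{\le\delta}$.

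For the main count, I would group the eligible pairs $(d_0,d)$ by $\ell=\ell_d$ and by the ambient ideal $A_\omega\in\mathcal A_\ell$, exactly as in the previous proof. The key structural observation is that $C_0\oplus A_\omega$ is a left ideal of $\cda$ (being a sum of left ideals), hence a consta-dihedral code of $F$-dimension $1+4\ell$, so Lemma~\ref{q01} yields $|(C_0\oplus A_\omega)^{\le\delta}|\le q^{(1+4\ell)h_q(\delta)}$. The bijection $(d_0,d)\leftrightarrow d_0+d$ from $C_0\times A_\omega$ onto $C_0\oplus A_\omega$ then upgrades the estimate $|\mathcal D_\ell|\le n^{\ell/k_1}q^{4\ell h_q(\delta)}$ used in Lemma~\ref{lem K^<delta} to $n^{\ell/k_1}q^{(1+4\ell)h_q(\delta)}$, which is just the old bound multiplied by the constant factor $q^{h_q(\delta)}$.

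Combining this with $|\mathcal K(\widehat C)_{\widehat d}|\le|K^*|/q^{\ell}$ and running precisely the same geometric-series estimate as in Lemma~\ref{lem K^<delta} reproduces its conclusion, except with the overall factor $q^{h_q(\delta)}$ pulled out of the sum, yielding the right-hand side of \eqref{eq hat K^<=...}. The main subtlety to check is the disposal of the $d=0$ case, which is genuinely special to the self-dual construction and hinges on the fact that the single nonzero direction in $C_0$ has full support; once this and the fact that $C_0\oplus A_\omega$ is a consta-dihedral code of dimension $1+4\ell$ are in hand, the rest is a mechanical adaptation of the earlier argument.
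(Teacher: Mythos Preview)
Your proposal is correct and follows essentially the same route as the paper's own proof: group low-weight words $\widehat d=d_0+d$ by $\ell=\ell_d$, bound each $\widehat{\mathcal D}_\ell$ via $(C_0\oplus A_\omega)^{\le\delta}$ using Lemma~\ref{q01} with $\dim_F(C_0\oplus A_\omega)=4\ell+1$, then invoke Lemma~\ref{lem K_hat d} and sum exactly as in Lemma~\ref{lem K^<delta}, picking up the extra factor $q^{h_q(\delta)}$. Your explicit disposal of the case $d=0$ (via the full support of $re_0+e_0\dot v$) is a worthwhile clarification that the paper leaves implicit when asserting $\mathcal K(\widehat C)^{\le\delta}=\bigcup_{\ell\ge k_1}\bigcup_{\widehat d\in\widehat{\mathcal D}_\ell}\mathcal K(\widehat C)_{\widehat d}$.
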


\begin{proof}
For $k_1\le\ell\le\frac{n-1}{2}$,
we extend the notation ${\cal D}_\ell$ in Eq.\eqref{eq A_ell=} and set
\begin{align*}
\widehat{\cal D}_\ell
=\big\{\,\widehat d=d_0+d\;\big|\; d_0\in C_0,\,d\in A, \,
	0<{\rm w}(\widehat d\,)/2n\le\delta,\,\ell_d=\ell\,\big\}.
\end{align*}
With ${\cal A}_\ell$ defined in Eq.\eqref{eq A_ell=}, we have that
\begin{align} \label{eq hat D_ell}
\widehat{\cal D}_\ell\subseteq \bigcup_{A_\omega\in{\cal A}_\ell}
		(C_0\oplus A_\omega)^{\!\le\delta}
\quad\mbox{and}\quad
{\mathcal K(\widehat C)}^{\!\le\delta}=
\bigcup_{\ell=k_1}^{(n-1)/2}
\bigcup_{\widehat d\in\widehat{\cal D}_\ell}
  {\mathcal K(\widehat C)}_{\widehat d}.
\end{align}
For $A_\omega\in{\cal A}_\ell$,
we have that
$|(C_0\oplus{\cal A}_\omega)^{\!\le\delta}|\le q^{(4\ell+1) h_q(\delta)}$
since $\dim_{F}(C_0\oplus A_\omega)=4\ell+1$;
see Lemma~\ref{q01}.
By Eq.\eqref{eq hat D_ell} and Eq.\eqref{eq |cal A|<=}, we have
\begin{align*}
|\widehat{\cal D}_\ell|&\le \sum_{A_\omega\in{\cal A}_\ell}
|(C_0\oplus{\cal A}_\omega)^{\!\le\delta}|
\le n^{\frac{\ell}{k_1}} q^{(4\ell+1) h_q(\delta)}
		=q^{4\ell h_q(\delta)+\frac{\ell\log_q n}{k_1}+h_q(\delta)}.
\end{align*}
For $\widehat d=d_0+d\in\widehat D_\ell$, we have $\ell_d=\ell$.
By Eq.\eqref{eq hat D_ell} and Lemma~\ref{lem K_hat d},
\begin{align*}
&\big|{\mathcal K(\widehat C)}^{\!\le\delta}\big|
\le \sum_{\ell=k_1}^{(n\!-\!1)/2}\!\sum_{\widehat d\in\widehat{\cal D}_\ell}
\big|{\mathcal K(\widehat C)}_{\widehat d}\big|
\le\sum_{\ell=k_1}^{(n\!-\!1)/2}\!\sum_{\widehat d\in\widehat{\cal D}_\ell}
|K^*|\big/q^{\ell}\\
&=\sum_{\ell=k_1}^{(n\!-\!1)/2}\! |\widehat D_\ell|\!\cdot\!|K^*|\big/q^{\ell}
\le \sum_{\ell=k_1}^{(n-1)/2}\!\! |K^{\!*}|\!\cdot\!
q^{4\ell h_q(\delta)+\frac{\ell\log_q n}{k_1}+h_q(\delta)}/q^\ell.
\end{align*}
Therefore,
\begin{align*}
\big|{\mathcal K(\widehat C)}^{\!\le\delta}\big|
&=\sum_{\ell=k_1}^{(n-1)/2}\!\! |K^{\!*}|\!\cdot\!
q^{-4\ell\big(\frac{1}{4}- h_q(\delta)-\frac{\log_q n}{4k_1}\big)
+h_q(\delta)}\\
&\le \sum_{\ell=k_1}^{(n-1)/2} \! |K^{\!*}|\!\cdot\!
q^{-4k_1\big(\frac{1}{4}- h_q(\delta)-\frac{\log_q n}{4k_1}\big)
+h_q(\delta)}.
\end{align*}
By the same argument for Lemma~\ref{lem K^<delta},
we can get
$$\sum_{\ell=k_1}^{(n-1)/2}
q^{-4k_1\big(\frac{1}{4}- h_q(\delta)-\frac{\log_q n}{4k_1}\big)
+h_q(\delta)}
\leq q^{-2\lambda(n)\big(\frac{1}{4}- h_q(\delta)-\frac{\log_q n}{\lambda(n)}\big)
+h_q(\delta)}.$$
We are done.
\end{proof}

\begin{theorem} \label{thm self-dual good1}
Assume that $q$ is even or $4\,|\,(q-1)$.
Let $\delta$ be as in Eq.\eqref{eq delta in...}, and $n_1,n_2,\cdots$
as in Eq.\eqref{eq n_1,...}. Then
there are self-dual consta-dihedral codes
$\widehat C^{(i)}$ of length $2n_i$ %, for $i=1,2,\cdots$,
such that $\Delta(\widehat C^{(i)})>\delta$ for all $i=1,2,\cdots$;
hence the code sequence $\widehat C^{(1)}, \widehat C^{(2)},\cdots$
is asymptotically good.
\end{theorem}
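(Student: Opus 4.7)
The plan is to follow the template set by Theorem~\ref{thm asymptotically good}, but replacing $C\beta$ with the enlarged code $\widehat{C}\beta = C_0 \oplus C_1\beta_1 \oplus \cdots \oplus C_m\beta_m$. The key ingredients are already in hand: Theorem~\ref{thm self-dual case} guarantees that $\widehat{C}\beta$ is self-dual for every $\beta \in K^*$ (so its rate is automatically $1/2$), and Lemma~\ref{lem hat K^<delta} gives a quantitative upper bound on the number of ``bad'' $\beta$'s, i.e., those for which $\Delta(\widehat{C}\beta) \le \delta$.

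First I would exploit the assumption $h_q(\delta) < 1/4$ in Eq.\eqref{eq delta in...} together with the hypothesis $\log_q n_i/\lambda(n_i) \to 0$ to discard finitely many initial terms of the sequence $n_1, n_2, \ldots$ and assume throughout that
$$
\tfrac{1}{4} - h_q(\delta) - \tfrac{\log_q n_i}{\lambda(n_i)} > 0, \qquad i = 1, 2, \ldots.
$$
This puts us in position to apply Lemma~\ref{lem hat K^<delta} with $n = n_i$, yielding
$$
\frac{|\mathcal{K}(\widehat{C})^{\le\delta}|}{|K^*|} \,\le\, q^{-2\lambda(n_i)\bigl(\frac{1}{4} - h_q(\delta) - \frac{\log_q n_i}{\lambda(n_i)}\bigr) + h_q(\delta)}.
$$
Since $n_i \to \infty$ forces $\lambda(n_i) \to \infty$ (via $\log_q n_i/\lambda(n_i) \to 0$), the exponent tends to $-\infty$; the bounded additive term $h_q(\delta)$ is harmless. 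Hence the ratio tends to $0$, and for each sufficiently large $i$ we may choose some $\beta^{(i)} \in K^* \setminus \mathcal{K}(\widehat{C})^{\le\delta}$.

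Setting $\widehat{C}^{(i)} = \widehat{C}\beta^{(i)}$, it remains to verify the three properties: the length $2n_i \to \infty$ by construction; $\widehat{C}^{(i)}$ is self-dual by Theorem~\ref{thm self-dual case}, so $\mathrm{R}(\widehat{C}^{(i)}) = 1/2$; and $\Delta(\widehat{C}^{(i)}) > \delta$ since $\beta^{(i)} \notin \mathcal{K}(\widehat{C})^{\le\delta}$. Both rate and relative minimum distance are positively bounded from below, so the sequence $\widehat{C}^{(1)}, \widehat{C}^{(2)}, \ldots$ is asymptotically good.

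The proof is essentially mechanical once Lemma~\ref{lem hat K^<delta} is available; no genuine obstacle remains. The only subtle point is the extra additive term $h_q(\delta)$ in the exponent of Eq.\eqref{eq hat K^<=...} (arising from the one extra $F$-dimension contributed by the component $C_0 \subseteq A_0$), but this is a fixed constant and is dominated by the factor $-2\lambda(n_i)(\tfrac{1}{4} - h_q(\delta))$ whose absolute value goes to infinity with $i$.
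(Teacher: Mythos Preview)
Your proposal is correct and follows essentially the same approach as the paper: apply Lemma~\ref{lem hat K^<delta} with $n=n_i$, use Eq.\eqref{eq n_1,...} to drive the bound $|\mathcal K(\widehat C)^{\le\delta}|/|K^*|\to 0$, pick $\beta^{(i)}\in K^*\setminus\mathcal K(\widehat C)^{\le\delta}$, and invoke Theorem~\ref{thm self-dual case} for self-duality. Your version is in fact slightly more explicit than the paper's (you spell out the discarding of finitely many initial $n_i$ and why $\lambda(n_i)\to\infty$), but the argument is the same.
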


\begin{proof}
In Lemma~\ref{lem hat K^<delta}, we set $n=n_i$, so
$$
\lim_{i\to\infty}\frac{\big|\mathcal K(\widehat C)^{\le\delta}\big|}{|K^*|}
\le \lim_{i\to\infty}q^{-2\lambda(n_i)\big(\frac{1}{4}
	-h_q(\delta)-\frac{\log_q n_i}{\lambda(n_i)}\big)+h_q(\delta)}
= 0.
$$
We can take
$\beta^{(i)}\in K^*\backslash \mathcal K(\widehat C)^{\le\delta}$ for $i=1,2,\cdots$.
Set $\widehat C^{(i)}=\widehat C\beta^{(i)}$.
Then $\widehat C^{(i)}$ is a self-dual consta-dihedral code
of length $2n_i$ and the code sequence
\begin{align}\label{eq hat C^i}
 \widehat C^{(1)}, \, \widehat C^{(2)},\,\cdots
\end{align}
satisfies that $\Delta(\widehat C^{(i)})>\delta$ for $i=1,2,\cdots$.
%Thus, self-dual consta-dihedral codes are asymptotically good.
\end{proof}

We get the following immediately (cf. the proof of Theorem~\ref{thm LCD good}).

\begin{theorem} \label{thm self-dual good}
Assume that $q$ is even or $4\,|\,(q-1)$ (i.e. $q\,{\not\equiv}\,3~({\rm mod}~4)$).
Then the self-dual consta-dihedral codes over $F$ are asymptotically good.
In particular, self-dual quasi-cyclic codes of index $2$ over $F$
 are asymptotically good.
\end{theorem}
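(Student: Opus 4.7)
The plan is to derive this theorem as an immediate consequence of Theorem~\ref{thm self-dual good1}, which already does the heavy lifting: it constructs, for any $\delta$ satisfying Eq.\eqref{eq delta in...} and any sequence $n_1,n_2,\dots$ chosen as in Eq.\eqref{eq n_1,...}, self-dual consta-dihedral codes $\widehat C^{(i)}$ of length $2n_i$ with relative minimum distance $\Delta(\widehat C^{(i)})>\delta$. Since each $\widehat C^{(i)}$ is self-dual, its rate is exactly $\tfrac12$, independent of $i$. The length $2n_i\to\infty$ because $n_i\to\infty$. Thus both ${\rm R}(\widehat C^{(i)})$ and $\Delta(\widehat C^{(i)})$ are positively bounded from below while the length diverges, which is precisely the definition of an asymptotically good code sequence.

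For the ``In particular'' assertion about self-dual quasi-cyclic codes of index $2$, I would invoke the $FH$-module decomposition Eq.\eqref{eq FH oplus ...}, namely $\cda = FH \oplus FH\dot v$. This shows that $\cda$ is an $FH$-module that is free of rank $2$ (with basis $\{1,\dot v\}$), so every left $\cda$-ideal, being in particular an $FH$-submodule of $FH\oplus FH\dot v$, is a quasi-cyclic code of index $2$ in the natural sense defined in Section~\ref{preliminaries}. The orthogonality and the inner product are both computed coordinate-wise with respect to the standard basis Eq.\eqref{eq F*G basis}, and the same basis underlies the quasi-cyclic interpretation, so the self-dual property is preserved when passing from the consta-dihedral to the quasi-cyclic viewpoint. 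Hence the sequence $\widehat C^{(1)},\widehat C^{(2)},\dots$ is simultaneously a sequence of self-dual quasi-cyclic codes of index $2$, establishing the second assertion.

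Since every step reduces to already-proved statements, I do not expect any real obstacle here; the only point requiring a little care is verifying that the identification of a consta-dihedral code with a quasi-cyclic code of index $2$ preserves both the Hamming metric and the standard inner product, but this is immediate from the fact that in both settings the coordinates are indexed by the same basis Eq.\eqref{eq F*G basis} and the inner product is the one inherited from $F^{2n}$.
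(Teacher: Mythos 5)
Your proposal is correct and follows essentially the same route as the paper: the paper also obtains this theorem immediately from Theorem~\ref{thm self-dual good1}, and handles the ``In particular'' part exactly as you do, by observing via Eq.\eqref{eq FH oplus ...} that any consta-dihedral code is naturally a quasi-cyclic code of index $2$ (cf.\ the proof of Theorem~\ref{thm LCD good}). Your extra care in checking that the identification preserves the Hamming metric and the inner product is a reasonable elaboration of what the paper leaves implicit.
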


It is known that if $q$ is odd then LCD dihedral codes of rate $\frac{1}{2}$
are asymptotically good, see \cite[Theorem 1.2]{FL20}.
We get the following consequence.

\begin{corollary} \label{cor q equiv 1 mod 4}
If $q\equiv 1~({\rm mod}~4)$, then the
self-dual quasi-cyclic codes of index~$2$ over $F$ and
the LCD quasi-cyclic codes of index $2$ over $F$ of rate $\frac{1}{2}$
are both asymptotically good.
\end{corollary}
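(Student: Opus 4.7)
The plan is to treat the two conclusions separately; each reduces to a result already on record, so the work is essentially organizational rather than technical.

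For the self-dual half, I would observe that $q\equiv 1\pmod 4$ is exactly the condition $4\mid(q-1)$, so the hypothesis of Theorem~\ref{thm self-dual good} is met and its ``in particular'' clause delivers the asymptotic goodness of self-dual quasi-cyclic codes of index~$2$ without any additional argument.

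For the LCD half, the key move is to invoke \cite[Theorem~1.2]{FL20}: since $q\equiv 1\pmod 4$ forces $q$ to be odd, that theorem supplies an asymptotically good sequence of LCD dihedral codes over $F$ of rate $1/2$. It then remains to transfer this to quasi-cyclic codes of index~$2$. I would use the $FH$-module decomposition $FG=FH\oplus FHv$ valid for the ordinary dihedral group $G=H\rtimes\langle v\rangle$ (the $v^2=1$ analogue of Eq.~\eqref{eq FH oplus ...}): any left ideal of $FG$ is in particular an $FH$-submodule of $FH\oplus FHv\cong FH\times FH$, hence a quasi-cyclic code of index~$2$. Since this identification preserves length, dimension, minimum Hamming weight, and the LCD property, the asymptotically good sequence of LCD dihedral codes of rate $1/2$ produces an asymptotically good sequence of LCD quasi-cyclic codes of index~$2$ of rate $1/2$.

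No substantive obstacle is anticipated. The only point requiring care is the verification that the left-regular $FH$-action on $FG$ coincides, under the bijection $a+bv\mapsto(a,b)$, with the natural quasi-cyclic structure on $FH\times FH$; this identification is the same one that was used (without further comment) in the ``in particular'' clauses of Theorems~\ref{thm LCD good} and~\ref{thm self-dual good} in the consta-dihedral setting, and it transfers verbatim to the untwisted dihedral case.
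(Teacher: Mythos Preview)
Your proposal is correct and follows essentially the same route as the paper: the self-dual half is read off from Theorem~\ref{thm self-dual good}, and the LCD half is obtained by citing \cite[Theorem~1.2]{FL20} (LCD dihedral codes of rate $\tfrac12$ are asymptotically good for odd $q$) together with the observation that dihedral codes are quasi-cyclic of index~$2$ via $FG=FH\oplus FHv$. The paper states this as an immediate consequence without spelling out the $FH$-module identification you describe; your added remarks on that point are sound and mirror the argument used for the ``in particular'' clauses of Theorems~\ref{thm LCD good} and~\ref{thm self-dual good}.
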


\section{Remarks on dihedral codes}\label{Remarks on}

The purpose of this section is to correct some mistakes in~\cite{FL20}.
We begin with a subtle remark.
\begin{remark} \rm
Lemma~\ref{dual-C}(5) (i.e., \cite[Lemma II.4(5)]{FL20}) provides
an efficient technique to evaluate the orthogonality of group codes, i.e.,
for $FG$-codes $C, D$,
\begin{align} \label{eq C bar D}
\langle C,D\rangle=0 ~~~ \iff ~~~ C\overline D=0.
\end{align}
However, a subtle point is that the following is {\em incorrect}:
\begin{align} \label{eq bar D C}
\langle C,D\rangle=0 ~~~ \implies ~~~ \overline D C=0.
\end{align}
Here is a counterexample for Eq.\eqref{eq bar D C}.

\medskip
{\bf Example}.
Take $|F|=7$, $n=3$,
$G=\langle u,v\,|\,u^3=1=v^2, vuv^{-1}=u^{-1}\rangle
=H\rtimes\langle v\rangle$ be the dihedral group of order $6$,
where $H=\{1,u,u^2\}$ is the cyclic group of order $3$.
Then $\frac{1}{3}$ equals 5 in $F$
and 2, 4 are primitive $3$'th roots of unity.
 $e_0, e, \bar e$ are all primitive idempotents of $FH$, where
$$
e_0=5(1+u+u^2), \quad e=5(1+2u+4u^2),\qquad  \bar e=5(1+4u+2u^2).
$$
Denote $FG=A_0\oplus A_1$, where
$$ A_0=FG e_0,
\quad
 A_1=FG(e+\bar e)=FHe\oplus FH\bar e\oplus FHev\oplus FH\bar e v,
$$
are minimal ideals of $FG$.
Take $C=D=A_1e=FHe\oplus FH\bar ev$. Then
$$C\overline D=A_1e\cdot\overline{A_1e}
=A_1e\bar e\overline A_1=A_10\,\overline A_1=0.
$$
By Eq.\eqref{eq C bar D}, $\langle C,D\rangle=0$.
However, $\overline D C=\bar e\overline A_1\cdot A_1e=\bar eA_1e \neq 0$,
because we can choose an element
$\bar e(\bar e v)e\in\bar eA_1e$
and
$\bar e(\bar e v)e=\bar e v e =\bar e \bar e v=\bar e v\ne 0$.
\end{remark}

\begin{remark} \label{rk issue} \rm
Turn to the mistakes of \cite{FL20}.
The main issue in~\cite{FL20} is that
\begin{itemize}
\item[(I)]
In the proofs of \cite[Theorem~IV.3]{FL20} and \cite[Theorem~IV.5]{FL20},
some citations of \cite[Lemma II.4(5)]{FL20} (i.e., Eq.\eqref{eq C bar D}) are in fact
misuses of the incorrect version Eq.\eqref{eq bar D C}.
\end{itemize}
We first show the effects of the issue, then explain how to address it.

\cite[Theorem IV.3]{FL20} considers the case that ${\rm char}(F)=2$.
Though the incorrect Eq.\eqref{eq bar D C} was misused in its proof,
\cite[Theorem IV.3]{FL20} is itself correct.
Because: if ${\rm char}(F)=2$,
then $-1=1$ and the consta-dihedral group algebra
is identified with the dihedral group algebra,
i.e., $\cda=FG$, cf. Eq.\eqref{eq def F*G};
hence all the results in this paper
are applied to $FG$ and to dihedral codes provided ${\rm char}(F)$ is even.
Thus, \cite[Theorem~IV.3]{FL20} is a consequence of
Theorem~\ref{thm self-dual case} (by taking even $q$) of this paper.

\cite[Theorem IV.5]{FL20} considers the case that ${\rm char}(F)$ is odd,
and consists of two parts:
(1) the case that ${\rm ord}_{{\Bbb Z}_n^\times}(q)$ is odd;
(2) the case that $-1\in\langle q\rangle_{{\Bbb Z}_n^\times}$.
For \cite[Theorem~IV.5(2)]{FL20},
though there were gaps in the proof, its conclusion is still correct
and proved in \cite[Lemma 8.6(2)]{FL22b}.
For \cite[Theorem~IV.5(1)]{FL20}
(the case that ${\rm char}(F)$ is odd
and ${\rm ord}_{{\Bbb Z}_n^\times}(q)$ is odd), however,
it is unlucky that the misuse of the incorrect Eq.\eqref{eq bar D C}
results in an incorrect conclusion.
\end{remark}

\begin{remark}\label{rk by matrix} \rm
The issue described in Remark~\ref{rk issue} implies that
on some occasions %in some situations
Eq.\eqref{eq C bar D} is not enough to recognize
orthogonality of group codes. In this paper we developed a technique
to recognize the orthogonality of group codes by matrix computations,
e.g., see the proofs of Lemma~\ref{lem C_t 1=e+bar e} and
Lemma~\ref{lem C_t 1=e}.
That is one of the contributions of this paper.
By this technique,
to look for a correct version of \cite[Theorem IV.5(1)]{FL20},
we begin with the dihedral group algebra version of
Lemma~\ref{A-M bar e not e}.
\end{remark}

Keep the assumption Eq.\eqref{eq G dihedral} and Eq.\eqref{eq FH=...}.

\begin{lemma}\label{dihedral bar e not e}
Let e be a primitive idempotent of $FH$ with $\overline e\ne e$.
Then $\tilde F:=FHe$ is a field extension over F,  $e+\overline e$
is a primitive central idempotent of $FG$ and:

{\bf(1)}
The ideal $FG(e+\overline{e})
=F He\oplus FH \bar e\oplus FH e v\oplus FH\bar e v
\cong{\rm M}_2(\tilde F)$.

{\bf(2)} With the isomorphism in (1),
if $f\in FG(e+\overline{e})$ corresponds to the matrix
$\begin{pmatrix}a_{11}& a_{12}\\ a_{21} & a_{22}\end{pmatrix}
\in{\rm M}_2(\tilde F)$, then
$\overline f$ corresponds to the matrix
$\begin{pmatrix}a_{22}& a_{12}\\ a_{21} & a_{11}\end{pmatrix}$.
\end{lemma}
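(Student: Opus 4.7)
The plan is to mirror the proof of Lemma~\ref{A-M bar e not e} almost verbatim, replacing the consta-dihedral relations $\dot v^2 = -1$ and $\overline{\dot v} = -\dot v$ by the dihedral relations $v^2 = 1$ and $\bar v = v$. All the minus signs in the off-diagonal entries will then disappear, which is precisely the sign pattern asserted in~(2).

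First I would verify that $e + \bar e$ is central in $FG$: it commutes with $FH$ because $FH$ is commutative, and since $va = \bar a v$ for every $a \in FH$ (a consequence of $vuv^{-1} = u^{-1}$), one computes $v(e+\bar e) = \bar e v + e v = (e+\bar e)v$. Combined with $FG = FH \oplus FHv$, this yields the direct-sum decomposition stated in~(1), and $\tilde F = FHe$ is a field because $e$ is a primitive idempotent of the semisimple commutative algebra $FH$.

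Next I would define the candidate isomorphism by
\[
  \begin{pmatrix} a_{11} & a_{12} \\ a_{21} & a_{22} \end{pmatrix}
  \longmapsto a_{11} e + a_{12}\, ev + \overline{a_{21}}\, \bar e v + \overline{a_{22}}\, \bar e,
\]
which is an $F$-linear bijection by dimension count ($4\dim_F\tilde F$ on each side). The multiplicativity check runs exactly as in the consta-dihedral setting, using $e^2 = e$, $\bar e^2 = \bar e$, $e\bar e = 0$, $va = \bar a v$, and now $v^2 = 1$ in place of $\dot v^2 = -1$; this is where the minus signs are erased. For~(2) I would apply the bar map termwise using $\overline{xy} = \bar y \bar x$ and $\bar v = v$, then collect the four resulting summands back into the basis $\{e,\, ev,\, \bar e v,\, \bar e\}$; the outcome is $a_{22} e + a_{12}\, ev + \overline{a_{21}}\, \bar e v + \overline{a_{11}}\, \bar e$, which under the isomorphism of~(1) is exactly the matrix asserted.

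The only real bookkeeping hazard, as in Lemma~\ref{A-M bar e not e}, is tracking which conjugates lie in $FHe$ versus $FH\bar e$, so that the identifications $a\,e = a$ for $a \in FHe$ and $b\,\bar e = b$ for $b \in FH\bar e$ are applied correctly when simplifying $\overline{a_{ij} ev}$ and $\overline{\overline{a_{ij}}\,\bar e v}$. Beyond that, the calculation is mechanical and there is no substantial obstacle; the dihedral case is a direct translation of the consta-dihedral case with the sign twist switched off.
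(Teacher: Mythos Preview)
Your proposal is correct and follows essentially the same approach as the paper: the paper defines the identical map $\begin{pmatrix}a_{11}&a_{12}\\a_{21}&a_{22}\end{pmatrix}\mapsto a_{11}e + a_{12}\,ev + \overline{a_{21}}\,\bar e v + \overline{a_{22}}\,\bar e$, verifies multiplicativity via $va=\bar a v$ and $v^2=1$, and computes the bar image termwise using $\bar v=v$ to obtain exactly the matrix $\begin{pmatrix}a_{22}&a_{12}\\a_{21}&a_{11}\end{pmatrix}$. Your observation that this is Lemma~\ref{A-M bar e not e} with the sign twist switched off is precisely how the paper frames it as well.
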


\begin{proof}
Similarly to the proof of Lemma~\ref{A-M bar e not e},
$\tilde F:=FHe$ is a field extension over F
and  $e+\overline e$ is a primitive central idempotent of $FG$.
So $FG (e+\bar e)=F He\oplus FH e v \oplus FH \bar e\oplus FH\bar e v$ is an ideal of $FG$.
Define a map:
\begin{align} \label{eq Corre 1 dihedral}
\begin{array}{ccc}
{\rm M}_2(\tilde F) & \mathop{\longrightarrow}\limits^{\cong} &
F He\oplus FH e v \oplus FH \bar e\oplus FH\bar e v,
\\[3pt]
\begin{pmatrix}a_{11} & a_{12}\\ a_{21} & a_{22} \end{pmatrix}
& \longmapsto &
a_{11}e + a_{12}\,e\, v + \overline{a_{21}}\,\bar e\, v + \overline{a_{22}}\,\bar e,
\end{array}
\end{align}
which is obviously a linear isomorphism.
 For $a_{ij},b_{ij}\in \widetilde F$, $1\le i,j\le 2$,
noting that $ v a_{ij}=\overline{a_{ij}}\, v$ and $ v^2=1$, we have
\begin{eqnarray*}
\begin{split}
\big(a_{11}e + a_{12}\,e\, v + \overline{a_{21}}\,\bar e v
 + \overline{a_{22}}~\overline{e}\big)
\big(b_{11}e + b_{12}e\, v + \overline{b_{21}}\,\bar e v
 +\overline{b_{22}}~\overline{e})\\
= (a_{11}b_{11}+a_{12}b_{21})e + (a_{11}b_{12}+a_{12}b_{22})e\, v \,+ \\
 \overline{(a_{21}b_{11}+a_{22}b_{21})}~\overline e \, v
 +\overline{(a_{21}b_{12}+a_{22}b_{22})}~\overline e .
 \end{split}
\end{eqnarray*}
So, Eq.\eqref{eq Corre 1 dihedral} is an $\tilde F$-algebra isomorphism.

For ${a_{11}e + a_{12}\,e\, v + \overline{a_{21}}\,\bar e v
 + \overline{a_{22}}~\overline{e}}\in{F He\oplus FH e v \oplus FH \bar e\oplus FH\bar e v}$, \\
 we have (note that $\overline{v}= v$):
\begin{align*}
\overline{a_{11}e + a_{12}e\, v +
  \overline{a_{21}}\,\bar e\, v + \overline{a_{22}}\,\bar e}
&=\overline{a_{11}}\,\bar e + \overline{v}\,\overline{a_{12}}\,\bar e
  + \overline{v}\,a_{21}\, e + a_{22}\, e.\\
&= a_{22}\, e +\,a_{12}\, e\, v
  + \overline{a_{21}}\,\bar e\, v + \overline{a_{11}}\,\bar e.
\end{align*}
Thus, the bar image of ${a_{11}e + a_{12}\,e\, v + \overline{a_{21}}\,\bar e v
 + \overline{a_{22}}~\overline{e}}$ corresponds the matrix:
\begin{align} \label{bar to matrix dihedral}
\overline{a_{11}e + a_{12}e\,v +
  \overline{a_{21}}\,\bar e\,v + \overline{a_{22}}\,\bar e}
 ~~\longleftrightarrow~~
 \begin{pmatrix}a_{22} & a_{12}\\ a_{21} & a_{11} \end{pmatrix}.
\end{align}
We are done.
\end{proof}

Recall that (Lemma~\ref{lem matrix size 2}(2)),
there are altogether $|\tilde F|+1$ simple left ideals of  ${\rm M}_2(\tilde F)$
with generators:
\begin{align} \label{left ideal of A}
 \begin{pmatrix} a&1\\0&0 \end{pmatrix}, \quad a\in\tilde F;
 \qquad\mbox{or}\quad \begin{pmatrix} 1&0\\0&0 \end{pmatrix}.
\end{align}

\begin{lemma} \label{lem C_t 1=e+bar e dihedral}
Let notation be as above in Lemma~\ref{dihedral bar e not e}.
Denote $A=FG(e+\bar e)$ for short.
Let $f_{ab}\in A$  be the element corresponding to
$\begin{pmatrix} a&b\\0&0 \end{pmatrix}\in{\rm M}_2(\tilde F)$.
Then

{\bf(1)} If ${\rm char}(F)=2$, then $\langle Af_{ab},Af_{ab}\rangle=0$,
for $\,a,b\in\tilde F$.

{\bf(2)} If ${\rm char}(F)$ is odd, then
$\langle Af_{ab},Af_{ab}\rangle=0$ if and only if $ab=0$;
in particular, there are exactly two simple left ideals of $A$ which is self-dual in $A$,
and the other $|\tilde F|-1$ simple left ideals of $A$ are LCD in $A$.
\end{lemma}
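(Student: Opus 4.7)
The plan is to reduce the orthogonality of $Af_{ab}$ to the single equation $f_{ab}\overline{f_{ab}}=0$, and then compute that product through the matrix correspondence of Lemma~\ref{dihedral bar e not e}. First I would observe that $e+\bar e$ is central and satisfies $\overline{e+\bar e}=e+\bar e$, so the two-sided ideal $A=FG(e+\bar e)$ is stable under the bar map, i.e.\ $\overline A=A$. Consequently $\overline{Af_{ab}}=\overline{f_{ab}}\,A$, and
\[
Af_{ab}\cdot\overline{Af_{ab}}=A\,f_{ab}\overline{f_{ab}}\,A.
\]
Since $A\cong{\rm M}_2(\tilde F)$ is a simple algebra, this two-sided ideal vanishes if and only if $f_{ab}\overline{f_{ab}}=0$. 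Applying the dihedral version of Lemma~\ref{dual-C}(5) (which was verified in Remark~\ref{rk sigma cda} for consta-dihedral codes and similarly holds for group codes by \cite[Lemma II.4(5)]{FL20}), this in turn is equivalent to $\langle Af_{ab},Af_{ab}\rangle=0$.

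Next I would carry out the matrix computation. By Lemma~\ref{dihedral bar e not e}(2), if $f_{ab}\leftrightarrow\begin{pmatrix}a&b\\0&0\end{pmatrix}$, then $\overline{f_{ab}}\leftrightarrow\begin{pmatrix}0&b\\0&a\end{pmatrix}$. Their product is
\[
\begin{pmatrix}a&b\\0&0\end{pmatrix}\begin{pmatrix}0&b\\0&a\end{pmatrix}=\begin{pmatrix}0&2ab\\0&0\end{pmatrix},
\]
so $f_{ab}\overline{f_{ab}}=0$ if and only if $2ab=0$ in $\tilde F$. When ${\rm char}(F)=2$ this always holds, proving (1); when ${\rm char}(F)$ is odd this amounts to $ab=0$, giving the first assertion of~(2).

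For the counting statement in~(2), I would invoke Lemma~\ref{lem matrix size 2}(2): the $|\tilde F|+1$ simple left ideals of $A$ correspond via Eq.\eqref{left ideal of A} to the rank-one generators $f_{a,1}$ ($a\in\tilde F$) and $f_{1,0}$. The condition $ab=0$ on the generator row is satisfied precisely by $f_{0,1}$ and $f_{1,0}$, so exactly two simple left ideals $C$ of $A$ satisfy $C\subseteq C^{\bot_A}$; since each such simple $C$ has $\dim_F C=\tfrac12\dim_F A$, self-orthogonality forces $C=C^{\bot_A}$, i.e.\ $C$ is self-dual in $A$. For any other simple left ideal $C$, the intersection $C\cap C^{\bot_A}$ is a left subideal of the simple $C$, hence is either $0$ or all of $C$; the latter is excluded because $C$ is not self-orthogonal, so $C\cap C^{\bot_A}=0$ and $C$ is LCD. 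This accounts for the remaining $|\tilde F|-1$ simple left ideals.

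The only mildly delicate point is justifying the reduction $Af_{ab}\overline{Af_{ab}}=0\Leftrightarrow f_{ab}\overline{f_{ab}}=0$; this is precisely where the cautionary discussion preceding this lemma matters, since one must avoid the invalid implication Eq.\eqref{eq bar D C} and instead exploit the simplicity of $A$ together with $\overline A=A$. Once that is handled cleanly, the rest is the short matrix computation above.
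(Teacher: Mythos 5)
Your proposal is correct and follows essentially the same route as the paper: reduce $\langle Af_{ab},Af_{ab}\rangle=0$ to $f_{ab}\overline{f_{ab}}=0$ via Lemma~\ref{dual-C}(5), compute the product through the matrix correspondence of Lemma~\ref{dihedral bar e not e}(2) to get the entry $2ab$, and then read off the two cases $ab=0$ from the list in Eq.\eqref{left ideal of A}. Your extra care in justifying the reduction (using $\overline A=A$ and the fact that $A\,f_{ab}\overline{f_{ab}}\,A=0$ forces $f_{ab}\overline{f_{ab}}=0$) and in upgrading self-orthogonal to self-dual via the dimension count only makes explicit what the paper leaves implicit.
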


\begin{proof}
By Lemma~\ref{dual-C}(5), $\langle Af_{ab},Af_{ab}\rangle=0$ if and only if
$f_{ab}\overline{f_{ab}}=0$.
By Lemma~\ref{dihedral bar e not e}
(Eq.\eqref{eq Corre 1 dihedral} and Eq.\eqref{bar to matrix dihedral}),
$f_{ab}\overline{f_{ab}}=0$ if and only if
$$
\begin{pmatrix} a&b\\0&0 \end{pmatrix}\begin{pmatrix} 0&b\\0&a \end{pmatrix}
=\begin{pmatrix} 0&2ab\\0&0 \end{pmatrix}=0.
$$
If ${\rm char}(F)=2$, it is always true that $2ab=0$, hence (1) holds.
Next assume that ${\rm char}(F)\ne 2$.
Then $2ab=0$ if and only if $ab=0$.
In Eq.\eqref{left ideal of A}, there exactly two cases such that $ab=0$, i.e.,
$\begin{pmatrix} 0&1\\0&0 \end{pmatrix}$ and
$\begin{pmatrix} 1&0\\0&0 \end{pmatrix}$.
Hence (2) follows.
\end{proof}

\begin{remark} \label{q odd dihedral} \rm
Assume that $q$ is odd and ${\rm ord}_{{\Bbb Z}_n^\times}(q)$ is odd.
 For all primitive idempotents
$e_0=\frac{1}{n}\sum_{i=0}^{n-1}u^i,\;e_1,\cdots,e_\ell$
of $FH$, except for $e_0$, the other primitive idempotents
are pairwise partitioned (see Lemma~\ref{a^T=bar a}(2)):
$$
 e_1,~\overline{e_1}, ~\cdots, ~e_m,~\overline{e_m}.
$$

(1) Set $A_0=FGe_0$ and $A_t=FG(e_t+\overline{e_t})$ for $t=1,\cdots,m$.
By Lemma~\ref{dihedral bar e not e},
$$
FG=A_0\oplus A_1 \oplus\cdots\oplus A_m;\qquad
A_t\cong{\rm M}_2(F_t), ~~ t=1,\cdots,m.\;
$$
where $F_t=FH e_t$ is a field extension over $F$, denote $k_t=|F_t:F|$.
By Eq.\eqref{eq C bar D}, $\langle A_i,A_j\rangle=0$ for $0\le i\ne j\le m$;
hence Corollary~\ref{cor C=C cap A+...} is still valid for
$FG=A_0\oplus A_1 \oplus\cdots\oplus A_m$.

(2)
Let $\widehat e_0=e_0+e_0v\in A_0$, then
$\langle A_0\widehat e_0, A_0\widehat e_0\rangle
= A_0\widehat e_0\overline {\widehat e_0} \overline{\widehat A_0}\neq 0$
since $\widehat e_0\overline{\widehat e_0}=2\widehat e_0\ne 0$,
hence $A_0\widehat e_0$ is LCD in $A_0$.
\end{remark}

Thus the following is the correct version of \cite[Theorem IV.5(1)]{FL20}.

\begin{theorem} \label{thm correct}
Let notation be as  in Remark~\ref{q odd dihedral}.
We consider the following dihedral codes:
\begin{align} \label{eq C_ab}
 C_{ab}\!=\!A_0\widehat e_0\oplus A_1f_{a_1b_1}\oplus\cdots\oplus A_mf_{a_mb_m},
 ~~ 0\!\ne\! (a_t,b_t)\in F_t^2, ~t\!=\!1,\cdots\!,m.
\end{align}
Then $\dim_F(C_{ab})=n$, and

{\bf(1)} The number of the dihedral codes in Eq.\eqref{eq C_ab} equals
  $\prod_{t=1}^{m}(q^{k_t}+1)$.

{\bf(2)} The number of the dihedral codes in Eq.\eqref{eq C_ab} which are LCD
 equals  $\prod_{t=1}^{m}(q^{k_t}-1)$.
\end{theorem}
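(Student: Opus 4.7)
My plan is to prove Theorem~\ref{thm correct} by decomposing $C_{ab}$ into its $A_t$-components and invoking the orthogonality/LCD bookkeeping established earlier, applied to the dihedral (not consta-dihedral) setting via Remark~\ref{q odd dihedral}. I first check the dimension: by construction, $A_0\widehat e_0 = F\widehat e_0$ is $1$-dimensional (since $e_0$ is central and $v\widehat e_0 = e_0 v + e_0 = \widehat e_0$), while $A_t f_{a_t b_t}\cong M_2(F_t)\!\cdot\!\begin{pmatrix}a_t&b_t\\0&0\end{pmatrix}$ is a simple left ideal of $A_t$ of $F_t$-dimension $2$, hence $F$-dimension $2k_t$. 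Summing and applying Corollary~\ref{Cor-k1+k2}(1) yields $\dim_F C_{ab}=1+2\sum_{t=1}^m k_t = n$.

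For part (1), I count distinct codes. By Remark~\ref{q odd dihedral}(1), $FG=A_0\oplus A_1\oplus\cdots\oplus A_m$ is an orthogonal decomposition, so Corollary~\ref{cor C=C cap A+...}(1) applies: the $A_t$-component $C\cap A_t = 1_{A_t}\!\cdot\! C$ is uniquely determined by $C_{ab}$ and vice versa. Since the first summand $A_0\widehat e_0$ is fixed and each other summand is a simple left ideal of $A_t\cong M_2(F_t)$, the distinct codes $C_{ab}$ are in bijection with tuples of simple left ideals of $A_1,\dots,A_m$. By Lemma~\ref{lem matrix size 2}(2), $M_2(F_t)$ has exactly $|F_t|+1 = q^{k_t}+1$ simple left ideals, hence the number of codes is $\prod_{t=1}^m(q^{k_t}+1)$.

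For part (2), I combine the LCD criterion with the classification of self-orthogonal simple left ideals. By Corollary~\ref{cor C=C cap A+...}(5), $C_{ab}$ is LCD iff each component is LCD in its ambient $A_t$. Remark~\ref{q odd dihedral}(2) confirms $A_0\widehat e_0$ is LCD in $A_0$. Since ${\rm char}(F)$ is odd, Lemma~\ref{lem C_t 1=e+bar e dihedral}(2) says that $A_t f_{a_t b_t}$ is LCD in $A_t$ precisely when $a_t b_t\ne 0$, and exactly $2$ of the $q^{k_t}+1$ simple left ideals (those generated by $\begin{pmatrix}1&0\\0&0\end{pmatrix}$ and $\begin{pmatrix}0&1\\0&0\end{pmatrix}$) fail this, leaving $q^{k_t}-1$ LCD choices in each $A_t$. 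The total is $\prod_{t=1}^m(q^{k_t}-1)$.

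The only subtle point I anticipate is justifying that Corollary~\ref{cor C=C cap A+...} and the associated orthogonal-decomposition machinery, proved in this paper for $\cda$, carries over verbatim to the dihedral algebra $FG$; this is exactly the content of Remark~\ref{q odd dihedral}(1), which reduces to noting $\langle A_i,A_j\rangle=0$ for $i\ne j$ via $1_{A_i}\overline{1_{A_j}}=0$ and Eq.\eqref{eq C bar D}. No misuse of the incorrect Eq.\eqref{eq bar D C} is needed, because the LCD criterion here is verified directly from the matrix form of $f_{ab}\overline{f_{ab}}$ computed in Lemma~\ref{lem C_t 1=e+bar e dihedral}, which is precisely the matrix-computation technique advertised in Remark~\ref{rk by matrix}.
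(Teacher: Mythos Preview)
Your proof is correct and follows essentially the same approach as the paper: the dimension count uses that $A_0\widehat e_0$ is $1$-dimensional together with $\dim_F A_t f_{a_t b_t}=2k_t$ and $\sum_t k_t=(n-1)/2$; part~(1) comes from the enumeration of simple left ideals in Lemma~\ref{lem matrix size 2}(2) (i.e., Eq.\eqref{left ideal of A}); and part~(2) follows from Lemma~\ref{lem C_t 1=e+bar e dihedral}(2) applied componentwise via Corollary~\ref{cor C=C cap A+...}(5). Your write-up simply spells out more details (e.g., the explicit verification that $v\widehat e_0=\widehat e_0$, and the bijection between codes $C_{ab}$ and tuples of simple left ideals) than the paper's terse proof, but the argument is the same.
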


\begin{proof}
By \cite[Lemma~III.2(3)]{FL20}, $A_0\widehat e_0$
is an $1$-dimensional ideal of~$A_0$, hence
$\dim_F(C_{ab})=1+\dim_{F} A_1f_{a_1b_1}+\cdots +\dim_{F}A_mf_{a_mb_m}=n$,
where $\dim_{F}A_tf_{a_tb_t}=2k_{t}$, $t=1,\cdots,m$.
By Eq.\eqref{left ideal of A} and Lemma~\ref{lem C_t 1=e+bar e dihedral}(2),
 we get (1) and (2) at once.
\end{proof}

\section{Conclusion}\label{Conclusion}

We studied the consta-dihedral codes, and addressed an issue of the
reference~\cite{FL20} which is a research on dihedral codes.

To investigate the algebraic property of the consta-dihedral codes,
the existing methods, cf. in \cite{FL20},
are not enough to recognize the orthogonality of group codes;
so we developed a technique to evaluate the orthogonality of group codes
by matrix computation.
We characterized the algebraic structure of consta-dihedral group algebras.
By the algebraic structure and with the technique mentioned just now,
we constructed a class of consta-dihedral codes which possess good algebraic property
(self-orthogonal, or LCD).

Next, we showed the existence of asymptotic good sequences
of the consta-dihedral codes in the class we constructed.
Instead of probabilistic methods, in the class
we counted directly the number of the consta-dihedral codes with
bad asymptotic property. This number is much less than the
total quantity of the class. In this way we obtained
(recall that $F$ is a finite field with $|F|=q$):

--- If $q$ is even or $4\,|\,(q-1)$, then the self-dual consta-dihedral codes
over $F$ are asymptotically good.

--- If $q$ is odd and $4\nmid(q-1)$, then the LCD consta-dihedral codes
over $F$ are asymptotically good.

Finally, with the help of the technique developed in this paper,
we addressed the issue in \cite{FL20}
and obtained the correct version of the false theorem
 \cite[Theorem~IV.5(1)]{FL20}.

\section*{Acknowledgements}
%The research is supported by NSFC with grant number 12171289.

\end{document}